\newlength\Li \newlength\Lii 
\newtheorem{thm}{Theorem}[section]
\newtheorem{prop}[thm]{Proposition}
\newtheorem{cor}[thm]{Corollary}
\newtheorem{lemma}[thm]{Lemma}
\newtheorem*{lemmanonum}{Lemma}
\newtheorem{defn}[thm]{Definition}
\newtheorem*{remarknonum}{Remark}
\newcommand{\ket}[1]{\ensuremath{\left|#1\right\rangle}}
\newcommand*{\rom}[1]{\expandafter\@slowromancap\romannumeral #1@}
\begin{document}

\title[When is a  QCA a QLGA?] {When is a  Quantum Cellular Automaton (QCA) a Quantum Lattice Gas Automaton (QLGA)?}
\author{Asif Shakeel and  Peter J. Love}
\address{Department of Physics, Haverford College, Haverford, PA 19041, USA}
\email{ashakeel@haverford.edu, plove@haverford.edu}

\begin{abstract}
Quantum cellular automata (QCA) are models of quantum computation of particular interest from the point of view of quantum simulation. Quantum lattice gas automata (QLGA - equivalently partitioned quantum cellular automata) represent an interesting subclass of QCA. QLGA have been more deeply analyzed than QCA, whereas general QCA are likely to capture a wider range of quantum behavior. Discriminating between QLGA and QCA is therefore an important question. In spite of much prior work, classifying which QCA are QLGA has remained an open problem.  In the present paper we establish necessary and sufficient conditions for unbounded, finite Quantum Cellular Automata (QCA) (finitely many active cells in a  quiescent background) to be Quantum Lattice Gas Automata (QLGA). We define a local condition that classifies those QCA that are QLGA, and we show that there are QCA that are not QLGA. We use a number of tools from functional analysis  of   separable Hilbert spaces and representation theory  of associative algebras that enable us to treat QCA on finite but unbounded configurations in full detail.
\end{abstract}

\maketitle

\section{Introduction} \label{section:intro}

Feynman first noted that simulating the full time evolution of quantum systems on classical computers is a hard problem, and that  one might use one quantum system to efficiently emulate another~\cite{bib:feynman1,bib:feynman2,bib:feynman3}. Feynman's suggestion became a founding idea in the field of quantum computation~\cite{lloyd1,bib:wiesner,bib:zalka,bib:gubernatis2,bib:gubernatis1,bib:somma} and has been subsequently developed in the field of quantum simulation, which has attracted considerable theoretical and experimental attention in recent years with applications in physics, chemistry and biology~\cite{qs:science,qs:pnas,bib:white,qs:nppqs,lhnm:udgsti,l:btf,ycmmla:fttcqc,mrekta:qsoqs}.

Simulation of quantum systems on current classical computer hardware is a well established field with simulations of diverse systems from quantum chemistry to the structure of the proton. For large systems these simulations rely on approximate methods, such as semiclassical treatments or the Monte Carlo method, which scale only polynomially in the problem size. The results of even approximate methods are of interest as they address issues which are not accessible in any other way.  Classical simulation of quantum systems will remain a hard problem for decades to come, and one may expect useful quantum computers to appear on this timescale. 

Amongst both classical and quantum simulation methods, cellular automata, lattice gas and random walk methods can be singled out for their simplicity. In 1948 von Neumann set out to show that complex phenomena can arise out of many simple, identical interacting entities. Following a suggestion by Ulam he adopted an approach in which space, time and the dynamical variables are all discrete. The result was the {\it cellular automaton}, a homogeneous array of cells with a finite number of states evolving in discrete time according to a uniform local transition rule~\cite{bib:vonneu1}. 

In the context of physical simulation, classical cellular automata led to lattice-gas models - also called partitioned cellular automata - which can simulate diffusive processes and fluid mechanics for both simple and complex fluids~\cite{bib:rz}. Lattice gases are the only broad class of cellular automata models that have enjoyed wide success in quantitative modeling of physical phenomena. 

In the quantum setting, quantum cellular automata~\cite{a:watrous, wvd:uqca,dls:dpwfqca,ds:dpulqca,a:aculqca,sw:rvqca,anw:odqca}, quantum lattice gases~\cite{bib:meyer1,bib:meyer2,bib:meyer3,bib:meyer4,bib:meyer5,Bog1,Bog2,Bog3}, quantum lattice Boltzmann methods~\cite{sb:lbeqm,s:nsse,sp:nvbs,sp:albs,ld:ctdqlb,d:eecqlb}, and  quantum random walks~\cite{bib:qrw1,bib:qrw2,bib:qrwr}, have all attracted considerable attention. In particular the one-dimensional cellular automata for the Dirac equation, originally described by Feynman in a problem in~\cite{bib:feynmanhibbs}, has been investigated and generalized by Meyer into the quantum lattice gas model~\cite{bib:meyer1,bib:meyer2,bib:meyer3,bib:meyer4,bib:meyer5}. Similar models were investigated independently by Boghosian and Taylor~\cite{Bog1,Bog2,Bog3}, and by Succi and Benzi~\cite{sb:lbeqm,s:nsse}.  Meyer also showed that the quantum lattice Boltzmann model in 1D~\cite{s:nsse} and quantum lattice gas for a single particle are equivalent~\cite{bib:meyer2}. Recently, optical implementations of quantum random walks have demonstrated topologically protected bound states~\cite{bib:topoqrw}, and simulations of the Dirac equation have been performed in trapped ions~\cite{bib:diracqrw}. 

Quantum random walks~\cite{bib:qrw1,bib:qrw2,bib:qrwr} and single-particle quantum lattice-gas models~\cite{Bog1,Bog2,Bog3,bib:meyer2} and lattice-Boltzmann methods~\cite{sb:lbeqm,s:nsse} all describe a single quantum particle moving on a lattice. Classical lattice-gas models typically describe many interacting particles moving on a lattice, and Meyer generalized one-particle models to a multi-particle quantum lattice-gas model in~\cite{bib:meyer2}. Whereas the $1+1$ dimensional single-particle quantum lattice-gas simulates the Dirac equation in $1+1$ dimensions, the multiparticle quantum lattice-gas, that allows two-particle interactions at a lattice site, can describe the massive Thirring model~\cite{bib:meyer2,t:asrft}. Because the massive Thirring model is a model of relativistic fermions with self-interactions this supports the idea that multiparticle quantum lattice-gases should be considered as models for multiparticle, interacting quantum mechanics and quantum field theories. We shall take quantum lattice-gas automata (QLGA) to mean the multiparticle models throughout.   

Quantum cellular automata, which include QLGA as a subclass, are the most general discrete, translation invariant quantum models, and one would therefore expect them to describe a broad range of phenomena arising in interacting quantum systems. However, quantum cellular automata have not attracted the same degree of interest as their classical counterparts, and have attracted relatively little attention compared to other approaches to quantum simulation. As we discuss in detail below, the definition of QCA models has been the source of some debate in the literature, which has perhaps delayed applications of these techniques. We hope the present work addresses this by clearly delineating the QLGA models from their complement in the set of QCA and providing a self-contained introduction to these models.

\subsection{Classical lattice-gas cellular automata}

Perhaps the most widely explored application of cellular automata has been in the field of fluid dynamics. Classical lattice-gases have been used extensively for modeling hydrodynamics since Frisch, Hasslacher, and Pomeau~\cite{bib:fchc}, and Wolfram~\cite{bib:w4} showed that it is possible to simulate the incompressible Navier-Stokes equations using discrete boolean elements on a lattice.  The update rule for all lattice-gases takes place in two steps, propagation and collision. For the simplest such models one represents the presence or absence of a particle by a bit. Each bit at a lattice site is associated with a link in a lattice and a corresponding velocity. During propagation the bits move in the direction of their velocity vectors along the links, retaining their velocities as they do so. During the collision step, the newly propagated bits are modified according to a purely local rule. Specification of collision rules completes the description of a lattice-gas automata (LGA). If the collision process is stochastic, a single particle in such a model will follow a random walk, and so one may regard classical lattice-gas models as equivalent to classical multiparticle random walks with an exclusion principle. 

\begin{figure}
 \includegraphics{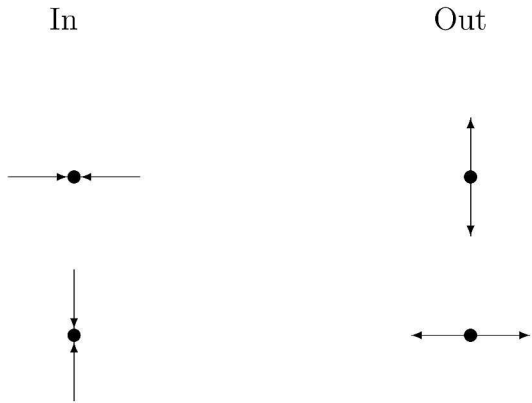}
  \caption{Non-trivial collisions of the HPP lattice-gas: The dynamics takes place on a cartesian lattice in two dimensions. A pair of incoming particles with opposite velocities in the North-South direction become a pair of particles with opposite velocities in the East-West direction, and vice versa.}
  \label{fig:hpp}
\end{figure}

The simplest two-dimensional example of a lattice-gas model is the HPP model~\cite{HPP}. Here, the underlying lattice is cartesian, hence there are four lattice vectors per site, corresponding to a particle moving north, east, south, or west. There may be at most one particle per direction per site, leading to at most four particles at each lattice site. The non-trivial collision of the HPP gas is shown in Figure~\ref{fig:hpp}. 

The distinction between cellular automata and lattice-gases has been the source of some debate~\cite{Toffoli,Henon}. A homogeneous cellular automata utilizes the same neighborhood and the same update rule at every site and every timestep, as in the Game of Life. The lattice-gas evolves in two distinct substeps: {\em propagation} and {\em collision} (described in detail below). The connection between the models may be made by introducing the concept of a block- or partitioned- cellular automata, which alternates between two update rules and two neighborhoods. The simplest of these automata employ a Margolus neighborhood~\cite{margolus1}. As an example, we shall construct the two rules and two neighborhoods which, when alternately applied to a cellular automata, yield the HPP lattice-gas dynamics described above. The update rules are shown in Figure~\ref{fig:hppcellr}. The update rules act on a block of four cells, which we shall refer to as upper left (UL), upper right (UR), lower left (LL), lower right (LR). The propagation rule is applied identically to all states: the occupation of each cell in the four cell block is exchanged with the occupation of the cell diagonally opposite. The collision step is identical to the propagation step except when UR and LL are the only cells occupied or when LR and UL are the only cells occupied, in which case the occupancies of these pairs of cells are exchanged. Graphically this corresponds to either diagonal of the four cells being occupied. In these cases the state is flipped to the other diagonal occupation, as shown in Figure~\ref{fig:hppcellr}. 
 \begin{figure}[ht]
 \includegraphics{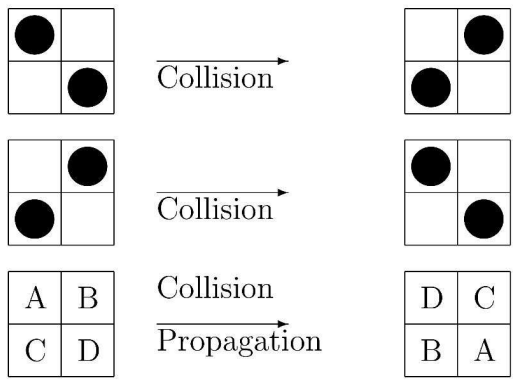}
 \caption{\label{fig:hppcellr}Update rules for the HPP partitioned cellular automata. The top two boxes show the collision rules in which the configurations of the cells which are modified in a non-trivial way. The bottom boxes show the update rule which implements propagation in the automata. The bottom rule is applied to both neighborhoods of the automata. $A,B,C,D$ represent any occupation state of the four cells, and the propagation update is the permutation of these occupations shown.}
  \end{figure}

Clearly, if either of these rules was applied to a constant neighborhood the resulting dynamics would be trivial. Neither rule can change the occupation outside the four cell block and both rules are self inverse, so applying them twice to the same block returns the original configuration. In our partitioned automata we specify a different neighborhood for each rule in order to obtain the HPP dynamics. The Margolus neighborhoods, the cells of the automata and their relationship to the original cartesian lattice on which we defined HPP model are shown in Figure~\ref{fig:hppcelln}. The two neighborhoods are the two ways of partitioning a cartesian grid into square tiles of four cells. The original cartesian lattice on which the HPP lattice gas was defined is the lattice of diagonals with a lattice node at the center of every collision neighborhood. The lattice-gas vertices are shown by open circles in the third Figure in Figure~\ref{fig:hppcelln}. 

\begin{figure}[htbp]
\centering
\subfigure[Collision neighborhood]{%
\includegraphics[width=0.3\textwidth]{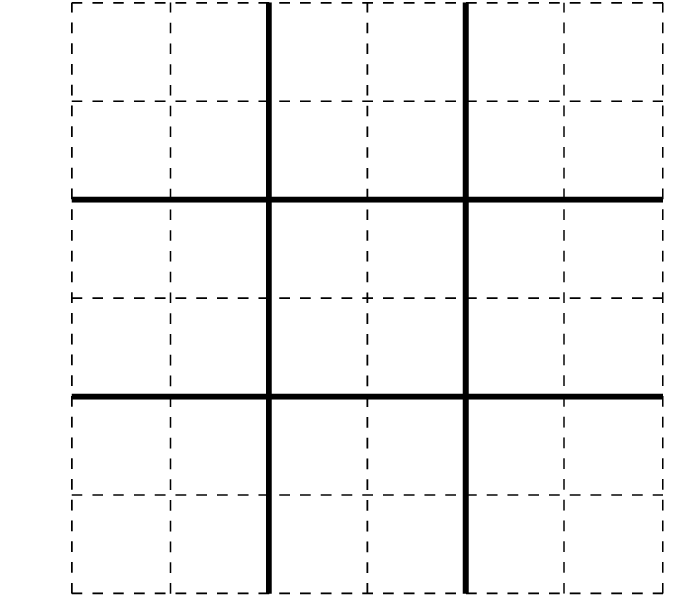}
\label{fig:subfigure1}}
\quad
\subfigure[Propagation neighborhood]{%
\includegraphics[width=0.3\textwidth]{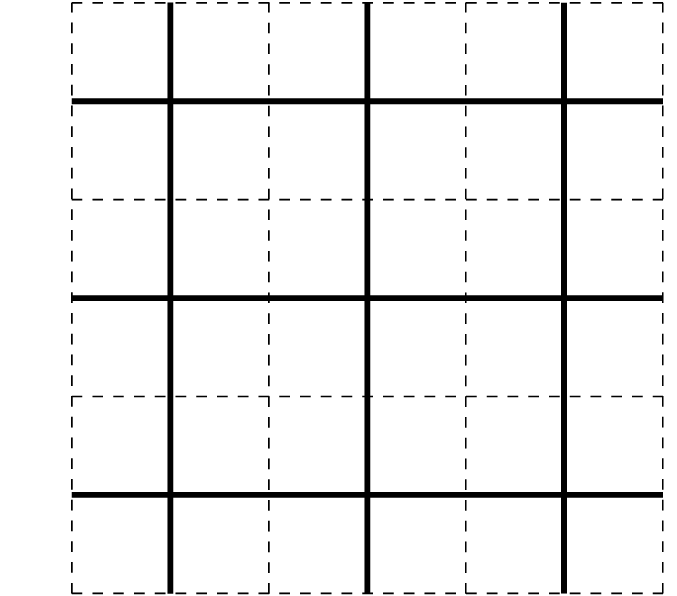}
\label{fig:subfigure2}}
\quad
\subfigure[Lattice-gas sites and particle directions]{%
\includegraphics[width=0.3\textwidth]{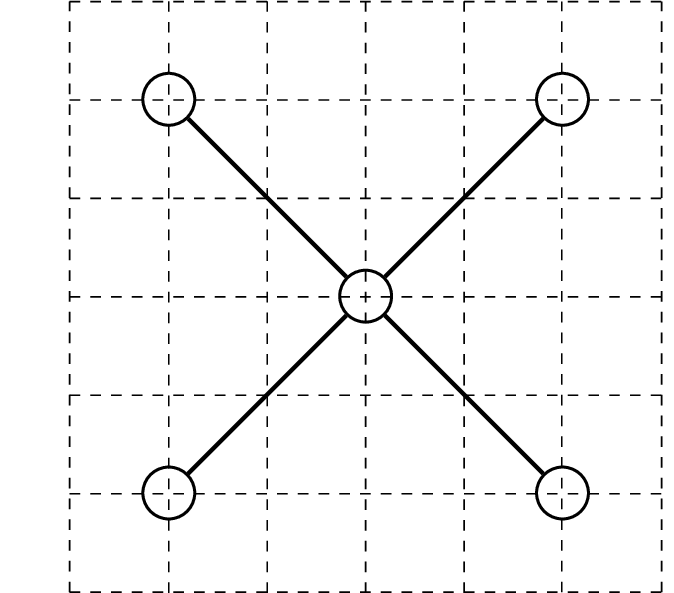}
\label{fig:subfigure3}}

\caption{\label{fig:hppcelln}Neighborhoods for collision (a) and propagation (b). The boundaries of the four cell blocks on which the rules shown in Figure~\ref{fig:hppcellr} are applied are shown by the bold lines. Figure (a) shows the boundaries for the collision neighborhood. Figure (b) shows the boundaries for the propagation neighborhood. Figure (c) shows the correspondence between the cells of the partitioned automata and the lattice of the lattice-gas. Bold lines denote the directions along which lattice-gas particles move. Open circles denote sites of the lattice-gas.}
\label{fig:figure}
\end{figure}

In this approach, the original cells of the automata are grouped together into blocks of four to make the sites of the HPP lattice-gas model. The partitioned dynamics is reinterpreted - one step of the dynamics (which acts on one partitioning of the original cells) now acts locally on a single lattice-gas site. This local action is termed the collision or scattering step. The second step of the dynamics  (which acts on the other partitioning of the original cells in the Margolus scheme~\cite{margolus1}) implements propagation of information between the lattice-gas sites. In the case of the HPP model, where the alphabet of the original cells is $\{0,1\}$, this step has the interpretation of particles moving from one lattice-gas site to another. The four sub-cells of a lattice-gas site are then reinterpreted as lattice vectors, and in the propagation step lattice-gas particles move in the direction of their lattice vector to a new site. Hence we have taken a partitioned cellular automata and reverse-engineered a lattice-gas model from it. 

The general question of when a classical cellular automata can be reinterpreted in this way was recently taken up in~\cite{Toffoli}. Because we know, from the  example of~\cite{anw:odqca}, that there are QCA that are not partitioned CA, we shall proceed by first defining classical lattice-gas automata and proving a general condition for when a QCA is a QLGA. In the process we shall extend our definition of a classical lattice-gas to the quantum case.

Consider the lattice to be $\mathbb{Z}^n$. Each point $x$ on this lattice is a site of the lattice-gas and has the same neighborhood $\mathcal{N}_x$. We may consider each neighborhood $\mathcal{N}_x$ to be a translation (by $x$) of the neighborhood $\mathcal{N}$ of the site $x=0$, where $\mathcal{N}$ is a finite subset of $\mathbb{Z}^n$. Hence the neighborhood of site $x$ is given by  ${x+z|z\in\mathcal{N}}$. The state of a single site may be constructed by assigning a substate to each of a set of lattice vectors - each vector corresponding to a neighborhood element $z\in\mathcal{N}$. The sub-state ${\bf k}_{x}(z)$ of each lattice vector $z$ at lattice site $x$ takes values in an alphabet $\mathcal{C}_z$. For the simplest lattice-gases, each lattice vector may have either one or zero particles present and so $\mathcal{C}_z =\{0,1\}~\forall~z$. However, more sophisticated models allow multiple types of particles, and three-dimensional models allow some vectors to carry more than one particle~\cite{bib:bcp}, hence in general $\mathcal{C}_z$ may contain arbitrarily many symbols and may vary with $z$. The state ${\bf k}_x$ of a lattice-gas site $x$ is therefore an element of the Cartesian product of all of the $\mathcal{C}_z$:
\begin{equation}\label{classk}
{\bf k}_x=\prod_{z\in\mathcal{N}}{\bf k}_x(z)\in B=\prod_{z\in\mathcal{N}} \mathcal{C}_z
\end{equation}
The state of the lattice-gas is an element of the set of infinite sequences $\mathcal{S}$ over $B$. 

We may now define the two substeps of the lattice-gas dynamics. Firstly, the sub-states propagate to the lattice sites in the neighborhood:
\begin{align*} 
\sigma: \mathcal{S} &\longrightarrow\mathcal{S}\\
\prod_{x \in \mathbb{Z}^n}  \prod_{z\in\mathcal{N}} {\bf k}_{x}(z)   &\mapsto \prod_{x \in \mathbb{Z}^n}   \prod_{z\in\mathcal{N}}{\bf k}_{x+z}(z)  \nonumber
\end{align*}
Secondly, in the collision step, the state of each lattice-gas site changes locally under a map $\Theta$:
\begin{equation}
\Theta:\prod_{z\in\mathcal{N}} \mathcal{C}_z\longrightarrow\prod_{z\in\mathcal{N}} \mathcal{C}_z.
\end{equation}
With these classical definitions in hand, we now turn to the definition of quantum cellular automata.

\subsection{Watrous Quantum Cellular Automata}

QCA have been studied, in considerable depth, by several authors~\cite{a:watrous, wvd:uqca,dls:dpwfqca,ds:dpulqca,a:aculqca,sw:rvqca,anw:odqca}. The main idea is to develop quantum models that retain the key features of a classical cellular automata: translation invariance and discreteness in time, space and dynamical variables. A classical deterministic cellular automata is defined as follows:
\begin{enumerate}[label=(\roman{*})]
\item{A lattice $\mathcal{L}$, whose sites we shall label by $x$.}
\item{A set of cell values $Q$ with a distinguished quiescent state $q_0$.}
\item{A neighborhood $\mathcal{N}$. This translates to a neighborhood $\mathcal{N}_x$ for each cell $x$, consisting of a set of cells as follows: $\mathcal{N}_x = \{x+z | z\in\mathcal{N}\}$.}
\item{A configuration of the automata is a map $a:L\longrightarrow Q$. If $L$ is infinite then only a finite number of cells take values in $Q$ that are not quiescent.}
\item{A local transition rule $\delta$ which updates the value $s_x\in Q$ at $x$ according to the values $s_i\in Q$ (where $i \in \mathcal{N}_x$) of the cells in the neighborhood $\delta:Q^{|\mathcal{N}|}\mapsto Q$, subject to the condition that \begin{equation}\delta: q_0^{|\mathcal{N}|} \mapsto q_0.\end{equation}}
\end{enumerate}

Early work produced an extension of the classical cellular automata (CA) defined above based on local rules~\cite{dm:u1dqca, wvd:uqca,dls:dpwfqca,ds:dpulqca,a:aculqca}. To avoid confusion with the definition of a Quantum Cellular Automata (QCA) that we shall introduce later we follow~\cite{sw:rvqca} and call these Watrous Quantum Cellular Automata (WQCA). A WQCA is defined as follows:

\begin{enumerate}[label=(\roman{*})]
\item{A lattice ${L}$, whose sites we shall label by $x$.}
\item{A set of cell values $Q$ with a distinguished quiescent state $q_0$.}
\item{A neighborhood $\mathcal{N}$. This translates to a neighborhood $\mathcal{N}_x$ for each cell $x$, consisting of a set of cells as follows: $\mathcal{N}_x = \{x+z | z\in\mathcal{N}\}$.}
\item A state of the automaton, $\Psi$, is a superposition of classical configurations.
\item A  local evolution rule $f$ which maps an element of $Q^{|\mathcal{N}|}$ to an element of   $W=\mathbb{C}[Q]$, 
\begin{align*}
f:  Q^{|\mathcal{N}|} \longrightarrow \mathbb{C}[Q]
\end{align*} 
subject to the constraint that if the neighborhood state is $q_0^{|\mathcal{N}|}$ then $f$ assigns amplitude $1$ to the quiescent state and amplitude $0$ to all other states.
\end{enumerate}
From the above data, one can construct a  global evolution rule $R$ by jointly evolving the state of each cell $x$ in any basis state by  the local rule $f$ and extending by linearity. The global evolution $R$ is required to be:
\begin{enumerate}[label=(\roman{*})]
\item \textbf{Unitary}: $R^\dag R = \mathbb{I}$.
\item \textbf{Translation invariant}:
 Let $\tau_z$, for some $z \in \mathbb{Z}$, be the translation operator defined by its action on a classical configuration $ c  = \prod_{j \in \mathbb{Z}} c_j$ for $c_j\in Q~\forall~ j$:
\begin{equation*}
\tau_z:  \prod_{j \in \mathbb{Z}}  c_j   \mapsto  \prod_{j \in \mathbb{Z}}  c_{j+z}  
\end{equation*}
$\tau_z$ is extended  by linearity. We denote the group of translations:
 $\mathcal{T} = \{\tau_z\}_{z \in  \mathbb{Z}}$.  A linear operator $R$ is \textit{translation-invariant} if  $\tau_z   R   \tau^{-1}_z = R $ for all $z \in  \mathbb{Z}$. 
\end{enumerate}

By definition, the above model of WQCA is translation-invariant provided the rule $f$ is the same for all cells. Unitarity for this model has been extensively investigated~\cite{dm:u1dqca, wvd:uqca,dls:dpwfqca,ds:dpulqca}.  This model of QCA was found~\cite{anw:odqca}, in some instances of the finite, unbounded QCA, to allow \textit{super-luminal signaling}.   Super-luminal signaling refers to the  faster than light signaling which occurs  when the state of a cell (\textit{restricted} or \textit{reduced} state,  defined in the next section) can be affected by that of another cell which is an unbounded distance away from it. This is observed when certain classical CAs with finite neighborhood schemes are \textit{quantized} (the local rule for classical CA is taken as that for the QCA).

\subsection{Axiomatic QCA}

An axiomatic  approach has been developed to overcome  the problem of  super-luminal signaling, a non-local phenomenon, seen in the WQCA models~\cite{sw:rvqca, anw:odqca, anw:ucil, sw:litqo}. The requirement of causality is imposed in the definition of  a QCA, with roots in the ideas of~\cite{bgnp:clqo,sw:litqo}, to preclude this non-locality~\cite{cd:luqca,sw:rvqca,anw:odqca,anw:ucil}. 

The axiomatic model was introduced by Schumacher and Werner in~\cite{sw:rvqca} and further developed by Arrighi, Nesme, and Werner in~\cite{anw:odqca}. In this approach a QCA is defined to be a unitary, translation-invariant, $*$-homomorphism of a $C^*$ -algebra which satisfies a quasi-local condition by which the global homomorphism R is uniquely determined by a local homomorphism of tensor factor (one-site) algebras to the neighborhood.  The program developed in~\cite{sw:rvqca} investigates when the structure of such QCAÕs can be given by $*$-isomorphisms obeying a quasi-local condition as follows. Once the cells are grouped into super-cells, such super-cells partition the lattice. Then there exists a shifted partition with super-cells that overlap with the original cells in a Margolus partitioning scheme~\cite{margolus1}. The global evolution is given by a set of unitary operators U, V, each of which acts on a cell of the respective partition, applied in succession. If this structure holds, then imposing the requirement of causality on quantum cellular automata requires them to be quantum lattice-gas automata.

One of the conclusions of Schumacher and Werner in~\cite{sw:rvqca}, that all QCA are QLGA, was revised by Arrighi, Nesme, and Werner in~\cite{anw:odqca}. Arrighi, Nesme, and Werner~\cite{anw:odqca} give an example of a QCA in~\cite{anw:odqca} that is not a QLGA, thus showing that there is a separation between two non-empty subclasses of QCA. Precisely characterizing this separation by determining which QCA are QLGA is the subject of the present paper. 
The axiomatic one-dimensional development of Arrighi, Nesme, and Werner~\cite{anw:odqca} is more restricted and defined on a Hilbert space of unbounded, finite configurations (finitely many active cells in a quiescent background). This set of sequences, called the set of finite configurations, is embedded into some abstract Hilbert space in which the elements of this set comprise an orthonormal basis. This is a countable-dimensional Hilbert space Ð the Hilbert space of finite configurations. In this context, they define a QCA as a unitary, translation-invariant and causal operator on the Hilbert space of finite configurations.

In the present work we further develop the axiomatic approach and place it on a new mathematical foundation. We incorporate in the development of axiomatic QCAs the topology of the Hilbert space. This is an essential part of any study of the algebra of operators, particularly infinite dimensional algebras. This is intimately connected, in this case, with the idea of local algebras and the part they play in the determination of the global evolution. This  topology provides the foundation for connecting the local algebras to the algebra of bounded linear operators of the Hilbert space.  With this topology, we can investigate  the space of bounded linear operators  as a  von Neumann Algebra (as well as  a $C^*$ algebra), making  the local subalgebras a useful tool.

We develop a  model on the same space of  finite but unbounded configurations as Arrighi, Nesme, and Werner~\cite{anw:odqca}. To make a Hilbert space from it, we give  an inner product  structure to the set of finite configurations that is  a natural extension of  the inner product of the Hilbert space of each cell. We define the Hilbert space of finite configurations as the $\ell^2$ completion of the linear span of the set of finite configurations. This defines the separable topology of the Hilbert space and the induced inner product is used to define the norm, weak, and strong  topologies of the bounded linear   operators on the space. At the outset, we show how the subalgebra of local finite dimensional operators is connected to the algebra of bounded linear operators on the whole space (Theorem~\ref{Zdense}, the Density Theorem).  This provides the background for the development of some of the key ideas, and underpins the proof of the Structural Reversibility Theorem (Arrighi, Nesme, and Werner~\cite{anw:odqca}).

The main result of this paper is that QLGA can be characterized by a local condition on QCA. This condition pertains to the  set of image algebras under the global evolution of the neighboring cells.  The condition for a QCA to be a QLGA is that the local pieces of these image algebras generate the full cell algebra.  It is clear from our  development  that a central result, Theorem~\ref{thmS}, about the tensor product decomposition  of Hilbert space of  a single cell is not true in general. In particular, to have that decomposition,  the aforementioned subalgebra condition is needed. This condition is not satisfied for all QCA, as shown by the example in~\cite{anw:odqca}. This means that there is a class of QCAs that are not QLGAs and that require further exploration. 

The current paper is structured as follows. In Section~\ref{bckg} we introduce the Hilbert space of interest for our automata. In Section~\ref{axmod} we introduce the axiomatic definition of QCA and prove several properties of operators that are translation invariant or causal, or both. We extend our definitions of the propagation and collision operators to the quantum case and formally define a QLGA before proving our main theorem. In Section~\ref{ex} we give a number of examples of QCA, for the case when our condition is satisfied and when it is not. We close the paper with some conclusions and discussion. In the interests of making the present paper self contained we collect a number of background and corollary results in the appendices.
  
\section{The Hilbert space of finite configurations} \label{bckg}
{\ \\} 
We begin with a finite set of symbols $Q$ containing a special \textit{quiescent} symbol $q_0$, and an infinite lattice $\mathcal{L} = \mathbb{Z}^n$ in $n$ dimensions. Let us denote by $W$ the Hilbert space of formal linear combinations of symbols in $Q$, i.e.  $W = \mathbb{C}[Q]$ (in this paper all of the  vector spaces will be over $\mathbb{C}$). This Hilbert space $W$ is the Hilbert space of quantum states of a single cell (for  the rest of the paper, we will use the terms \textit{cell} and \textit{site} to mean an element of the lattice $\mathcal{L}$). Let the dimension of $W$ (cardinality of $Q$) be $\text{dim} W = d_W$. The set of  basis elements of $W$ corresponding to the symbols in $Q$ is denoted $B_Q$: 
\begin{equation} \label{BQ}
B_Q = \{\ket{q} : q\in Q\}
\end{equation}
$W$ has an inner product:
\begin{equation}\label{sofcip}
\langle \alpha \vert \vartheta \rangle = \sum_{q \in Q} \overline{\alpha_q}\vartheta_q
\end{equation}
where $\alpha  = \sum_{q \in Q} \alpha_q \vert q \rangle$,  $\vartheta  = \sum_{q \in Q} \vartheta_q \vert q \rangle$. 

Informally, the Hilbert space of the automata is the infinite tensor product of copies of $W$. The idea of infinite tensor products was first studied by von Neumann in his landmark paper ``On Infinite Direct Products"~\cite{vn:idp}. Later,  Guichardet~\cite{ag:shrt}  introduced a more modern way of describing an infinite tensor product of Hilbert spaces enumerated by a countable set. This is sometimes referred to as the \textit{incomplete tensor product} construction. From countably many copies of the  same Hilbert space $W$ the incomplete tensor product is obtained as the inductive limit  of an ascending chain of finite tensor products of $W$. This space has a natural basis that we call the \textit{set of finite configurations}.  The \textit{Hilbert space of finite configurations} is the incomplete tensor product construction of Guichardet with the set of finite configurations as its basis~\cite{ag:shrt}. 
\begin{defn}\label{sofc}
The \textit{set of finite configurations}, denoted by $\mathcal{C}$, is  the set of  simple tensor products with only  finitely many active elements, 
\begin{equation*} \mathcal{C} := \{ \bigotimes_{x \in \mathbb{Z}^n} \vert c_x \rangle : \vert c_x \rangle \in B_Q,  \text{ all but finite } \vert {c}_{x} \rangle  = \vert q_0 \rangle \}
\end{equation*} 
 \end{defn}
 
Thus, $\mathcal{C}$ is a countably infinite set. Let $\vert c \rangle = \bigotimes_{x \in \mathbb{Z}^n} \vert c_x \rangle, \vert c' \rangle = \bigotimes_{x \in \mathbb{Z}^n} \vert {c_x}' \rangle \in \mathcal{C}$. Define the inner product of the elements $\vert c \rangle$, $\vert c' \rangle$:
\begin{equation*}
\langle c \vert c' \rangle = \prod_{x \in \mathbb{Z}^n} \langle c_x \vert {c_x}' \rangle
\end{equation*}
and extend it by linearity to get an inner product on  span($\mathcal{C}$) (the set of finite linear combinations of elements of $\mathcal{C}$). 

\begin{defn}\label{hofc}
The \textit{Hilbert space of finite configurations},   denoted by $\mathcal{H}_\mathcal{C}$, is the ${\ell}^2$ completion of $\text{span}(\mathcal{C})$  under the norm  induced by the above inner product.  $\mathcal{C}$ constitutes an orthonormal basis of $\mathcal{H}_\mathcal{C}$.
\end{defn}
 By definition, $\mathcal{H}_\mathcal{C}$ is a separable Hilbert space since  it has a countable orthonormal  basis $\mathcal{C}$ (this follows from a  standard  theorem on  Hilbert spaces in  texts on analysis,  for example, Folland~\cite{gf:ra}, Proposition 5.29, pg. 176). 
\begin{defn}
The \textit{neighborhood} $\mathcal{N} \subset \mathbb{Z}^n$ is some finite set. The \textit{neighborhood of a cell  $x \in \mathbb{Z}^n$},  denoted by $\mathcal{N}_x$, is a translation of $\mathcal{N} $ to $x$.
\begin{equation*} 
\mathcal{N}_x = \{k+x : k \in \mathcal{N}\} 
\end{equation*}
\end{defn}
The state of the QCA, $\rho$, is described by a \textit{density operator}.
\begin{defn}
A \textit{density operator}, $\rho$,  is a positive trace class operator on  $\mathcal{H}_\mathcal{C}$, with tr($\rho$)~=~ 1.
 \end{defn}

\section{Axiomatic Definition of QCA}~\label{axmod}

In this section we introduce the axiomatic definition of QCA in which the requirement of causality is built in from the start. We show that a QCA on the Hilbert space of finite configurations must possess a quiescent state as an eigenstate of its global evolution. We define local operators, and connect the algebraic structure of these operators to the topological structure of operators on the Hilbert space of finite configurations. We then consider the action of the global evolution operator on the local operators and define the requirement of causality in terms of the evolution of local operators. This definition of causality in the Heisenberg picture was termed dual causality or structural reversibility by~\cite{sw:rvqca} and ~\cite{anw:odqca}. Structural reversibility is useful because it means that the analysis of the causal global evolution can proceed by analysis of finite dimensional algebras of local operators.

We  continue from the  definitions of Section~\ref{bckg}. We need a few more definitions to properly state the requirements for an axiomatic  model of QCA.  For a  finite subset $D \in \mathbb{Z}^n$, we introduce the idea of  a  \textit{co}-$D$ space, which will be of use for a number of results. 
\begin{defn} \label{coDdef}
Let $D \in \mathbb{Z}^n$ be a  finite subset. Define the  set of co-$D$ configurations to be $\mathcal{C}_{\overline{D}} := \{ \bigotimes_{i \in \mathbb{Z}^n\setminus D} \vert {c}_i \rangle : {c}_i \in {Q},  \text{ all but finite } \vert {c}_{i} \rangle  = \vert {q}_0 \rangle \}$.  Let the inner product on $\mathcal{C}_{\overline{D}} $ be induced by the inner product on $W$~\eqref{sofcip}, as in the case of   ${\mathcal{C}}$.  Then the \textit{co}-$D$ space, denoted $\mathcal{H}_{{\mathcal{C}}_{\overline{D}}}$, is defined as  the  completion  of $\text{span}({\mathcal{C}}_{\overline{D}})$ under the induced $\ell^2$ norm.
\end{defn}

Any operator $E \in B(\mathcal{H}_{{\mathcal{C}}})$ can be expressed as a finite sum:
\begin{equation} \label{opbrk}
E = \sum_{l}  e_{l}   \otimes E_{l}
\end{equation}
where $ \{e_{l} \} \subset \text{End}(\bigotimes_{k \in D} W)$ is a linearly independent set, and $\{E_{l} \} \subset B(\mathcal{H}_{{\mathcal{C}}_{\overline{D}}})$.

We can define  \textit{partial trace} on trace class operators in general, and in particular on density operators. We let   $\{ \vert w_x \rangle \}$ be some orthonormal basis of  the co-$D$ space $\mathcal{H}_{{\mathcal{C}}_{\overline{D}}}$. Then a density operator $\rho$ can be expressed as:
\begin{equation*}
\rho = \sum_{x,y} \rho_{x,y} \otimes \vert w_x \rangle \langle w_y \vert 
\end{equation*}
where $\rho_{x,y} \in  \begin{rm}{End}\end{rm}(\bigotimes_{j \in D} W_j)$. The \textit{partial trace of $\rho$ over the tensor factors not in $D$}, denoted $\text{tr}_{\{x \notin D\}}$,   is the  sum:
\begin{equation*}
\text{tr}_{\{x \notin D\}}(\rho) := \sum_{y} \rho_{y,y}.
\end{equation*}
The sum converges, and the definition  is independent of the choice of basis $\{ \vert w_y \rangle \}$ since $\rho$ is trace class.
 
We define the \textit{restricted} or \textit{reduced} density operator.
\begin{defn}
Let $D \in \mathbb{Z}^n$ be a  finite subset. Let $\rho$ be a state (density operator) on  ${\mathcal{H}}_{\mathcal{C}}$.  The  \textit{restriction} of $\rho$ to $D$, denoted $\rho |_{D}$,  is defined to be   the partial trace of $\rho$ over  the tensor factors  not in $D$:
\begin{equation*}
\rho |_{D} := \text{tr}_{\{x \notin D\}}(\rho).
\end{equation*}
\end{defn}

Now we have the terminology to  state the requirements for  an axiomatic  QCA. 
\begin{defn}\label{qcadef}
The global evolution $R$ of a QCA  on the Hilbert space of finite configurations ${\mathcal{H}}_{\mathcal{C}}$, with neighborhood $\mathcal{N}$, is required to be:
\begin{enumerate}[label=(\roman{*})]
\item \textbf{Unitary}: $R^\dag R = \mathbb{I}$.
\item \textbf{Translation invariant}: Let $\tau_z$, for some $z \in \mathbb{Z}^n$, be the translation operator defined by its action on an element $\vert c \rangle = \bigotimes_{j \in \mathbb{Z}^n} \vert c_j\rangle \in \mathcal{C}$:
\begin{equation*}
\tau_z:  \bigotimes_{j \in \mathbb{Z}^n} \vert c_j \rangle  \mapsto  \bigotimes_{j \in \mathbb{Z}^n} \vert c_{j+z} \rangle 
\end{equation*}
The map  $\tau_z$ is extended by linearity to span($\mathcal{C}$), on which it is inner product  preserving. Then $\tau_z$ can be unitarily extended to   ${\mathcal{H}}_{\mathcal{C}}$ (that such an extension exists, and is a unitary operator on ${\mathcal{H}}_{\mathcal{C}}$, follows from the \textit{Bounded Linear Transformation} (B.L.T.) Theorem, standard in the theory of operators on Banach spaces. The reader can find it in Reed and Simon~\cite{rs:fa1} as Theorem 1.7, pg. 9). We denote the group of translations:
 $\mathcal{T} = \{\tau_z\}_{z \in  \mathbb{Z}^n}$.
 A linear operator $M$ on ${\mathcal{H}}_{\mathcal{C}}$ is \textit{translation-invariant} if  $\tau_z   M   \tau^{-1}_z = M $ for all $z \in  \mathbb{Z}^n$. 
\item \textbf{Causal relative to the neighborhood $\mathcal{N}$}: 
A linear operator $M$ on ${\mathcal{H}}_{\mathcal{C}}$ is said to be \textit{causal} relative to a neighborhood $\mathcal{N}$ if  for every pair $\rho, \rho'$,  of   density operators, and  $x \in \mathbb{Z}^n$, that satisfy:
\begin{equation*}
\rho|_{\mathcal{N}_x} = \rho'|_{\mathcal{N}_x}\text{,}
\end{equation*}
 the  operators $M \rho M^\dag, M\rho'M^\dag$  satisfy:
\begin{equation*}
M \rho M^\dag |_x = M\rho'M^\dag |_x
\end{equation*}
\end{enumerate}
\end{defn}

Let us see what the constraints of Definition~\ref{qcadef} tell us about an operator. First we consider translation-invariance. We begin by  identifying the invariants (the fixed points),  in ${\mathcal{H}}_{\mathcal{C}}$, of the group of translations $\mathcal{T} = \{ \tau_z \}_{z \in \mathbb{Z}^n}$. 
\begin{lemma}\label{tauinvar}
The   space of  $\mathcal{T}$-invariants in ${\mathcal{H}}_{\mathcal{C}}$ is one-dimensional. It is: $\mathbb{C}^\times \bigotimes_{x \in \mathbb{Z}^n} \vert q_0 \rangle$, where $\mathbb{C}^\times=\mathbb{C}\setminus \{0\}$.
\end{lemma}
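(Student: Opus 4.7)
The plan is to work directly with the orthonormal basis $\mathcal{C}$ of $\mathcal{H}_\mathcal{C}$ and reduce the invariance condition to a statement about the coefficients of an $\ell^2$ expansion. It is immediate that $\bigotimes_{x \in \mathbb{Z}^n} \vert q_0 \rangle$ is $\mathcal{T}$-invariant (every $\tau_z$ fixes it cell by cell), so any nonzero scalar multiple lies in the invariant subspace. The content of the lemma is the converse: no other direction survives.

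Let $\psi \in \mathcal{H}_\mathcal{C}$ satisfy $\tau_z \psi = \psi$ for all $z \in \mathbb{Z}^n$, and expand $\psi = \sum_{c \in \mathcal{C}} \alpha_c \vert c \rangle$ with $\sum_c |\alpha_c|^2 < \infty$. Since $\tau_z$ permutes the basis $\mathcal{C}$ and is unitary on $\mathcal{H}_\mathcal{C}$, it acts continuously on the expansion, giving $\tau_z \psi = \sum_c \alpha_c \vert \tau_z c \rangle$. Relabeling and invoking $\tau_z \psi = \psi$ yields $\alpha_c = \alpha_{\tau_z^{-1} c}$ for all $c \in \mathcal{C}$ and all $z \in \mathbb{Z}^n$, i.e.\ the coefficient function $c \mapsto \alpha_c$ is constant on every $\mathcal{T}$-orbit in $\mathcal{C}$.

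The key observation is then that the quiescent configuration $c_0 := \bigotimes_{x} \vert q_0 \rangle$ is the unique element of $\mathcal{C}$ with finite $\mathcal{T}$-orbit; every other orbit is infinite. Indeed, suppose $c \ne c_0$, so there is some $j_0 \in \mathbb{Z}^n$ with $c_{j_0} \ne q_0$. For $z \in \mathbb{Z}^n$ the configuration $\tau_z c$ has $(\tau_z c)_j = c_{j+z}$, which is non-quiescent precisely when $j = j_0 - z$. Different values of $z$ therefore place the marked cell at different positions, so the map $z \mapsto \tau_z c$ is injective and the orbit of $c$ is in bijection with $\mathbb{Z}^n$, in particular infinite. (Equivalently, any stabilizer element $\tau_z$ with $z \ne 0$ would force $c_{j_0 + kz} \ne q_0$ for all $k \in \mathbb{Z}$, contradicting the finite-configuration condition.)

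Combining these two facts, if $c \ne c_0$ then $\alpha_c$ is constant on an infinite subset of $\mathcal{C}$, and square-summability forces $\alpha_c = 0$. Hence $\psi = \alpha_{c_0} \vert c_0 \rangle$, so the invariant subspace is exactly $\mathbb{C}\, c_0$ (equivalently $\mathbb{C}^\times c_0$ together with the zero vector), which is one-dimensional. The only subtlety is justifying the termwise action of $\tau_z$ on the $\ell^2$ series; this is where the unitary (hence bounded) extension of $\tau_z$ from $\mathrm{span}(\mathcal{C})$ to $\mathcal{H}_\mathcal{C}$ provided by the B.L.T.\ theorem is used, and this is the only analytic input required beyond the combinatorial orbit argument.
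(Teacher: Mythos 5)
Your proof is correct and follows essentially the same route as the paper: expand the invariant vector in the orthonormal basis $\mathcal{C}$, deduce that the coefficients are constant on $\mathcal{T}$-orbits, observe that the quiescent configuration is the unique finite orbit while all others are (countably) infinite, and conclude from square-summability. Your parenthetical stabilizer argument for why non-quiescent orbits are infinite is a slightly more explicit justification of a step the paper only asserts, but the approach is the same.
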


\begin{proof}
Consider the action of  $\mathcal{T}$ on ${\mathcal{H}}_{\mathcal{C}}$.   Let us write  $\phi \in {\mathcal{H}}_{\mathcal{C}}$ in a basis $\{\vert C_i\rangle: i\in\mathbb{Z}\}$ of ${\mathcal{H}}_{\mathcal{C}}$ :
\begin{equation*}
\phi = \sum_{i \in \mathbb{Z}} \phi_i  \vert C_i \rangle 
\end{equation*}
where $\vert C_i \rangle \in \mathcal{C}$. Let $\vert C_0 \rangle = \bigotimes_{x \in \mathbb{Z}^n} \vert q_0 \rangle$.
If $\tau_z \in \mathcal{T}$,  then:
\begin{equation*}
\tau_z(\phi) = \tau_z(\sum_{i} \phi_i \vert C_i \rangle )  =  \sum_{i \in \mathbb{Z}} \phi_i  \tau_z(\vert C_{i} \rangle)  
\end{equation*}
Since $\tau_z$ permutes elements of $\mathcal{C}$, we can associate a permutation $s_z$ of  the indices $i \in \mathbb{Z}$ with the action of $\tau_z$, i.e., $\tau_z(\vert C_{i} \rangle) = \vert C_{s_z(i)} \rangle$.

If  $\phi$  is an invariant of $\mathcal{T}$, then:  $\tau_z(\phi) = \phi$ for all  $z \in \mathbb{Z}^n$, i.e.,
\begin{equation*}
\sum_{i \in \mathbb{Z}} \phi_i  \vert C_i \rangle   =  \sum_{i \in \mathbb{Z}} \phi_i  \tau_z(\vert C_{i} \rangle)   =  \sum_{i \in \mathbb{Z}} \phi_{{s_z}^{-1}(i)} \vert C_{i} \rangle
\end{equation*}

This implies $\phi_{i} =\phi_{{s_z}^{-1}(i)}$   for $z \in \mathbb{Z}^n$, i.e., $\phi_{i}$ are constant on the orbits of  $\mathcal{C}$ under the action of $\mathcal{T}$. But  $\mathcal{T}$ fixes  $\vert C_0 \rangle =  \bigotimes_{x \in \mathbb{Z}^n} \vert q_0 \rangle$, and all other orbits  in $\mathcal{C}$ are countably infinite in cardinality.
Since  $\phi$ has a finite $\ell^2$ norm,  this  implies  that $\phi_{i} = 0$ for all  $i \neq 0$, i.e.,   the space of $\mathcal{T}$-invariants  is  one-dimensional: $\mathbb{C} \bigotimes_{x \in \mathbb{Z}^n} \vert q_0 \rangle$. 
\end{proof}

Now that we know that the only vectors fixed by the group of translations are scalar multiples of the \textit{quiescent state}: $\bigotimes_{x \in \mathbb{Z}^n} \vert q_0 \rangle$,  we go on to  determine the action of invertible  translation-invariant operators on the quiescent state. 

\begin{lemma} \label{Rinvariant}
An invertible  and translation-invariant operator $M$ on ${\mathcal{H}}_{\mathcal{C}}$ has $\bigotimes_{x \in \mathbb{Z}^n} \vert q_0 \rangle$ as an eigenvector:
\begin{equation*}
M\big(\bigotimes_{x \in \mathbb{Z}^n} \vert q_0 \rangle \big)= \lambda_0 \bigotimes_{x \in \mathbb{Z}^n} \vert q_0 \rangle 
\end{equation*}
for some $\lambda_0 \in \mathbb{C}^\times$. In particular, if $M$ is unitary and translation-invariant, then $\lambda_0 = e^{i \Theta_0}$ for some $\Theta_0 \in \mathbb{R}$.
\end{lemma}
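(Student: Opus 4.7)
The plan is to apply Lemma~\ref{tauinvar} to the image vector $M(\bigotimes_{x} \vert q_0\rangle)$. First I would observe that the quiescent state $\bigotimes_{x \in \mathbb{Z}^n} \vert q_0 \rangle$ is itself fixed by every $\tau_z \in \mathcal{T}$, since translating a product of all $\vert q_0 \rangle$'s just permutes identical factors. Hence, using translation-invariance of $M$ (i.e., $\tau_z M = M \tau_z$), the vector $\psi := M\bigl(\bigotimes_{x} \vert q_0\rangle\bigr)$ satisfies $\tau_z \psi = M \tau_z \bigl(\bigotimes_{x} \vert q_0 \rangle\bigr) = M \bigl(\bigotimes_{x} \vert q_0 \rangle\bigr) = \psi$ for every $z \in \mathbb{Z}^n$. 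Therefore $\psi$ lies in the space of $\mathcal{T}$-invariants of $\mathcal{H}_\mathcal{C}$.

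Next I would invoke Lemma~\ref{tauinvar}, which identifies this invariant space as the one-dimensional subspace $\mathbb{C}^\times \bigotimes_{x \in \mathbb{Z}^n} \vert q_0 \rangle$ together with the zero vector. So there is some $\lambda_0 \in \mathbb{C}$ with $\psi = \lambda_0 \bigotimes_{x} \vert q_0 \rangle$. To rule out $\lambda_0 = 0$, I use the hypothesis that $M$ is invertible: an invertible operator has trivial kernel, and since $\bigotimes_{x} \vert q_0 \rangle \neq 0$, its image must be nonzero. Hence $\lambda_0 \in \mathbb{C}^\times$, giving the eigenvalue equation $M\bigl(\bigotimes_{x} \vert q_0 \rangle\bigr) = \lambda_0 \bigotimes_{x} \vert q_0 \rangle$.

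For the unitary case, I would simply take norms on both sides of this eigenvalue equation. Since $M$ is unitary it is norm-preserving, and $\bigotimes_{x} \vert q_0 \rangle$ is a unit vector in $\mathcal{H}_\mathcal{C}$, so $\lvert \lambda_0 \rvert = 1$. Writing $\lambda_0 = e^{i\Theta_0}$ for some $\Theta_0 \in \mathbb{R}$ then completes the proof.

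There is really no hard part: the result is essentially an immediate consequence of the preceding lemma together with the commutation relation defining translation-invariance. The only subtlety worth being explicit about is why $\lambda_0 \neq 0$, which uses invertibility rather than unitarity, so the first assertion of the lemma applies even in the weaker invertible setting.
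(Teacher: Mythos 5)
Your proof is correct and follows essentially the same route as the paper: show $M(\bigotimes_x \vert q_0\rangle)$ is a nonzero $\mathcal{T}$-invariant via the commutation relation, then apply Lemma~\ref{tauinvar} and invertibility to get $\lambda_0 \in \mathbb{C}^\times$. Your justification of $\lvert\lambda_0\rvert = 1$ by norm preservation is in fact slightly cleaner than the paper's remark that unitary eigenvalues are ``roots of unity'' (they need only have unit modulus), but the argument is otherwise identical.
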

\begin{proof}
$M$ is  invertible, so  $M \big(\bigotimes_{x \in \mathbb{Z}^n} \vert q_0 \rangle \big) \neq 0$. And $M$ is  translation invariant,  hence:
\begin{equation*}
\tau_z   M   \tau_z^{-1} \big(\bigotimes_{x \in \mathbb{Z}^n} \vert q_0 \rangle \big) = M \big(\bigotimes_{x \in \mathbb{Z}^n} \vert q_0 \rangle \big)  \quad \forall z \in  \mathbb{Z}^n
\end{equation*}
But  $\tau_z^{-1}  (\bigotimes_{x \in \mathbb{Z}^n} \vert q_0 \rangle) = \bigotimes_{x \in \mathbb{Z}^n} \vert q_0 \rangle$ for all  $z \in  \mathbb{Z}^n$. This implies:
\begin{equation*}
\tau_z \bigg( M  \big(\bigotimes_{x \in \mathbb{Z}^n} \vert q_0 \rangle \big) \bigg) = M \big(\bigotimes_{x \in \mathbb{Z}^n} \vert q_0 \rangle \big)  \quad \forall z \in  \mathbb{Z}^n
\end{equation*}
that is, $M \big(\bigotimes_{x \in \mathbb{Z}^n} \vert q_0 \rangle \big)$ is an invariant of $\tau_z$ for all  $z \in  \mathbb{Z}^n$. Lemma~\ref{tauinvar} now implies that:
\begin{equation*}
M\big(\bigotimes_{x \in \mathbb{Z}^n} \vert q_0 \rangle \big)= \lambda_0 \bigotimes_{x \in \mathbb{Z}^n} \vert q_0 \rangle 
\end{equation*}
for some $\lambda_0 \in \mathbb{C}^\times$.

As a special case, if $M$ is unitary then all its eigenvalues are roots of unity and so $\lambda_0=e^{i\Theta_0}$ for some $\Theta_0 \in \mathbb{R}$.
\end{proof}

Finally, we consider the implication of causality on a unitary operator. First we need the related  notion of a  \textit{local} operator.  For any finite-dimensional vector spaces $V$ and $W$, we will assume the standard isomorphisms: ${\rm End}(V) \cong V \otimes  V^*$ and   ${\rm End}(V\otimes W) \cong {\rm End}(V) \otimes {\rm End}(W)$, and make use of them as needed, without explicit mention. Consider the embedding of  $\bigotimes_{j \in D} \begin{rm}{End}\end{rm}(W)$ into a subalgebra of $B({\mathcal{H}}_{\mathcal{C}})$   (the algebra  of bounded linear  operators on $\mathcal{H}_{\mathcal{C}}$, where the norm is the usual operator norm):
\begin{align} \label{algembd}
  \iota_D:     \bigotimes_{j \in D} \begin{rm}{End}\end{rm}(W)  &\hookrightarrow B({\mathcal{H}}_{\mathcal{C}}) \\ 
  a &\mapsto a \otimes \mathbb{I}_{\overline{\mathcal{D}}} \nonumber 
\end{align}
where $a$ is an element of $\bigotimes_{j \in D} \begin{rm}{End}\end{rm}(W)$, and $\mathbb{I}_{\overline{\mathcal{D}}}$ is the identity operator on the co-$D$ space, $\mathcal{H}_{{\mathcal{C}}_{\overline{D}}}$, with the operator decomposition as in~\eqref{opbrk}. Through the embedding $\iota_D$~\eqref{algembd} the algebra $\bigotimes_{j \in D} \begin{rm}{End}\end{rm}(W)$ is  isomorphic to the corresponding finite dimensional  subalgebra of $B({\mathcal{H}}_{\mathcal{C}})$.  

\begin{defn}
An operator $M$ on ${\mathcal{H}}_{\mathcal{C}}$ is \textit{local} upon a finite subset $D \in \mathbb{Z}^n$ if it is in the  image of the  map $\iota_D$~\eqref{algembd}.
\end{defn}

To understand the significance of these local subalgebras, we connect the  algebraic structure  of the subalgebra they generate to the topological structure of $B({\mathcal{H}}_{\mathcal{C}})$.  Let us denote the subalgebra embeddings~\eqref{algembd} by $\mathcal{A}_D = \iota_D(\bigotimes_{j \in D} \begin{rm}{End}\end{rm}(W))$.  Let $\{D_k \subset \mathbb{Z}^n : |D_k| < \infty\}_{k\in \mathbb{N}}$ be a strictly ascending chain:
\begin{equation*}
D_0 \subsetneq D_1 \subsetneq \ldots 
\end{equation*}
such that: $\mathbb{Z}^n = \cup_{k \in \mathbb{N}} D_k$.    We have an  ascending chain of subalgebras formed by embeddings $\{\mathcal{A}_{D_k}\}_{k\in \mathbb{N}}$, and we denote the union of these subalgebras by $\mathcal{Z}$:
\begin{equation} \label{Zdef}
\mathcal{Z} = \cup_{k} \mathcal{A}_{D_k}
\end{equation} 
For the proof of the next theorem, we need the following definition. Define, for any subset $\mathcal{S} \subset B({\mathcal{H}}_{\mathcal{C}})$,  the \textit{commutant} of  $\mathcal{S}$:
\begin{equation*} 
\text{Comm}(\mathcal{S}) = \{ y \in B({\mathcal{H}}_{\mathcal{C}}): ys = sy \:\: \forall s \in \mathcal{S} \}
\end{equation*}
 \begin{thm}[Density Theorem] \label{Zdense}
 $\mathcal{Z}$ is dense in $B({\mathcal{H}}_{\mathcal{C}})$ in the weak and strong operator topologies.
\end{thm}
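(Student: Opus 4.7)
The plan is to apply von Neumann's double commutant theorem. Since $\mathcal{Z}$ is a union of an ascending chain of unital $*$-subalgebras $\mathcal{A}_{D_k}$, it is itself a unital $*$-subalgebra of $B(\mathcal{H}_\mathcal{C})$. By the double commutant theorem, the strong and weak operator closures of $\mathcal{Z}$ both coincide with its double commutant $\mathcal{Z}''$, so it suffices to prove $\mathcal{Z}' = \mathbb{C}\mathbb{I}$, which forces $\mathcal{Z}'' = B(\mathcal{H}_\mathcal{C})$.

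To compute the commutant, I would first exploit a canonical tensor factorization. For each finite $D \subset \mathbb{Z}^n$, the bijection $\mathcal{C} \leftrightarrow B_Q^D \times \mathcal{C}_{\overline{D}}$ (a finite configuration is determined by its arbitrary values on $D$ and its almost-$q_0$ values on the complement) extends to a unitary isomorphism
\begin{equation*}
\mathcal{H}_\mathcal{C} \;\cong\; \Bigl(\bigotimes_{j \in D} W\Bigr) \otimes \mathcal{H}_{\mathcal{C}_{\overline{D}}}.
\end{equation*}
Under this identification, $\mathcal{A}_D$ coincides with $B(\bigotimes_{j\in D} W) \otimes \mathbb{I}$, and a standard commutant-of-tensor-product computation (the first factor being finite dimensional makes this elementary) gives $\mathcal{A}_D' = \mathbb{I}_D \otimes B(\mathcal{H}_{\mathcal{C}_{\overline{D}}})$.

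Next I would take any $T \in \mathcal{Z}' = \bigcap_k \mathcal{A}_{D_k}'$ and compute its matrix elements in the basis $\mathcal{C}$. For arbitrary $|c\rangle, |c'\rangle \in \mathcal{C}$, choose $k$ large enough that $D_k$ contains the supports of both $c$ and $c'$; then under the factorization for $D_k$ we have $|c\rangle = |c|_{D_k}\rangle \otimes |q_0^{\otimes \overline{D_k}}\rangle$ and similarly for $c'$. Writing $T = \mathbb{I}_{D_k} \otimes T_{D_k}$ yields
\begin{equation*}
\langle c'|T|c\rangle \;=\; \langle c'|_{D_k}|c|_{D_k}\rangle\,\langle q_0^{\otimes \overline{D_k}}|T_{D_k}|q_0^{\otimes \overline{D_k}}\rangle,
\end{equation*}
which vanishes unless $c = c'$. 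The diagonal value $\lambda(c) := \langle c|T|c\rangle$ is a scalar depending only on the vacuum expectation of $T_{D_k}$; comparing any two finite configurations $c, c''$ by enlarging $D_k$ to contain both supports shows $\lambda(c) = \lambda(c'')$. Hence $T$ acts as a constant $\lambda$ on the orthonormal basis $\mathcal{C}$, so $T = \lambda\mathbb{I}$, establishing $\mathcal{Z}' = \mathbb{C}\mathbb{I}$ and completing the proof.

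The main technical subtlety I expect to encounter is justifying the tensor-product decomposition of $\mathcal{H}_\mathcal{C}$ cleanly in the infinite-configuration setting (the quiescent-tail condition is essential here; without it the decomposition would require the full incomplete tensor product), together with the commutant identification $\mathcal{A}_D' = \mathbb{I}_D \otimes B(\mathcal{H}_{\mathcal{C}_{\overline{D}}})$. Once these are in place, the matrix-element argument and the invocation of the double commutant theorem are straightforward.
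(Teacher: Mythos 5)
Your proof is correct, but it reaches the key fact $\mathrm{Comm}(\mathcal{Z})=\mathbb{C}\mathbb{I}$ by a genuinely different route than the paper. Both arguments finish identically, by feeding the trivial commutant into the von Neumann density theorem to conclude that the weak and strong closures $\mathrm{Comm}(\mathrm{Comm}(\mathcal{Z}))$ equal $B(\mathcal{H}_\mathcal{C})$. The paper, however, gets the trivial commutant abstractly: it observes that $\mathcal{Z}$ acts transitively on the dense subspace $\mathrm{span}(\mathcal{C})$, hence irreducibly, and then invokes Schur's Lemma. You instead compute the commutant directly, using the unitary identification $\mathcal{H}_\mathcal{C}\cong(\bigotimes_{j\in D_k}W)\otimes\mathcal{H}_{\mathcal{C}_{\overline{D_k}}}$ (which the paper itself presupposes in its decomposition~\eqref{opbrk}) to write any $T\in\bigcap_k\mathcal{A}_{D_k}'$ as $\mathbb{I}_{D_k}\otimes T_{D_k}$, and then showing by a matrix-element calculation in the basis $\mathcal{C}$ that $T$ is a constant multiple of the identity. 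What your approach buys is explicitness and self-containment: it sidesteps any appeal to an infinite-dimensional Schur lemma (the version the paper cites, Lemma~\ref{srlem}, is stated for modules of countable dimension over $\mathbb{C}$, whereas $\mathcal{H}_\mathcal{C}$ is only separable, not countable-dimensional as a vector space, so the paper is implicitly relying on the Hilbert-space/$*$-algebra form of Schur's lemma), at the cost of having to justify the tensor factorization and the commutant identity $\mathcal{A}_D'=\mathbb{I}_D\otimes B(\mathcal{H}_{\mathcal{C}_{\overline{D}}})$ --- both of which are routine given that the first factor is finite dimensional. The paper's argument is shorter and emphasizes the representation-theoretic viewpoint that recurs throughout its appendices.
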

\begin{proof}
 $\mathcal{Z}$ acts irreducibly on $\mathcal{H}_{\mathcal{C}}$. Indeed,    $\mathcal{Z}$ acts transitively on a dense subspace, $\text{span}(\mathcal{C})$, of $\mathcal{H}_{\mathcal{C}}$.  This implies, by Schur's Lemma (Lemma~\ref{srlem}),  Comm$(\mathcal{Z}) = \mathbb{C} \mathbb{I}$ , where $\mathbb{I}$ is the identity operator on $\mathcal{H}_{\mathcal{C}}$.  Then Comm(Comm($\mathcal{Z})) = B({\mathcal{H}}_{\mathcal{C}})$. By  von Neumann Density theorem (Theorem~\ref{vndnsthm}), Comm(Comm($\mathcal{Z}$)) is the weak and strong closure of $\mathcal{Z}$. 
 \end{proof}

The embedding $\iota_D$~\eqref{algembd} allows us to consider the individual elements of  $\mathcal{A}_x$ (the special case when $D=\{x\}$, i.e., a single cell) as acting only on the tensor factors of interest. Then we are justified, for notational reasons, in replacing the finite dimensional algebras $\mathcal{A}_D$ with  a finite tensor product of cell algebras  of the form $\mathcal{A}_x$. 

For  a unitary operator $M$  on $B(\mathcal{H}_\mathcal{C})$, denote by $C_M$, the conjugation by $M$ map on  $B(\mathcal{H}_{\mathcal{C}})$:
\begin{align*} 
C_M :  B(\mathcal{H}_\mathcal{C}) &\longrightarrow B(\mathcal{H}_\mathcal{C}) \\
						Z &\mapsto M^\dag Z M \nonumber 
\end{align*}  
As $M$ is assumed to be unitary, this map is  a unitary isomorphism (under the usual operator norm) of $B({\mathcal{H}}_{\mathcal{C}})$.  

For any set $Y\subset B(\mathcal{H}_\mathcal{C})$ the  image of $Y$ under conjugation by $M$ is $C_M(Y)$. By abuse of notation, we say,  $C_M(Y) = M^\dag Y M$. The images of the cell algebras under conjugation by $M$, $\mathcal{M}_z = M^\dag \mathcal{A}_z M$, $z \in  \mathbb{Z}^n$ are finite dimensional, and  for $z,z' \in \mathbb{Z}^n$, $z\neq z'$,  $\mathcal{M}_z$, $\mathcal{M}_{z'}$ pairwise commute as $\mathcal{A}_z$, $\mathcal{A}_{z'}$ pairwise commute.

Having defined the local operators we now consider the supports of some such operators of particular interest. Let the \textit{reflected} neighborhood  of $\mathcal{N}$, denoted $\mathcal{V}$, be:
\begin{align*} 
\mathcal{V} = -\mathcal{N} = \{-k : k \in \mathcal{N}\}
\end{align*} 
Then this reflected neighborhood $\mathcal{V}$ can also be translated. We denote the translation by $\mathcal{V}_x$.  
\begin{align*} 
\mathcal{V}_x = \{x-k : k \in \mathcal{N}\}
\end{align*} 
The  size of this set, by definition, is  $|\mathcal{V}_x | = |\mathcal{N}|$.  

It is straightforward to see, by symmetry,  that $\mathcal{V}_x \subset  \mathbb{Z}^n $ is  the subset of the lattice consisting of those elements whose neighborhood contains $x$:
\begin{align*}
\mathcal{V}_x = \{l  : x \in \mathcal{N}_l\}
\end{align*}

Causality can be expressed in the Heisenberg picture in which one considers the evolution of operators. This form of causality is captured by the  \textit{Structural Reversibility} theorem due to Arrighi, Nesme, and Werner~\cite{anw:odqca}. In the interests of making the present paper self contained we have included the proof of this Theorem in the Appendix (Theorem~\ref{strucrev}).

\begin{thm}[Structural Reversibility]  \label{strucreva}
  Let $M : \mathcal{H}_{\mathcal{C}} \longrightarrow \mathcal{H}_{\mathcal{C}}$ be a unitary  operator  and $\mathcal{N}$ a neighborhood. Let $\mathcal{V} = \{-k | k\in \mathcal{N}\}$. Then the following are equivalent:
 \begin{enumerate}[label=(\roman{*})] 
\item \label{strv1} $M$ is causal relative to the  neighborhood $\mathcal{N}$. 
\item \label{strv2}  For every  operator $A_x$ local upon cell $x$, $M^\dag A_x M$ is local upon  $\mathcal{N}_x$.
\item \label{strv3} $M^\dag$ is causal relative to the  neighborhood $\mathcal{V}$. 
\item \label{strv4}  For every  operator $A_x$ local upon cell $x$, $M A_x M^\dag$ is local upon  ${\mathcal{V}}_x$.
\end{enumerate}   
\end{thm}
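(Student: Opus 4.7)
The plan is to prove the four-way equivalence via the chain (i) $\Leftrightarrow$ (ii), (ii) $\Leftrightarrow$ (iv), and (iii) $\Leftrightarrow$ (iv). The third of these follows for free by symmetry: since $-\mathcal{V} = \mathcal{N}$ and $(M^\dag)^\dag = M$, (iii) $\Leftrightarrow$ (iv) is nothing but (i) $\Leftrightarrow$ (ii) applied to the unitary $M^\dag$ with neighborhood $\mathcal{V}$. So the essential content is (i) $\Leftrightarrow$ (ii) together with (ii) $\Leftrightarrow$ (iv).

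First I would establish an auxiliary duality lemma: an operator $C \in B(\mathcal{H}_\mathcal{C})$ is local upon a finite $D$ if and only if $\operatorname{tr}(\rho C)$ depends only on $\rho|_D$ for every density operator $\rho$. The forward direction is immediate from the factorization $\operatorname{tr}(\rho C) = \operatorname{tr}_D(\rho|_D \cdot C_D)$ when $C = C_D \otimes \mathbb{I}_{\overline{D}}$. For the converse, expand $C$ as in \eqref{opbrk} in the form $C = \sum_l e_l \otimes E_l$ with $\{e_l\}$ linearly independent in $\operatorname{End}(\bigotimes_{j \in D} W)$; testing $\operatorname{tr}(\rho C) = \operatorname{tr}(\rho' C)$ against pure product states $|\phi\rangle \otimes |\chi\rangle$ and $|\phi\rangle \otimes |\chi'\rangle$ (which agree on $D$) and polarizing forces each $E_l$ to be a scalar multiple of $\mathbb{I}_{\overline{D}}$, so $C \in \mathcal{A}_D$. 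Granted this lemma, (i) rewrites via cyclicity of the trace as $\operatorname{tr}(\rho \cdot M^\dag A_x M) = \operatorname{tr}(\rho' \cdot M^\dag A_x M)$ whenever $\rho|_{\mathcal{N}_x} = \rho'|_{\mathcal{N}_x}$ and $A_x \in \mathcal{A}_x$, and this is precisely the hypothesis of the lemma for $D = \mathcal{N}_x$ and $C = M^\dag A_x M$; so (i) $\Leftrightarrow$ (ii).

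Next I would prove (ii) $\Rightarrow$ (iv) by a commutant argument exploiting Theorem~\ref{Zdense}. Fix $A_x \in \mathcal{A}_x$. For any $y \in \overline{\mathcal{V}_x}$ the characterization $\mathcal{V}_x = \{l : x \in \mathcal{N}_l\}$ gives $x \notin \mathcal{N}_y$, so by (ii) each image $M^\dag B_y M$ (for $B_y \in \mathcal{A}_y$) lies in $\mathcal{A}_{\mathcal{N}_y}$ and therefore commutes with $A_x$; conjugating by $M$ yields $[M A_x M^\dag, B_y] = 0$. Thus $M A_x M^\dag$ commutes with every element of the subalgebra generated by $\bigcup_{y \in \overline{\mathcal{V}_x}} \mathcal{A}_y$. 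Applying the Density Theorem to the sublattice $\overline{\mathcal{V}_x}$ (using the incomplete tensor product factorization $\mathcal{H}_\mathcal{C} \cong \mathcal{H}_{\mathcal{V}_x} \otimes \mathcal{H}_{\mathcal{C}_{\overline{\mathcal{V}_x}}}$), the weak closure of this subalgebra is $\mathbb{I}_{\mathcal{V}_x} \otimes B(\mathcal{H}_{\mathcal{C}_{\overline{\mathcal{V}_x}}})$, whose commutant in $B(\mathcal{H}_\mathcal{C})$ is $\mathcal{A}_{\mathcal{V}_x}$. Hence $M A_x M^\dag \in \mathcal{A}_{\mathcal{V}_x}$, which is (iv). The reverse implication (iv) $\Rightarrow$ (ii) is entirely symmetric under $M \leftrightarrow M^\dag$ and $\mathcal{N} \leftrightarrow \mathcal{V}$.

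The main obstacle will be the careful infinite-dimensional bookkeeping in the two commutant-flavored steps: in the duality lemma, justifying that testing against product vectors alone is enough to kill the linearly independent coefficients $e_l$ (which uses the density of span$(\mathcal{C})$ in $\mathcal{H}_\mathcal{C}$); and in the commutant step, identifying the commutant of the weak closure of $\bigcup_{y \in \overline{\mathcal{V}_x}} \mathcal{A}_y$ with $\mathcal{A}_{\mathcal{V}_x}$, which is the substantive use of Theorem~\ref{Zdense} beyond what is visible in the algebraic formulation. Once those two topological facts are in hand, the rest is a clean Schrödinger--Heisenberg translation.
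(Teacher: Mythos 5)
Your proposal is correct and follows essentially the same route as the paper: the paper's Appendix proof likewise establishes a trace--duality lemma (its Lemma~\ref{lemmatr}), obtains (i)$\Leftrightarrow$(ii) by cyclicity of the trace, proves (ii)$\Rightarrow$(iv) by showing $MA_xM^\dag$ commutes with every $\mathcal{A}_y$ for $y\notin\mathcal{V}_x$ and then killing the co-$\mathcal{V}_x$ components via irreducibility and Schur's Lemma (the hands-on version of your tensor-product commutant identification), and dispatches the remaining implications by the same symmetry substitutions. The only elision is that your single duality lemma characterizes local \emph{operators}, whereas the step (ii)$\Rightarrow$(i) also needs the easy companion fact that a restricted density operator is determined by its traces against operators local upon $D$ --- the paper records this as part~(ii) of its Lemma~\ref{lemmatr}.
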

 
The requirement of causal evolution will be useful in the local analysis of the global evolution $R$. This is because, by Theorem~\ref{Zdense} the subalgebra generated by the local algebra is dense in $B({\mathcal{H}}_{\mathcal{C}})$. The unitary, causal evolution of  the QCA can therefore be considered in terms of finite dimensional local subalgebras of $B({\mathcal{H}}_{\mathcal{C}})$. 

For a site $x$ the set of sites whose neighborhood contains $x$ is $\mathcal{V}_x$. We first show that the algebra $\mathcal{A}_x$ is a subset of the algebra formed from the span of the images of the local algebras $\mathcal{A}_k$ for $k\in \mathcal{V}_x$. This result is true for all QCAs.

\begin{thm}\label{thmsubalg}
Let  $M$ be a unitary and causal map on $\mathcal{H}_\mathcal{C}$ relative to some neighborhood $\mathcal{N}$.  Then for every $x\in \mathbb{Z}^n$,  
\begin{equation*}
\mathcal{A}_x \subset  \begin{rm}{span}\end{rm}(\prod_{k \in \mathcal{N}} \mathcal{M}_{x-k})
 \end{equation*}
In particular $ \mathcal{Z} = M^\dag\mathcal{Z}M$.

\end{thm}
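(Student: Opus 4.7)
The plan is to invoke the Structural Reversibility Theorem~\ref{strucreva} and then translate everything through conjugation by $M$. Since $M$ is unitary and causal relative to $\mathcal{N}$, item~\ref{strv4} of Theorem~\ref{strucreva} applies: for every $A_x \in \mathcal{A}_x$ the operator $M A_x M^\dag$ is local upon $\mathcal{V}_x = \{x-k : k \in \mathcal{N}\}$, hence lies in the finite-dimensional subalgebra $\mathcal{A}_{\mathcal{V}_x}$. Because the cell algebras at distinct sites pairwise commute and together generate $\mathcal{A}_{\mathcal{V}_x}$, any such element is a finite linear combination of products $\prod_{k \in \mathcal{N}} a_{x-k}$ with $a_{x-k} \in \mathcal{A}_{x-k}$; that is, $\mathcal{A}_{\mathcal{V}_x} = \text{span}(\prod_{k \in \mathcal{N}} \mathcal{A}_{x-k})$.

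Next I would use the identity $A_x = M^\dag (M A_x M^\dag) M$ together with unitarity. Inserting $M M^\dag = \mathbb{I}$ between factors gives
$$M^\dag (a_{x-k_1} a_{x-k_2}\cdots) M = (M^\dag a_{x-k_1} M)(M^\dag a_{x-k_2} M)\cdots,$$
so each factor lands in $\mathcal{M}_{x-k_j} = M^\dag \mathcal{A}_{x-k_j} M$. Consequently $M^\dag \mathcal{A}_{\mathcal{V}_x} M = \text{span}(\prod_{k \in \mathcal{N}} \mathcal{M}_{x-k})$, and since $M A_x M^\dag \in \mathcal{A}_{\mathcal{V}_x}$, the first inclusion $\mathcal{A}_x \subset \text{span}(\prod_{k \in \mathcal{N}} \mathcal{M}_{x-k})$ follows immediately.

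For the equality $\mathcal{Z} = M^\dag \mathcal{Z} M$ I would prove both inclusions separately. For $\mathcal{Z} \subset M^\dag \mathcal{Z} M$: conjugation by the unitary $M$ is an algebra automorphism of $B(\mathcal{H}_\mathcal{C})$, so $M^\dag \mathcal{Z} M$ is an algebra; each $\mathcal{M}_{x-k}$ is contained in $M^\dag \mathcal{A}_{D_j} M \subset M^\dag \mathcal{Z} M$ for $D_j$ large enough to contain $x-k$, so the first inclusion just proved forces $\mathcal{A}_x \subset M^\dag \mathcal{Z} M$ for every $x$, and taking unions yields $\mathcal{Z} \subset M^\dag \mathcal{Z} M$. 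For $M^\dag \mathcal{Z} M \subset \mathcal{Z}$: item~\ref{strv2} of Theorem~\ref{strucreva} gives $M^\dag \mathcal{A}_x M \subset \mathcal{A}_{\mathcal{N}_x}$, and therefore $M^\dag \mathcal{A}_{D_k} M$ is contained in $\mathcal{A}_{\bigcup_{x \in D_k} \mathcal{N}_x}$, a finite union that sits inside some $\mathcal{A}_{D_j}\subset \mathcal{Z}$. Taking the union over $k$ closes the argument.

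There is no genuinely hard step here, since the Structural Reversibility Theorem does the real work; the only mild subtlety is the bookkeeping one must do to convert products in $M^\dag \mathcal{A}_{\mathcal{V}_x} M$ into products in $\prod_k \mathcal{M}_{x-k}$ via the insertion of $M M^\dag = \mathbb{I}$, and to note that finite unions of the neighborhoods $\mathcal{N}_x$ and $\mathcal{V}_x$ fit inside some $D_j$ of the exhaustion defining $\mathcal{Z}$. Once these points are made explicit, both parts of the theorem drop out with no further computation.
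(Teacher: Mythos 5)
Your proposal is correct and follows essentially the same route as the paper: apply Structural Reversibility (the $M A_x M^\dag$ local upon $\mathcal{V}_x$ direction) to get $M\mathcal{A}_xM^\dag\subset\bigotimes_{k\in\mathcal{N}}\mathcal{A}_{x-k}$, conjugate back while inserting $MM^\dag=\mathbb{I}$ to land in $\mathrm{span}(\prod_{k\in\mathcal{N}}\mathcal{M}_{x-k})$, and combine this with the $M^\dag A_x M$ direction to get both inclusions for $\mathcal{Z}=M^\dag\mathcal{Z}M$. Your citation of item~(iv) for the first step is in fact the cleaner reference (the paper points to item~(iii), though it is item~(iv) that is used), and your slightly more explicit treatment of the two inclusions for $\mathcal{Z}$ matches the paper's one-line justification.
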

\begin{proof}
 By  Structural Reversibility, Theorem~\ref{strucreva}~\ref{strv3}, for every $x\in \mathbb{Z}^n$,   \begin{equation*} M \mathcal{A}_x M^\dag \subset \bigotimes_{k \in \mathcal{N}}  \mathcal{A}_{x-k} \end{equation*} This implies:
 \begin{equation*} \mathcal{A}_x  = M^\dag M \mathcal{A}_x M^\dag M \subset  M^\dag (\bigotimes_{k \in \mathcal{N}}  \mathcal{A}_{x-k}) M= \begin{rm}{span}\end{rm}(\prod_{k \in \mathcal{N}} M^\dag \mathcal{A}_{x-k} M) = \begin{rm}{span}\end{rm}(\prod_{k \in \mathcal{N}} \mathcal{M}_{x-k})\end{equation*}
  The last statement follows  by the first statement and  Theorem~\ref{strucreva}~\ref{strv2}, i.e.,  \begin{equation*} \mathcal{M}_z \subset \bigotimes_{k \in \mathcal{N}}  \mathcal{A}_{z+k} \text{ for all }z \in \mathbb{Z}^n \end{equation*}  
 \end{proof}

Consider conjugation by $R$ on $B(\mathcal{H}_{\mathcal{C}})$, where $R$ is the global evolution of a QCA. We denote this map $C_R$: 
\begin{align} \label{conjR}
C_R :  B(\mathcal{H}_\mathcal{C}) &\longrightarrow B(\mathcal{H}_\mathcal{C}) \\
						Z &\mapsto R^\dag Z R \nonumber 
\end{align}  
Denote $\mathcal{R}_z = R^\dag \mathcal{A}_z R$, the image of cell algebras $\mathcal{A}_z$ under conjugation by $R$~\eqref{conjR}. This is the algebra of operators which are the images of operators localized on a single cell, after a single timestep of evolution by $R$. We are interested in the intersection of these time evolved algebras with the algebras localized on a single cell. Let us denote by $\mathcal{D}_{z,x}$ the  following subalgebra of $\mathcal{R}_z$, $z \in \mathbb{Z}^n$: 
\begin{equation} \label{Projectiondef}
\mathcal{D}_{z,x} = \mathcal{R}_z   \cap \mathcal{A}_{x}
\end{equation}

When $z \in \mathcal{V}_x$ then $\mathcal{D}_{z,x} $  are the elements of $\mathcal{R}_z$ which are contained in $\mathcal{A}_x$, where
\begin{align*} 
\mathcal{R}_z =R^\dag \mathcal{A}_zR \subset  \bigotimes_{ k \in \mathcal{N}_z} \begin{rm}{End}\end{rm}(W)  
\end{align*}
and:
\begin{align*}
\mathcal{A}_z =  \underbrace{\begin{rm}{End}\end{rm}(W)}_{k = z}   \otimes \bigotimes_{k \in \mathcal{N}_z\setminus \{z\}} \mathbb{I}_{k}.
\end{align*}

The subalgebras $\mathcal{D}_{z,x}$ for $z\in \mathcal{V}_x$ may be understood intuitively as the subalgebras localized on $x$ that the cell $x$ receives from cells in $\mathcal{V}_x$. It turns out that the algebra generated by $\mathcal{D}_{z,x}$ for $z\in \mathcal{V}_x$ is the key ingredient in understanding the structure of a subset of QCA. In particular we are interested in the case when the algebra generated by $\mathcal{D}_{z,x}$ for $z\in \mathcal{V}_x$ is equal to the full cell algebra $\mathcal{A}_x$. We state a preliminary theorem about the structure of $W=\mathbb{C}[Q]$  and the subalgebras $\mathcal{D}_{z,x}$ when this is the case.    This is the theorem that allows a local classification of a QCA as a QLGA. 
 Let $\mathcal{D}_{{x-y},x}$ be as defined in~\eqref{Projectiondef}, where  $x \in \mathbb{Z}^n$, and $y \in  \mathcal{N}$.    
\begin{thm}\label{thmS}
 Suppose that  $R$ is the  global evolution of a QCA with neighborhood $\mathcal{N}$. 
Then
$\mathcal{A}_x = \begin{rm}{span }\end{rm}(\prod_{y \in  \mathcal{N}} \mathcal{D}_{{x-y},x})$
if and only if there exists an isomorphism of vector spaces:
\begin{equation*}
S :  W \longrightarrow \bigotimes_{z \in \mathcal{N}} V_{z}
\end{equation*}
for some vector spaces $\{V_{z}\}_{z \in \mathcal{N}}$.  Under the isomorphism $S$, for each $y   \in \mathcal{N}$:
\begin{equation*}
\mathcal{D}_{{x-y},x} \cong  \begin{rm}{End}\end{rm}(V_y) \otimes \bigotimes_{z \in \mathcal{N}, z \neq y} \mathbb{I}_{V_z}
\end{equation*}
\end{thm}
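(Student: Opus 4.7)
The plan is to prove both directions separately, with the converse containing essentially all of the work. For the forward direction, suppose there is a vector space isomorphism $S : W \to \bigotimes_{z \in \mathcal{N}} V_z$ identifying each $\mathcal{D}_{x-y,x}$ with $\text{End}(V_y) \otimes \bigotimes_{z \neq y} \mathbb{I}_{V_z}$. Then the product of these slot-factors over $y \in \mathcal{N}$ generates $\bigotimes_{y} \text{End}(V_y) \cong \text{End}\bigl(\bigotimes_{z} V_z\bigr) \cong \text{End}(W) \cong \mathcal{A}_x$, and transporting back by $S^{-1}$ gives the spanning identity for the $\mathcal{D}_{x-y,x}$.

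For the converse, assume $\mathcal{A}_x = \text{span}\bigl(\prod_{y \in \mathcal{N}} \mathcal{D}_{x-y,x}\bigr)$. First I would record the structural properties of the collection $\{\mathcal{D}_{x-y,x}\}_{y \in \mathcal{N}}$ inside the matrix algebra $\mathcal{A}_x \cong \text{End}(W)$. Each $\mathcal{D}_{x-y,x}$ is a unital $*$-subalgebra, since $\mathcal{A}_x$ is and $\mathcal{R}_{x-y} = R^{\dagger}\mathcal{A}_{x-y}R$ is the image of a unital $*$-subalgebra under the $*$-automorphism $C_R$. Distinct $\mathcal{D}_{x-y,x}$ and $\mathcal{D}_{x-y',x}$ commute, because distinct cell algebras $\mathcal{A}_{x-y},\mathcal{A}_{x-y'}$ commute and $C_R$ preserves commutators, so $\mathcal{R}_{x-y}$ and $\mathcal{R}_{x-y'}$ do as well.

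The crucial step is to show that each $\mathcal{D}_{x-y,x}$ is in fact a \emph{factor}, i.e.\ isomorphic to some matrix algebra $M_{d_y}(\mathbb{C})$. By Wedderburn--Artin, a unital finite-dimensional $*$-subalgebra of a matrix algebra is semisimple, so it suffices to show $Z(\mathcal{D}_{x-y,x}) = \mathbb{C}\mathbb{I}$. If $a$ lies in this center, then $a$ commutes with $\mathcal{D}_{x-y,x}$ (by centrality) and with every $\mathcal{D}_{x-y',x}$ for $y' \ne y$ (by pairwise commutativity), so the spanning hypothesis forces $a$ to commute with all of $\mathcal{A}_x$, whence $a \in Z(\mathcal{A}_x) = \mathbb{C}\mathbb{I}$.

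The last step is an inductive application of the finite-dimensional double-commutant theorem. Enumerate $\mathcal{N} = \{y_1,\ldots,y_n\}$ and set $B_i = \mathcal{D}_{x-y_i,x}$. Since $B_1$ is a unital matrix subfactor of $\text{End}(W)$, there is a vector-space isomorphism $W \cong V_{y_1} \otimes W'$ under which $B_1$ identifies with $\text{End}(V_{y_1}) \otimes \mathbb{I}_{W'}$ and its centralizer in $\text{End}(W)$ identifies with $\mathbb{I}_{V_{y_1}} \otimes \text{End}(W')$. The remaining $B_i$ ($i \ge 2$) lie in this centralizer, still pairwise commute, and remain factors; induction on $n$ then delivers $W \cong \bigotimes_{z \in \mathcal{N}} V_z$ with each $\mathcal{D}_{x-y,x}$ occupying its single tensor slot. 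I expect the main obstacle to be verifying that the spanning property is preserved when passing to the centralizer of $B_1$: one must decompose a generic element of $\text{End}(V_{y_1}) \otimes \text{End}(W')$ as a sum of products with a leading factor in $B_1 = \text{End}(V_{y_1}) \otimes \mathbb{I}$, and argue by separating the two tensor slots that $B_2 \cdots B_n$ spans all of $\mathbb{I}_{V_{y_1}} \otimes \text{End}(W')$ before invoking the inductive hypothesis.
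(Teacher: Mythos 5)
Your proof is correct, and it reorganizes the argument in a way that differs from the paper's. The paper funnels everything through its Lemma~\ref{lemmaS}: at each inductive step it forms $T_m=\mathrm{span}(\prod_{t<m}M_t)$, decomposes $W$ as a completely reducible $T_m$-module into isotypic components, and uses the spanning hypothesis to force a \emph{single} isotypic component, yielding $W\cong U_m\otimes Y_m$ with $T_m\cong \mathbb{I}\otimes\mathrm{End}(Y_m)$ and $M_m\cong\mathrm{End}(U_m)\otimes\mathbb{I}$; it then restricts the remaining algebras to $Y_m$ and recurses. You instead first isolate the observation that each $\mathcal{D}_{x-y,x}$ is a \emph{factor} --- your centrality argument (a central element commutes with all the commuting generators, hence with their span $\mathcal{A}_x=\mathrm{End}(W)$) is clean and does not appear explicitly in the paper --- and then peel off one factor at a time using the standard decomposition $W\cong V_{y_1}\otimes W'$ for a unital matrix subfactor together with the double commutant theorem, checking that the spanning property descends to the commutant. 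The two proofs are dual organizations of the same double-commutant machinery (the paper decomposes over the product of all-but-one algebra, you decompose over a single one); yours buys a more modular argument in which factoriality is established once up front, while the paper's single-isotypic-component step handles factoriality and the tensor split simultaneously. Your flagged ``main obstacle'' is handled correctly: since $B_1=\mathrm{End}(V_{y_1})\otimes\mathbb{I}$ and $B_i\subset\mathbb{I}\otimes\mathrm{End}(W')$ for $i\ge 2$, one has $\mathrm{span}(B_1\cdots B_n)=\mathrm{End}(V_{y_1})\otimes\mathrm{span}(B_2\cdots B_n)$, and equality with $\mathrm{End}(V_{y_1})\otimes\mathrm{End}(W')$ forces $\mathrm{span}(B_2\cdots B_n)=\mathrm{End}(W')$ by a dimension count.

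One small omission: the theorem's conclusion indexes the spaces $V_z$ by $z\in\mathcal{N}$ only, so the isomorphism $S$ and the $V_z$ must be chosen independently of the cell $x$. Your construction a priori produces $x$-dependent data; the paper closes this by one line invoking translation invariance of $R$, which lets the $S_x$ and $\{V_{x-k,x}\}$ be taken uniform in $x$. You should add that remark, but it is not a substantive gap.
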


Before proving the theorem, let us state  a corollary describing the structure of the algebras $\mathcal{R}_x$ under the conditions  of the theorem. These algebras $\mathcal{R}_x$ are the images of algebras $\mathcal{A}_x$ localized on a single cell, after one timestep of the global evolution $R$. 

\begin{cor} \label{corS} Suppose that $R$ is the  global evolution of a QCA  with neighborhood $\mathcal{N}$, and satisfies   $\mathcal{A}_x = \begin{rm}{span }\end{rm}(\prod_{y \in  \mathcal{N}} \mathcal{D}_{{x-y},x})$. Then:
 
\begin{enumerate}[label=(\roman{*})] 
\item \label{cor1}  $\mathcal{A}_x  = \begin{rm}{End}\end{rm}(W)  \cong \bigotimes_{z \in \mathcal{N}} \begin{rm}{End}\end{rm}(V_z)$, for all $x \in \mathbb{Z}^n$.
\item \label{cor2}  The dimension of  $W$, $d_W$, is a product of the dimensions of $V_z$, $d_{V_z}$, i.e.,  $d_W =  \prod_{z \in \mathcal{N}} d_{V_z}$.
\item \label{cor3}  $\mathcal{R}_x = \begin{rm}{span }\end{rm}(\prod_{k \in  \mathcal{N}} \mathcal{D}_{x,{x+k}}) \cong  \bigotimes_{k \in  \mathcal{N}}\begin{rm}{End}\end{rm}(V_{k})$.
\end{enumerate}
\end{cor}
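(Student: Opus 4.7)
The argument has two directions. For the reverse implication, if such an isomorphism $S$ exists, then $\text{span}(\prod_{y \in \mathcal{N}} \mathcal{D}_{x-y,x})$ corresponds under $S$ to $\bigotimes_{y \in \mathcal{N}} \text{End}(V_y) = \text{End}(\bigotimes_{y \in \mathcal{N}} V_y) \cong \text{End}(W) = \mathcal{A}_x$, settling that direction immediately.

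For the forward implication, I would proceed in steps. First, each $\mathcal{D}_{x-y,x}$ is a finite-dimensional $*$-subalgebra of $\mathcal{A}_x \cong \text{End}(W)$: the algebra $\mathcal{R}_{x-y} = R^\dagger \mathcal{A}_{x-y} R$ is $*$-closed because conjugation by a unitary preserves adjoints, and the intersection of $*$-subalgebras is $*$-closed. For distinct $y, y' \in \mathcal{N}$ the local algebras $\mathcal{A}_{x-y}$ and $\mathcal{A}_{x-y'}$ commute, hence so do their images $\mathcal{R}_{x-y}$ and $\mathcal{R}_{x-y'}$, and thus $\mathcal{D}_{x-y,x}$ and $\mathcal{D}_{x-y',x}$ commute. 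Second, each $\mathcal{D}_{x-y,x}$ is a factor: any $c$ in its center commutes with $\mathcal{D}_{x-y,x}$ by definition and with every $\mathcal{D}_{x-y',x}$ for $y' \neq y$ by the previous step, so $c$ commutes with $\text{span}(\prod_y \mathcal{D}_{x-y,x}) = \mathcal{A}_x = \text{End}(W)$, which forces $c \in \mathbb{C}\,\mathbb{I}$.

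Third, enumerate $\mathcal{N} = \{y_1, \ldots, y_k\}$ and apply the standard structure theorem for faithful representations of full matrix algebras inductively. For $\mathcal{D}_{x-y_1,x}$ acting on $W$, one obtains $W \cong V_{y_1} \otimes M_1$ with $\mathcal{D}_{x-y_1,x} = \text{End}(V_{y_1}) \otimes \mathbb{I}_{M_1}$ and commutant $\mathbb{I}_{V_{y_1}} \otimes \text{End}(M_1)$. For $j \geq 2$, commutation with $\mathcal{D}_{x-y_1,x}$ places $\mathcal{D}_{x-y_j,x}$ inside $\mathbb{I}_{V_{y_1}} \otimes \text{End}(M_1)$, where it still acts as a factor (by Step 2) and faithfully on $M_1$. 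Iterating produces $W \cong V_{y_1} \otimes \cdots \otimes V_{y_k} \otimes M_k$ with each $\mathcal{D}_{x-y_j,x}$ acting as $\text{End}(V_{y_j})$ on its leg and as $\mathbb{I}$ on the others. Fourth, $\text{span}(\prod_j \mathcal{D}_{x-y_j,x}) = \text{End}(V_{y_1}) \otimes \cdots \otimes \text{End}(V_{y_k}) \otimes \mathbb{I}_{M_k}$; equating this with $\text{End}(W)$ forces $\dim M_k = 1$, giving $W \cong \bigotimes_{y \in \mathcal{N}} V_y$ and the required isomorphism $S$.

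The main obstacle I anticipate is the inductive bookkeeping in the third step: at each stage one must invoke the commutant of the previously placed subalgebras to pin the new $\mathcal{D}_{x-y_j,x}$ into a single tensor leg of the residual space $M_{j-1}$, and then re-verify that its restriction is faithful so that the structure theorem continues to apply. The center-trivialization of Step 2 is a global statement inside $\text{End}(W)$, and care is needed to confirm that it transports correctly under each successive identification; once this is handled cleanly, the dimension count in Step 4 completes the tensor factorization.
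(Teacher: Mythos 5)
You have proved the wrong statement. What you have written is essentially a proof of Theorem~\ref{thmS} (via Lemma~\ref{lemmaS}) --- the equivalence between the spanning condition and the existence of the tensor factorization $S: W \to \bigotimes_{z\in\mathcal{N}} V_z$. That equivalence is the input to Corollary~\ref{corS}, not its content. Your argument for it is sound and in fact takes a slightly different route from the paper's: you observe that each $\mathcal{D}_{x-y,x}$ has trivial center (because a central element commutes with $\mathrm{span}(\prod_y \mathcal{D}_{x-y,x}) = \mathrm{End}(W)$) and then iterate the structure theorem for unital $*$-factors in $\mathrm{End}(W)$, whereas the paper decomposes $W$ as a completely reducible module over $T_m = \mathrm{span}(\prod_{t<m} M_t)$ and uses the duality/double-commutant machinery of Appendix~A to force a single isotypic summand. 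Your factor-based route is arguably more economical. From your construction, parts~\ref{cor1} and~\ref{cor2} of the corollary do fall out immediately: the dimension count $\dim M_k = 1$ in your Step~4 is exactly $d_W = \prod_{z\in\mathcal{N}} d_{V_z}$.

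The genuine gap is part~\ref{cor3}, which you never address. It concerns a different family of algebras: $\mathcal{D}_{x,x+k} = \mathcal{R}_x \cap \mathcal{A}_{x+k}$ are the pieces of the single evolved algebra $\mathcal{R}_x = R^\dag \mathcal{A}_x R$ that land in the various \emph{neighboring} cells $x+k$, not the pieces of the neighbors' images landing in $x$. The claim $\mathcal{R}_x = \mathrm{span}(\prod_{k\in\mathcal{N}} \mathcal{D}_{x,x+k})$ does not follow from your tensor decomposition of a single cell; it needs translation invariance. The paper's argument is: by Theorem~\ref{thmS} applied at cell $x+k$ (with the same spaces $V_z$ for every cell, which is where translation invariance enters), $\mathcal{D}_{x,x+k} = \mathcal{D}_{(x+k)-k,\,x+k} \cong \mathrm{End}(V_k)\otimes\bigotimes_{z\neq k}\mathbb{I}_{V_z}$ inside $\mathcal{A}_{x+k}$; these algebras commute for distinct $k$ since they live in distinct cells, so $\mathrm{span}(\prod_k \mathcal{D}_{x,x+k}) \cong \bigotimes_k \mathrm{End}(V_k)$, which has dimension $d_W^2$; this span is contained in $\mathcal{R}_x$ by the definition of $\mathcal{D}_{x,x+k}$ as an intersection with $\mathcal{R}_x$; and since $\mathcal{R}_x = R^\dag\mathcal{A}_x R \cong \mathrm{End}(W)$ also has dimension $d_W^2$, the containment is an equality. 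Without some version of this dimension-count argument, part~\ref{cor3} is unproven.
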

\begin{proof} \ref{cor1} and~\ref{cor2} are obvious from the  isomorphism $S$ in   Theorem~\ref{thmS}. For the proof of~\ref{cor3}, we note that by definition of $\mathcal{D}_{x,{x+k}}$~\eqref{Projectiondef}, $\mathcal{R}_x \supset  \begin{rm}{span }\end{rm}(\prod_{k \in  \mathcal{N}} \mathcal{D}_{x,{x+k}})$. But from  Theorem~\ref{thmS}, $\begin{rm}{span }\end{rm}(\prod_{k \in  \mathcal{N}} \mathcal{D}_{x,{x+k}}) \cong \bigotimes_{k \in  \mathcal{N}}\begin{rm}{End}\end{rm}(V_{k}) \cong \begin{rm}{End}\end{rm}(W)$. We already know  that $\mathcal{R}_x  =  R^\dag \mathcal{A}_x R \cong  \begin{rm}{End}\end{rm}(W)$. Hence the conclusion.
\end{proof}

This corollary shows that if the conditions of Theorem~\ref{thmS} are met then the algebra $\mathcal{A}_x$ is isomorphic to an algebra generated by a set of commuting algebras each of which is the endomorphisms (full matrix algebra) of a single tensor factor of the space $W$. 

The following  lemma  shows that if there is a  set of pairwise commuting self-adjoint algebras that generate $\begin{rm}{End}\end{rm}(W)$, then there is a tensor product decomposition of the space $W$ such that each of the commuting self-adjoint  algebras is the set of endomorphisms (full matrix algebra) of one of the tensor factors. This  will be useful in the proof of Theorem~\ref{thmS}. We note that the case with a pair of such algebras has been treated in Theorem $3$ in~\cite{anw:odqca} and the proof given there has been derived based on the results in~\cite{g:masdptc}. We present a different proof of the more general case primarily based on the material  in~\cite{wall:sri},  included in Appendix~\ref{apprt} and~\ref{saalg}.
\begin{lemma} \label{lemmaS}
Let  $\{M_k : M_k \subset \begin{rm}{End}\end{rm}(W)\}^m_{k =1}$, be a finite set of distinct pairwise commuting self-adjoint algebras, each  containing the identity operator, such that  $\begin{rm} span  \end{rm}(\prod^m_{k =1} M_k) = \begin{rm}{End}\end{rm}(W)$.   Then there is a vector space isomorphism:
\begin{equation*}
S_W :    W  \longrightarrow  \bigotimes^m_{k =1} V_k
\end{equation*}
for some vector spaces $\{V_k\}^m_{k =1}$. Under the isomorphism $S_W$,  for each $1 \leq y \leq m$:
\begin{equation*}
M_y  \cong  \begin{rm}{End}\end{rm}(V_y) \otimes \bigotimes_{k  \in \{1,\ldots,m\}\setminus \{y\}} \mathbb{I}_{V_k}  
\end{equation*}
\end{lemma}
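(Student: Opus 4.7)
The plan is to prove the lemma by induction on $m$, with the base case $m=2$ doing the real work via the double-centralizer theorem for finite-dimensional $*$-closed subalgebras of $\text{End}(W)$. For $m = 1$ the statement is vacuous since $M_1 = \text{End}(W)$ by hypothesis.

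For the base case $m=2$, I would first invoke the structure theory of finite-dimensional semisimple algebras (as developed in Appendices~\ref{apprt} and~\ref{saalg}). Because $M_1$ is a $*$-closed subalgebra of $\text{End}(W)$ containing $\mathbb{I}$, it is semisimple, so $W$ decomposes as an $M_1$-module into isotypic components
\begin{equation*}
W \;\cong\; \bigoplus_{i} V_{i}\otimes U_{i},
\end{equation*}
with $M_1$ acting as $\bigoplus_i \text{End}(V_i)\otimes \mathbb{I}_{U_i}$ and the commutant $M_1'$ equal to $\bigoplus_i \mathbb{I}_{V_i}\otimes \text{End}(U_i)$. Since $M_2$ commutes with $M_1$, we have $M_2\subset M_1'$, hence
\begin{equation*}
\text{span}(M_1\cdot M_2)\;\subset\;\bigoplus_{i}\text{End}(V_{i})\otimes \text{End}(U_{i}),
\end{equation*}
which is a proper subspace of $\text{End}(W)$ unless the direct sum has a single summand. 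The hypothesis $\text{span}(M_1\cdot M_2)=\text{End}(W)$ therefore forces exactly one isotypic component: $W\cong V_1\otimes V_2$ with $M_1=\text{End}(V_1)\otimes\mathbb{I}_{V_2}$ and $M_1'=\mathbb{I}_{V_1}\otimes\text{End}(V_2)$. Equating dimensions, $\text{span}(M_1 \cdot M_2) = \text{End}(W)$ and $M_2 \subset M_1'$ force $M_2 = M_1'$, giving the required decomposition.

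For the inductive step, suppose the lemma holds for any set of $m-1$ algebras satisfying the hypotheses. Define $N := \text{span}(\prod_{k=2}^{m} M_k)$. Using pairwise commutativity, $N$ is a subalgebra (products of elements can be regrouped factor-by-factor) containing $\mathbb{I}$, and it is self-adjoint because
\begin{equation*}
(a_{2}a_{3}\cdots a_{m})^{\dag}=a_{m}^{\dag}\cdots a_{2}^{\dag}=a_{2}^{\dag}\cdots a_{m}^{\dag}\in N.
\end{equation*}
Moreover $N$ commutes with $M_1$, and $\text{span}(M_1\cdot N)=\text{End}(W)$ by hypothesis. Applying the base case to the pair $(M_1, N)$ yields $W\cong V_1\otimes W_1$ under which $M_1\cong\text{End}(V_1)\otimes\mathbb{I}_{W_1}$ and $N\cong\mathbb{I}_{V_1}\otimes\text{End}(W_1)$. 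Via this isomorphism, each $M_k$ for $k\geq 2$ sits inside $\mathbb{I}_{V_1}\otimes\text{End}(W_1)$ and so identifies with a $*$-closed subalgebra $\widetilde{M}_k\subset\text{End}(W_1)$ containing the identity. The algebras $\{\widetilde{M}_k\}_{k=2}^{m}$ pairwise commute and satisfy $\text{span}(\prod_{k=2}^{m}\widetilde{M}_k)=\text{End}(W_1)$ since this equality lifts back to $N=\mathbb{I}_{V_1}\otimes\text{End}(W_1)$. By the inductive hypothesis, $W_1\cong\bigotimes_{k=2}^{m}V_k$ with $\widetilde{M}_k\cong\text{End}(V_k)\otimes\bigotimes_{l\neq k}\mathbb{I}_{V_l}$, and composing the two tensor decompositions gives the desired isomorphism $S_W$.

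The main obstacle is the base case, specifically the verification that the span condition rules out multiple isotypic components and then pins down $M_2$ uniquely as $M_1'$. Both rely essentially on the Wedderburn/double-centralizer structure for self-adjoint algebras, which is the content of the appendices. In the induction, the subtle step is ensuring that $N$ is itself a self-adjoint subalgebra so that the base case applies — this is where pairwise commutativity is indispensable, since without it the span of products of $*$-closed algebras need not be $*$-closed.
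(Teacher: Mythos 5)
Your proof is correct and follows essentially the same route as the paper's: the two-algebra case is settled by the isotypic/double-commutant decomposition of a self-adjoint subalgebra, which the span hypothesis collapses to a single isotypic summand, and the general case follows by peeling off one algebra at a time and recursing. The only presentational difference is that you group $M_2,\ldots,M_m$ into $N$ and recurse inside the tensor factor $W_1$, whereas the paper groups $M_1,\ldots,M_{m-1}$ into $T_m$ and recurses on a descending chain of subspaces $Y_m\supset Y_{m-1}\supset\cdots$ of $W$; the one point you should make explicit, as the paper does for its restricted actions, is that the induced algebras $\widetilde{M}_k\subset\mathrm{End}(W_1)$ remain self-adjoint for an inner product on $W_1$ transported through the isomorphism $W\cong V_1\otimes W_1$, so that the inductive hypothesis genuinely applies.
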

\begin{proof} If $m=1$, there is nothing to prove. So  assume $m > 1$. Consider $M_m$ from the set $\{M_k : M_k \subset \begin{rm}{End}\end{rm}(W)\}^m_{k=1}$. Let ${T_m} = \text{span}(\prod^{m-1}_{t=1} M_t)$. Then we know that  ${T_m}$ and $M_m$ are mutually commuting self-adjoint subalgebras,  each  containing the identity operator. By Proposition~\ref{propSA},  $W$ is then a completely reducible ${T_m}$-module.  Corollary~\ref{corSS} then implies that there exists a finite set  of irreducible ${T_m}$-modules (here the index set $\{j : 1\leq j \leq r\}$ enumerates  the finite (by Corollary~\ref{corSS})  set  of equivalence classes of irreducible representations of ${T_m}$, denoted   $\widehat{T}_m$.  Each value of $j$ corresponds to a distinct class $\lambda \in \widehat{T}_m$) that we denote $\{V_j \}^r_{j=1}$, such that  if we denote the set of their multiplicity spaces: $\{U_j =  \begin{rm}{Hom}\end{rm}_{{T_m}}(V_j,W)\}^r_{j=1}$, then we have    a   ${T_m}$-module isomorphism $S_m$: 
\begin{align*} 
S_m: \bigoplus^r_{j=1}  U_j \otimes V_j &\longrightarrow     W \\
\sum^r_{j=1}  u_j \otimes  v_j &\mapsto  \sum^r_{j=1}  u_j( v_j) \nonumber
\end{align*}
Under this isomorphism:
\begin{equation*}
T_m = S_m\big(\bigoplus^r_{j=1}\mathbb{I}_{U_j} \otimes \begin{rm}{End}\end{rm}(V_j)   \big) {S}^{-1}_{m} 
\end{equation*}
\begin{equation*}
\text{Comm}({T_m}) = S_m \big( \bigoplus^r_{j=1} \begin{rm}{End}\end{rm}(U_j)   \otimes  \mathbb{I}_{V_j} \big) S_m^{-1} 
\end{equation*}
where $\text{Comm}({T_m})$ is the commutant of $T_m$, as defined  by~\eqref{commdef}, and  each $U_j$ is   a  $\text{Comm}({T_m})$-module under the action given by~\eqref{bmodact}. Since, by definition of $T_m$, $M_m \subset \text{Comm}({T_m})$, this implies:
\begin{equation*} 
\text{span}(M_m {T_m}) \subset \text{span}(\text{Comm}({T_m}) {T_m})  = S_m \big(\bigoplus^r_{j=1}\begin{rm}{End}\end{rm}(V_j)  \otimes \begin{rm}{End}\end{rm}(U_j)\big) S_m^{-1} \subset \text{End}(W) 
\end{equation*}
But by assumption, $\text{span}(M_m {T_m}) = \text{End}(W)$. Then the inclusions above are equalities, and there must  only be one summand in the sum above. That is, there exists some   irreducible $T_m$-module  $Y_m$, such that  if we denote  its multiplicity space: $
 U_m =  \begin{rm}{Hom}\end{rm}_{{T_m}}(Y_m,W)$,
then we have   a $T_m$-module   isomorphism $S_m$:
\begin{align} \label{Smisom}
S_m:   U_m \otimes Y_m &\longrightarrow    W \\ 
u_m\otimes  y_m &\mapsto  u_m(y_m) \nonumber
\end{align}
Furthermore, under $S_m$:
\begin{equation} \label{Tmmin1act}
{T_m}  \cong  \mathbb{I}_{U_m}  \otimes   \begin{rm}{End}\end{rm}(Y_m) 
\end{equation}
and:
\begin{equation*}
 M_m \cong   \begin{rm}{End}\end{rm}(U_m) \otimes  \mathbb{I}_{Y_m}
\end{equation*}
Here the subscript $m$  signifies that $Y_m$ is the only class of irreducible $T_m$-module in $W$, and does not correspond to the indexing $\{j : 1\leq j \leq r\}$ used above. This subscripting convention will be used for general  $k$ in the set:  $\{k: m-1 \geq k \geq 2\}$, in the arguments to follow.

In particular this proves the lemma for $m=2$: if $m=2$, observing that  $T_2 = M_1$, we can define the vector spaces, $V_2 := U_2$, and $V_1 := Y_2$, and  the isomorphism $S_W :=  {S}^{-1}_2$. Then the vector spaces $V_1$ and $V_2$, and the isomorphism $S_W$  satisfy the statement of the lemma. 

Now assume $m > 2$. Since $T_m$ is a subalgebra of $\begin{rm}{End}\end{rm}(W)$, we choose the irreducible $T_m$-module $Y_m$ in~\eqref{Smisom} to be a subspace of $W$:  $Y_m \subset W$.

Clearly the  subalgebras in the set $\{M_k\}^{m-1}_{k=1}$ are subalgebras of $T_m = \text{span}(\prod^{m-1}_{t=1} M_t)$. Also, we have that $T_m = \begin{rm}{End}\end{rm}(Y_m)$, by the definition of the  $T_m$ action on $Y_m$, and~\eqref{Tmmin1act}.
Let us consider the restricted actions of the  subalgebras in the set $\{M_k\}^{m-1}_{k=1}$ on $Y_m$. These restricted actions    comprise a set of   mutually commuting, self-adjoint  (for the inner product on $W$ restricted to $Y_m$) subalgebras of $T_m=\begin{rm}{End}\end{rm}(Y_m)$, each of which contains the identity operator on $Y_m$.        

We proceed by induction for  this part of the argument.   Let $k$ be in the set: $m-1 \geq k \geq 3$. Let $T_{k} = \text{span}(\prod^{k-1}_{t=1} M_t)$.      By way of induction, assume that there exists a descending sequence of subspaces of $Y_m$, denoted $\{Y_{k}\}^{m-1}_{k=3}$ satisfying: $Y_m\supset Y_{m-1} \supset \ldots \supset Y_3$, such that each $Y_k$ is an irreducible  $T_k$-module under the restricted $T_k$  action.   Since $ {T_m} \supset \ldots \supset T_3$ is a descending sequence of subalgebras, therefore,    ${T_{k}}$  can be restricted to act  on $Y_{k+1}$.  Let $U_{k}  =  \begin{rm}{Hom}\end{rm}_{T_{k}}(Y_{k},Y_{k+1})$.  Also assume  that    there is a  $T_{k}$-module isomorphism:
\begin{align} \label{Stisom}
{S}_k :   U_{k} \otimes Y_{k} &\longrightarrow    Y_{k+1} \\
u_{k} \otimes  y_{k} &\mapsto  u_{k}( y_{k}) \nonumber
\end{align}
where $u_{k} \in U_{k}$, and $y_{k} \in Y_{k}$. 
Under this isomorphism, assume that: 
\begin{equation} \label{Tkact}
{T_{k}}  \cong  \mathbb{I}_{U_{k}}  \otimes   \begin{rm}{End}\end{rm}(Y_{k}) 
\end{equation}
and:
\begin{equation*}
 M_{k} \cong   \begin{rm}{End}\end{rm}(U_{k}) \otimes  \mathbb{I}_{Y_{k}}
\end{equation*}

Now let us consider the terminating case when $k=2$.     The  subalgebras  $M_1$, $M_2$  are subalgebras of $T_3 = \text{span}(M_1 M_2)$. By induction assumption~\eqref{Tkact}, under restricted $T_3$ action, $T_{3} = \begin{rm}{End}\end{rm}(Y_{3})$. Restricted actions of the subalgebras $M_1$, $M_2$   comprise a set of   mutually commuting, self-adjoint  (for the inner product on $W$ restricted to $Y_{3}$) subalgebras of $T_{3} = \begin{rm}{End}\end{rm}(Y_{3})$, and each of them contains the identity operator on $Y_3$.   Also, $T_2 = M_1$. This case then reduces to the case $m=2$, that we have already proved before stating the  induction assumption.  The $m=2$ result then implies that   for the restricted $M_1$ action   on $Y_3$,  there is some irreducible $M_1$-module  $Y_{2}$, such that  if we denote  its multiplicity space: $U_{2} =  \begin{rm}{Hom}\end{rm}_{M_1}(Y_{2},Y_{3})$, we have  an $M_1$-module isomorphism ${S}_2$:
\begin{align} \label{S1isom}
{S}_2 :   U_{2} \otimes Y_{2} &\longrightarrow    Y_3 \\
u_{2} \otimes  y_{2} &\mapsto  u_{2}( y_{2}) \nonumber
\end{align}
where $u_{2} \in U_{2}$, and $y_{2} \in Y_{2}$.  Under the isomorphism ${S}_2$, we have  that: 
\begin{equation*}
M_1 \cong  \mathbb{I}_{U_{2}}  \otimes   \begin{rm}{End}\end{rm}(Y_{2}) 
\end{equation*}
and:
\begin{equation*}
 M_2 \cong   \begin{rm}{End}\end{rm}(U_{2}) \otimes  \mathbb{I}_{Y_{2}}
\end{equation*} 

Using the isomorphisms~\eqref{Stisom} of the induction step, with~\eqref{Smisom}, and~\eqref{S1isom},  implies that $S_m$ is given as:
\begin{align*} 
S_m:   U_m\otimes U_{m-1}  \otimes \cdots \otimes U_{2} \otimes Y_2 &\longrightarrow    W \\
u_{m}\otimes u_{m - 1} \otimes \cdots \otimes u_{2} \otimes y_2 &\mapsto  u_m(u_{m-1}(\ldots (u_2(y_2)))) \nonumber
\end{align*}
where $u_{k} \in U_{k}$,  for $2 \leq k \leq m$, and $y_{2} \in Y_{2}$. 

Define the vector spaces $V_k := U_{k}$, for $2 \leq k \leq m$, and $V_1 := Y_2$, and the isomorphism $S_W := {S}^{-1}_m$.  Then the vector spaces $\{V_k\}^m_{k=1}$, and the isomorphism $S_W$  satisfy the statement of the lemma.  The proof shows that the order in which  the subalgebras $\{M_k\}^{m}_{k=1}$ are enumerated does not affect the conclusion. One can also prove this lemma via a slightly more general route, noting that  the subalgebras involved are semisimple,  which follows by Corollary~\ref{sass},  and by using the semisimple version of the Double Commutant Theorem, Theorem~\ref{dblctthm}. The approach  in the given proof is more specific.

\end{proof}
\begin{remarknonum} We observe that if some of the subalgebras $M_k$ in  Lemma~\ref{lemmaS} are multiples of the identity operator, then the corresponding vector spaces $V_k$ will be (trivial) one-dimensional spaces.
\end{remarknonum}

\begin{proof}[Proof of Theorem \ref{thmS}]
 
Let us assume that   $\mathcal{A}_x = \begin{rm}{span }\end{rm}(\prod_{y \in  \mathcal{N}} \mathcal{D}_{{x-y},x})$.     Fix $x \in \mathbb{Z}^n$, let $y \in \mathcal{N}$.  The subalgebras $\mathcal{D}_{x-y,x}$ are   self-adjoint (as $\mathcal{R}_{x-y}$ are self-adjoint) subalgebras of $\mathcal{A}_x$,  and they pairwise commute (as $\mathcal{R}_{x-y}$  pairwise commute), and each contains the identity of $\mathcal{A}_x$. Hence, by Lemma~\ref{lemmaS},  there exists a  isomorphism of vector spaces:
\begin{equation*}
S_x :   W \longrightarrow \bigotimes_{k \in \mathcal{N}} V_{x-k,x}
\end{equation*}
for some vector spaces $\{V_{x-k,x}\}_{k \in \mathcal{N}}$. Under the isomorphism $S_x$, for all  $k \in \mathcal{N}$:
\begin{equation*}
\mathcal{D}_{x-k,x} \cong  \begin{rm}{End}\end{rm}(V_{x-k,x}) \otimes \bigotimes_{z \in \mathcal{N}, z \neq k} \mathbb{I}_{V_{x-z,x}} 
\end{equation*}

By translation invariance of $R$,   the isomorphisms $S_x$ can be taken to be the same and  the sets of vector spaces $\{V_{x-k,x}\}_{k \in \mathcal{N}}$,  for all $x \in \mathbb{Z}^n$,  can be taken to be the same. Therefore, there exists a   isomorphism of vector spaces, $S$:
\begin{equation*}
S :   W \longrightarrow \bigotimes_{z \in \mathcal{N}} V_{z}
\end{equation*}
for some vector spaces $\{V_{z}\}_{z \in \mathcal{N}}$. Under the isomorphism $S$, for all  $y \in \mathcal{N}$:
\begin{equation*}
\mathcal{D}_{x-y,x} \cong  \begin{rm}{End}\end{rm}(V_y) \otimes \bigotimes_{z \in \mathcal{N}, z \neq y} \mathbb{I}_{V_{z}} 
\end{equation*}

For the converse, let us assume that  there exists, for some vector spaces $\{V_{z}\}_{z \in \mathcal{N}}$,  an isomorphism $S$: 
\begin{equation*}
S :   W \longrightarrow \bigotimes_{z \in \mathcal{N}} V_{z}
\end{equation*}
such that
\begin{equation*}
\mathcal{D}_{x-y,x} \cong  \begin{rm}{End}\end{rm}(V_y) \otimes \bigotimes_{z \in \mathcal{N}, z \neq y} \mathbb{I}_{V_{z}} 
\end{equation*}
then:
\begin{align*}
{\rm span}\big(\prod_{y\in\mathcal{N}}\mathcal{D}_{x-y,x}\big) &=  S^{-1}{\rm span}\bigg(\prod_{y\in\mathcal{N}}\big( \begin{rm}{End}\end{rm}(V_y)  \otimes \bigotimes_{z \in \mathcal{N}, z \neq y} \mathbb{I}_{V_{z}} \big)\bigg)S\\
&=S^{-1}\big(\bigotimes_{z \in \mathcal{N}} \begin{rm}{End}\end{rm}(V_z)\big)S = \mathcal{A}_x
\end{align*}
\end{proof}

Theorem~\ref{thmS} establishes an equivalence between the fact that the algebra of operators on a single cell is generated by those pieces of evolved subalgebras that are localized on the cell, and a tensor product structure of the single cell Hilbert space. In the following we show that Theorem~\ref{thmS} enables the classification of a subset of QCAs as QLGAs. 

\subsection{Locally equivalent QCA} \label{loceqv}
We note that, just as changing the specific choice of symbols in the alphabet  of a classical CA does not change the dynamics, any local isomorphism of the cell Hilbert space and the global time evolution operator of a QCA results in equivalent dynamics. The local isomorphisms of the cell Hilbert space are simply the unitary group of the appropriate dimension. It is therefore only dimension $d$ of the cell Hilbert space that defines the local Hilbert space $W$ because all Hilbert spaces of the same finite dimension $d$ are isomorphic to $\mathbb{C}^d$. For any QCA we may therefore define a local equivalence class of QCA defined by a local Hilbert space $W$ of finite dimension, all local unitary operators on $W$ and a global evolution $R$ on $\mathcal{H}_{\mathcal{C}}$. 

Let us define a local transformation of the Hilbert space of finite configurations. Given a local unitary operator $U$  on $W$,   the image of the symbol basis $B_Q$~\eqref{BQ}, denoted $B_{Q,U}$,  is:
\begin{equation} \label{BQU}
B_{Q,U} = U(B_Q) = \{U\ket{q} : q\in Q\}
\end{equation}
One observes that the  inner product on $B_Q$    is preserved on the image $B_{Q,U}$ by the local unitary transformation $U$: $\langle U q \vert U q' \rangle = \langle q \vert q' \rangle$,  for all $\ket{q}, \ket{q'} \in B_Q$.

A {\em set of locally transformed finite configurations} $\mathcal{C}_U$ is as defined in Definition~\ref{sofc} with $B_Q$ replaced by $B_{Q,U}$. Then the inner product on $W$~\eqref{sofcip} restricted to $B_{Q,U}$, naturally induces an inner product on $\mathcal{C}_U$, hence on $\text{span}(\mathcal{C}_U)$. A {\em locally transformed Hilbert space of finite configurations}, denoted by $\mathcal{H}_{\mathcal{C}_U}$, is the ${\ell}^2$ completion of $\text{span}(\mathcal{C}_U)$, as in Definition~\ref{hofc},  under the  norm induced by the inner product on $\text{span}(\mathcal{C}_U)$. 

Let $\tilde U$ be the local transformation that maps $\mathcal{H}_{\mathcal{C}}$ to $\mathcal{H}_{\mathcal{C}_U}$:
\begin{equation*} 
\tilde U:  {\mathcal{H}}_{\mathcal{C}} \longrightarrow \mathcal{H}_{\mathcal{C}_U}
\end{equation*}
This map is defined on $\mathcal{C}$ as:
\begin{align*} 
\tilde U: \mathcal{C} &\longrightarrow \mathcal{C}_U\\
\bigotimes_{x \in \mathbb{Z}^n} \vert {c}_{x} \rangle  &\mapsto \bigotimes_{x \in \mathbb{Z}^n} U(\vert {c}_{x} \rangle) \nonumber
\end{align*}
and  extended to  ${\mathcal{H}}_{\mathcal{C}}$. 

Given a  global evolution operator $R$ on $\mathcal{H}_{\mathcal{C}}$ (we assume  the neighborhood $\mathcal{N}$  in the definition of $R$), a locally transformed global evolution operator $R_U$ on $\mathcal{H}_{\mathcal{C}_U}$ is defined as:
\begin{equation} \label{RUdef}
R_U := {\tilde U}  R {\tilde U}^\dag
\end{equation}
The set of locally transformed Hilbert spaces of finite configurations is: 
\begin{equation*}
L_W := \{\mathcal{H}_{\mathcal{C}_U} :  U \text{ is a unitary operator on } W\}
\end{equation*}
Now consider the set of pairs
\begin{equation*}
E_W := \{(R,\mathcal{H}_{\mathcal{C}}) :  \mathcal{H}_{\mathcal{C}} \in L_W,  R \text{ is a global evolution operator on } \mathcal{H}_{\mathcal{C}}\}
 \end{equation*}
We say that the pair $(R,\mathcal{H}_{\mathcal{C}}) \in E_W$ is \textit{equivalent} to $(R',\mathcal{H}_{\mathcal{C}'}) \in E_W$, if $\mathcal{H}_{\mathcal{C}'} =  \mathcal{H}_{\mathcal{C}_U}$, and $R' = R_U$, for some unitary operator $U$ on $W$. This defines an equivalence relation on the set of pairs  $E_{W}$. Therefore,  the pair   $(R,\mathcal{H}_{\mathcal{C}})$ is a representative of the  equivalence class $[(R,\mathcal{H}_{\mathcal{C}})] =\{(R_U,\mathcal{H}_{\mathcal{C}_U}) :  U \text{ is a unitary operator on } W\}$. 

\subsection{Separability of the quiescent symbol and the propagation operator} \label{sqsbl}

Assume there is an isomorphism $S$ as in Theorem~\ref{thmS}.  Denote $\hat W = S(W) = \bigotimes_{z \in \mathcal{N}} V_z$.   
$\hat W$ can   be identified with $\mathbb{C}^{d_{W}}$ and imbued with an inner product. We choose this inner product such that the image of the symbol basis $B_Q$~\eqref{BQ}  under $S$, $S(B_Q) = \{S(\vert q \rangle) : q \in Q\}$ is an orthonormal set. With this inner product, $\hat W$ becomes a Hilbert space. Note that this basis of $\hat W$, which is the image of $B_Q$, will not in general consist of simple tensors.  In addition, imbuing each $V_k$ with an inner product and then constructing an inner product on $\hat W$ may result in an inner product on $\hat W$ which is inconsistent with the requirement that the image of the symbol basis be orthonormal.

For each $z\in\mathcal{N}$, let: 
\begin{equation} \label{Vnu0basisdef}
\{\vert k_z \rangle\}_{1\leq k_z \leq d_{V_z}}
\end{equation}
be a basis of $V_{z}$.  Use  $\textbf{k}$ to label   elements of  the Cartesian product $\prod_{z \in \mathcal{N}} \{1,..,d_{V_{z}}\}$.  A natural basis for the Hilbert  space $\hat W = \bigotimes_{z \in \mathcal{N}} V_z$ can be written:
\begin{equation} \label{Bnu0def}
B_{\mathcal{N}} = \{ \bigotimes_{z \in \mathcal{N}} \vert \textbf{k}(z) \rangle : \textbf{k} \in \prod_{z \in \mathcal{N}} \{1,..,d_{V_{z}}\} \}
\end{equation}
The notation $\textbf{k}(z)$ stands for the $z$-th coordinate element of  $\textbf{k} \in \prod_{z \in \mathcal{N}} \{1,..,d_{V_{z}}\}$. And where:
\begin{equation} \label{kzdef}
\vert \textbf{k}(z) \rangle   \in  V_z
\end{equation}
We shall refer to an individual tensor factor within a cell, i.e., an element of $V_z$ for some $z \in \mathcal{N}$,  as a \textit{component}.    The notation here is just as for the case of a classical LGA in equation~(\ref{classk}). In the classical case different lattice vectors $z$ can have different alphabets. Here, the dimensions of the $V_z$ may differ, and hence they have distinct bases. In the case where $V_z=\mathbb{C}^2~\forall~z$ the basis of $\hat W$ that we obtain corresponds to the occupation number representation for systems of particles. This representation is used widely, particularly in quantum simulation of fermionic systems~\cite{Zanardi, qs:science, JSPJL}.

We add $\vert \hat{q}_0 \rangle= S(\vert q_0 \rangle)$ to the basis $B_{\mathcal{N}}$ and define:
\begin{equation} \label{hatQdef}
\hat Q = B_{\mathcal{N}} \cup \{\vert \hat{q}_0 \rangle \}.
\end{equation}
This is necessary because $\vert \hat{q}_0 \rangle$ may not be an element of the basis $B_{\mathcal{N}}$, in which case we can express it as:
\begin{equation*}
\vert {\hat{q}_0} \rangle  = \sum_{\substack{\textbf{k} \in \prod_{j \in \mathcal{N}} \{1,..,d_{V_j}\} }} \omega({\textbf{k}}) \bigotimes_{z \in \mathcal{N}} \vert \textbf{k}(z) \rangle  
\end{equation*}
where  $\textbf{k} \in \prod_{z \in \mathcal{N}} \{1,..,d_{V_{z}} \}$ and $\omega$ is some complex valued function:
\begin{equation*}
\omega:  \prod_{z \in \mathcal{N}} \{1,..,d_{V_{z}}\} \longrightarrow \mathbb{C}
\end{equation*}

We  then  consider the set of finite configurations, taken over every site, and define the counterparts of the set of finite configurations and the Hilbert space of finite configurations:
\begin{defn}\label{Chatsofc}
The \textit{set of finite configurations in component form}, denoted by $\hat{\mathcal{C}}$, is  the set of  simple tensor products   with only  finitely many active elements, 
\begin{equation*} \hat{\mathcal{C}} := \{ \bigotimes_{x \in \mathbb{Z}^n} \vert {\hat c}_{x} \rangle :  \vert {\hat c}_{x} \rangle  \in \hat Q,  \text{ all but finite } \vert {\hat c}_{x} \rangle  = \vert \hat{q}_0 \rangle \}
\end{equation*}
 \end{defn}

 The inner product on $\hat{\mathcal{C}}$ is induced by the inner product on $\hat W$. Let $\vert \hat c \rangle = \bigotimes_{x \in \mathbb{Z}^n} \vert {\hat c}_x \rangle, \vert {\hat c}' \rangle = \bigotimes_{x \in \mathbb{Z}^n} \vert {{\hat c}_x}' \rangle \in \hat{\mathcal{C}}$. Define the inner product of the elements $\vert {\hat c} \rangle$, $\vert {\hat c}' \rangle$:
\begin{equation*}
\langle {\hat c} \vert {\hat c}' \rangle = \prod_{x \in \mathbb{Z}^n} \langle {\hat c}_x \vert {{\hat c}_x}' \rangle
\end{equation*}
and extend it by linearity to get an inner product on  span($\hat{\mathcal{C}}$).

\begin{defn}\label{Chathofc}
The \textit{Hilbert space of finite configurations in component form},   denoted by  ${\mathcal{H}}_{\hat{\mathcal{C}}}$, is  the ${\ell}^2$ completion of $\text{span}(\hat{\mathcal{C}})$, under the norm  induced by the above inner product. 
\end{defn}

We will use the term \textit{Hilbert space  of an  individual cell} $x$ to  simply mean the set in which $\vert {\hat{c}}_{x} \rangle$ can  take any value, without violating the definition of $\hat{\mathcal{C}}$. The tensor factors corresponding to the Hilbert spaces  of individual cells  are indexed by the  \textit{cell index} $x \in \mathbb{Z}^n$. Thus we can index the Hilbert space of a cell $x$ by ${\hat W}_x := \hat W$. These spaces themselves are composed of tensor factors $V_z$ indexed by the \textit{component index} $z$, which runs over the neighborhood $\mathcal{N}$, i.e., $z \in \mathcal{N}$. So we index the sub-factors of ${\hat W}_x$ by $V_{z,x} := V_z$. We also need for every cell $x$, the  component coordinates  given by $\textbf{k}_x \in \prod_{z \in \mathcal{N}} \{1,..,d_{V_{z}}\}$, as in~\eqref{Bnu0def}. Then, as in~\eqref{kzdef}, $\ket{{\bf k}_x(z)}$ is a basis element of $V_{z,x}$. Also, we have that  ${\hat W}_x = \bigotimes_{z \in \mathcal{N}} V_{z,x}$. 

The picture of time evolution that we shall establish is that of a permutation on tensor factors that rearranges the components $\ket{{\bf k}_x(z)}$ among the cells $x \in \mathbb{Z}^n$,  followed by transformations acting only on each cell Hilbert space ${\hat W}_x = \bigotimes_{z \in \mathcal{N}} V_{z,x}$. We first take care to define the map that permutes the tensor factors and understand what its existence implies  about the structure of the space ${\mathcal{H}}_{\hat{\mathcal{C}}}$. These properties will be useful in proving the main theorem.

We will formally specify  the  \textit{propagation operator}, relative to the neighborhood $\mathcal{N}$, by specifying it on elements of $\hat{\mathcal{C}}$. This map acts by shifting one  component from each of the  the basis elements $\bigotimes_{z \in \mathcal{N}} \vert \textbf{k}_y(z) \rangle$ of the neighborhood cells $y \in \mathcal{N}_x$ into the corresponding component of the cell $x$. 

We see at once that to be able to  define the propagation operator on $\hat{\mathcal{C}}$ will require that the new quiescent symbol $\vert \hat{q}_0 \rangle = S(\vert q_0 \rangle) \in \bigotimes_{z \in \mathcal{N}} V_z$, is a simple tensor, i.e.,  of the form:
\begin{equation*} 
\vert \hat{q}_0 \rangle = \bigotimes_{z \in \mathcal{N}} \vert \hat{q}_{0,z} \rangle
\end{equation*}
for some  $\vert \hat{q}_{0,z} \rangle  \in V_z$. If this is not the case then we may always find a unitary operator $U$  acting on $W$, such that $S(U \vert q_0 \rangle)$ is a simple tensor. In this case we work with the locally equivalent QCA defined by the pair $(R_U, \mathcal{H}_{\mathcal{C}_U})$ in~\eqref{RUdef} of Section~\ref{loceqv}. Let us therefore assume that $\vert \hat{q}_0 \rangle$ has this property.

Under this assumption, the basis for $V_z$ in~\eqref{Vnu0basisdef} can be chosen such that $\vert \hat{q}_0 \rangle \in B_{\mathcal{N}}$. With the inclusion of $\vert \hat{q}_0 \rangle$ in  $B_{\mathcal{N}}$, the definition of $\hat Q$~\eqref{hatQdef} becomes $\hat Q = B_{\mathcal{N}}$~\eqref{Bnu0def}.  This allows us to  write an element $\vert {\hat{c}}\rangle \in \hat{\mathcal{C}}$ in the form:
\begin{equation} \label{chattrans}
\vert {\hat{c}}\rangle = \bigotimes_{x \in \mathbb{Z}^n}  \vert {\hat{c}}_x \rangle = \bigotimes_{x \in \mathbb{Z}^n} \bigotimes_{z \in \mathcal{N}} \vert \textbf{k}_x(z) \rangle
\end{equation}
where the $x$th tensor factor of an element $\bigotimes_{x \in \mathbb{Z}^n}  \vert {\hat{c}}_x \rangle \in \hat{\mathcal{C}}$ is given by:
\begin{equation} \label{tensorindex}
\vert {\hat{c}}_x \rangle  = \bigotimes_{z \in \mathcal{N}} \vert \textbf{k}_x(z) \rangle \in {\hat W}_x
\end{equation}
and where:
\begin{equation}  \label{componentindex}
\vert \textbf{k}_x(z) \rangle   \in V_{z,x} 
\end{equation}

Now we may define the  \textit{propagation operator relative to the neighborhood $\mathcal{N}$}, that we denote by $\sigma$. We combine the  cell and  component indices together in a  pair $(x,z) \in \mathbb{Z}^n \times \mathcal{N}$. As $\sigma$ permutes the cell indices $x \in \mathbb{Z}^n$, leaving the component indices $z \in \mathcal{N}$ intact, we can define $\sigma$,  on $\hat{\mathcal{C}}$, in terms of a cell-indexing function $s_\sigma: \mathbb{Z}^n \times \mathcal{N} \longrightarrow \mathbb{Z}^n$,  as:
  \begin{equation}  \label{tfdefin}
 \left . \begin{array} {cccc}
 \sigma : &\hat{\mathcal{C}}&\longrightarrow  & \hat{\mathcal{C}} \\
  &\bigotimes_{x \in \mathbb{Z}^n} \bigotimes_{z \in \mathcal{N}} \vert \textbf{k}_x(z) \rangle  &\mapsto  &\bigotimes_{x \in \mathbb{Z}^n} \bigotimes_{z \in \mathcal{N}} \vert \textbf{k}_{s_\sigma(x,z)}(z) \rangle
    \end{array}
  \right .
\end{equation} 
where $s_\sigma$ is defined as follows. 
\begin{align*}
s_\sigma: \mathbb{Z}^n \times \mathcal{N} &\longrightarrow \mathbb{Z}^n \\
(x,z) &\mapsto x+z
\end{align*}
In other words $\sigma$  moves the $z$-th component of  cell $x+z$  to the $z$-th component  of cell $x$.
 $\sigma$ is extended to a unitary, translation-invariant operator on ${\mathcal{H}}_{\hat{\mathcal{C}}}$. We can omit $s_\sigma$ and write the action of $\sigma$ as:
 \begin{equation*}  
 \left . \begin{array} {cccc}
 \sigma : 
  \bigotimes_{x \in \mathbb{Z}^n} \bigotimes_{z \in \mathcal{N}} \vert \textbf{k}_x(z) \rangle  \mapsto  \bigotimes_{x \in \mathbb{Z}^n} \bigotimes_{z \in \mathcal{N}} \vert \textbf{k}_{x+z}(z) \rangle
    \end{array}
  \right .
\end{equation*}

We note that the structure described above corresponds to the classical case in the following way. We identify the states of a classical lattice-gas with the elements of $\hat{\mathcal{C}}$. The state of a lattice-gas site is simply $\vert \hat c_x\rangle$, and the state of lattice vector $z$ is an element of the basis of $V_z$. We may think, therefore, of the tensor factors $V_z$, as Hilbert spaces of lattice vectors pointing to neighborhood site $z$. Note however that because these are quantum models arbitrary superposition states over the elements of this basis are allowed.

\subsection{The collision operator}

As we shall prove below, an evolution composed of the cell-wise isomorphism $S$ and the propagation operator $\sigma$ obeys the conditions to be a QCA. However, this is not the only possible evolution. Performing any cell-wise unitary after the isomorphism $S$ will also give us a QCA. 

We therefore formally construct  an operator $\hat F$, by  specifying it  on the basis $\hat{\mathcal{C}}$ of ${\mathcal{H}}_{\hat{\mathcal{C}}}$ in terms of some unitary operator $F$  on   $\hat W = \bigotimes_{z \in \mathcal{N}} V_z$, as follows:
\begin{align*}
\hat F: \bigotimes_{x \in \mathbb{Z}^n} \vert {\hat c}_{x} \rangle  &\mapsto \bigotimes_{x \in \mathbb{Z}^n} F(\vert {\hat c}_{x} \rangle)
\end{align*}
Then we have the following lemma that tells us what conditions ensure that $\hat F$ is defined on ${\mathcal{H}}_{\hat{\mathcal{C}}}$:
\begin{lemma} \label{Fdefm} 
$\hat F$  is defined on  $\hat{\mathcal{C}}$ as a map $\hat F: \hat{\mathcal{C}}  \longrightarrow {\mathcal{H}}_{\hat{\mathcal{C}}}$, and   can be extended to  ${\mathcal{H}}_{\hat{\mathcal{C}}}$ as unitary  and  translation-invariant operator, if and only if   $F$ has  $\vert \hat{q}_0 \rangle$ as an invariant (an eigenvector with eigenvalue one): $F \vert \hat{q}_0 \rangle =  \vert \hat{q}_0 \rangle$.
 \end{lemma}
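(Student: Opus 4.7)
The plan is to prove the two directions separately.

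For the sufficient direction, assume $F\vert\hat{q}_0\rangle = \vert\hat{q}_0\rangle$. Given any $\vert\hat{c}\rangle = \bigotimes_x \vert\hat{c}_x\rangle \in \hat{\mathcal{C}}$, only finitely many sites $x$ have $\vert\hat{c}_x\rangle \neq \vert\hat{q}_0\rangle$. Because $F$ fixes $\vert\hat{q}_0\rangle$, the image $\bigotimes_x F\vert\hat{c}_x\rangle$ still has the quiescent state at all but finitely many sites; expanding the finitely many non-trivial $F\vert\hat{c}_x\rangle$ in the basis $\hat{Q}$ shows $\hat{F}(\vert\hat{c}\rangle) \in \text{span}(\hat{\mathcal{C}}) \subset \mathcal{H}_{\hat{\mathcal{C}}}$. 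Inner product preservation on $\hat{\mathcal{C}}$ follows from the site-wise unitarity of $F$: the infinite products $\prod_x \langle F\hat{c}_x, F\hat{c}'_x\rangle$ and $\prod_x \langle \hat{c}_x, \hat{c}'_x\rangle$ both collapse to the same finite product, since the quiescent-site factors equal $1$. Linear extension and the B.L.T.\ theorem then produce an isometry $\hat{F}$ on $\mathcal{H}_{\hat{\mathcal{C}}}$. Applying the identical construction to $F^\dagger$ (which also fixes $\vert\hat{q}_0\rangle$) supplies an inverse, so $\hat{F}$ is unitary. Translation invariance is automatic because $F$ acts identically at every cell.

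For the necessary direction, assume $\hat{F}$ is well-defined on $\hat{\mathcal{C}}$ with values in $\mathcal{H}_{\hat{\mathcal{C}}}$ and extends to a unitary, translation-invariant operator. The vector $\vert\hat{q}_\infty\rangle := \bigotimes_x \vert\hat{q}_0\rangle$ lies in $\hat{\mathcal{C}}$ and is fixed by every translation $\tau_z$, so Lemma~\ref{Rinvariant} applied to $\hat{F}$ forces $\hat{F}\vert\hat{q}_\infty\rangle = e^{i\Theta_0}\vert\hat{q}_\infty\rangle$ for some $\Theta_0 \in \mathbb{R}$. By the defining formula, this image should also equal the formal tensor product $\bigotimes_x F\vert\hat{q}_0\rangle$. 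Writing $F\vert\hat{q}_0\rangle = \alpha\vert\hat{q}_0\rangle + \vert r\rangle$ with $\vert r\rangle\perp\vert\hat{q}_0\rangle$ and $|\alpha|^2 + \Vert r\Vert^2 = 1$, I would approximate via truncated vectors $\vert e_N\rangle$ obtained by applying $F$ only at sites $|x|\leq N$ and leaving $\vert\hat{q}_0\rangle$ elsewhere. Each $\vert e_N\rangle$ lies in $\text{span}(\hat{\mathcal{C}})$, and a direct computation yields $\langle e_N, e_M\rangle = \alpha^{k_{N,M}}$, where $k_{N,M}$ counts the sites with $N < |x| \leq M$. For $\{\vert e_N\rangle\}$ to be Cauchy, so that $\hat{F}\vert\hat{q}_\infty\rangle$ genuinely lies in $\mathcal{H}_{\hat{\mathcal{C}}}$, we need $\mathrm{Re}(\alpha^{k_{N,M}}) \to 1$ as $k_{N,M} \to \infty$, which together with $|\alpha| \leq 1$ forces $\alpha = 1$. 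Combined with $\Vert F\hat{q}_0\Vert = 1$ this gives $\vert r\rangle = 0$, hence $F\vert\hat{q}_0\rangle = \vert\hat{q}_0\rangle$.

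The main obstacle is the rigorous interpretation of the formal expression $\bigotimes_x F\vert\hat{q}_0\rangle$ in the necessity half: the incomplete tensor product framework of Definition~\ref{Chathofc} is essential here, and diagnosing convergence through the truncated approximants $\vert e_N\rangle$ is the key device that converts the algebraic-looking identity $\hat{F}\vert\hat{q}_\infty\rangle = \bigotimes_x F\vert\hat{q}_0\rangle$ into an honest $\ell^2$ condition on $\alpha$.
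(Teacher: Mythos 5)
Your proof is correct and follows essentially the same route as the paper's: sufficiency by observing that $F\vert\hat{q}_0\rangle=\vert\hat{q}_0\rangle$ keeps images of finite configurations in $\text{span}(\hat{\mathcal{C}})$ and that site-wise unitarity yields an inner-product-preserving, translation-invariant extension; necessity by evaluating $\hat F$ on the all-quiescent configuration and invoking Lemma~\ref{Rinvariant}. Your truncated-approximant argument in the necessity half actually supplies more detail than the paper, which simply asserts that $\bigotimes_x F\vert\hat{q}_0\rangle = e^{i\Theta_0}\bigotimes_x\vert\hat{q}_0\rangle$ forces $F\vert\hat{q}_0\rangle=\vert\hat{q}_0\rangle$; your Cauchy analysis of the $\vert e_N\rangle$ makes that implication rigorous within the incomplete tensor product framework.
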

The proof of this lemma is  included in  Appendix~\ref{ApplF}. Suppose $F \vert \hat{q}_0 \rangle =  \vert \hat{q}_0 \rangle$, then we can define:
\begin{equation*} 
\hat F:  {\mathcal{H}}_{\hat{\mathcal{C}}} \longrightarrow {\mathcal{H}}_{\hat{\mathcal{C}}}
\end{equation*}
which is the extension of the following definition on $\hat{\mathcal{C}}$ to ${\mathcal{H}}_{\hat{\mathcal{C}}}$.
\begin{align} \label{Fdefn}
\hat F:  \hat{\mathcal{C}} &\longrightarrow {\mathcal{H}}_{\hat{\mathcal{C}}}\\
\bigotimes_{x \in \mathbb{Z}^n} \vert {\hat c}_{x} \rangle  &\mapsto \bigotimes_{x \in \mathbb{Z}^n} F(\vert {\hat c}_{x} \rangle) \nonumber
\end{align}

Whereas the operator $\sigma$ is a map that moves the tensor factors $V_z$ between cells of the automata, the operator $F$ acts locally on states in $\bigotimes_{z\in\mathcal{N}} V_z$. In the previous subsection we identified the tensor factors $V_z$ with lattice-gas vectors. The operator $F$ acts separately on the states of each lattice-gas site. This operator corresponds to the collision operator of a quantum lattice-gas, as we shall show in detail below.

\subsection{The definition of a Quantum Lattice-Gas Automaton (QLGA)}~\label{qlgadef}

At this point we have developed tools that are sufficient to give a formal definition of a Quantum Lattice-Gas (QLGA). The dynamics proceed by collision and propagation, as in the classical case and as in the models defined in~\cite{bib:meyer2,bib:meyer3,bib:meyer4,bib:meyer5} and~\cite{Bog1,Bog2,Bog3}. A QLGA   is defined as follows:
\begin{enumerate}[label=(\roman{*})]
\item{A lattice $\mathcal{L} = \mathbb{Z}^n$, whose sites we shall label by $x$.}
\item{A neighborhood $\mathcal{N}$. This translates to a neighborhood $\mathcal{N}_x$ for each cell $x$, consisting of a set of cells as follows:  $\mathcal{N}_x = \{x+z | z\in\mathcal{N}\}$.}
\item{For each element of the lattice  $\mathcal{L}$ there is a site Hilbert space $\hat W = \bigotimes_{z\in\mathcal{N}} V_z$, for some finite-dimensional vector spaces $\{V_{z}\}_{z \in \mathcal{N}}$.  $\hat W$ contains a distinguished unit vector, the \textit{quiescent state} $\vert \hat{q}_0 \rangle$, which is a simple tensor:
\begin{equation*} 
\vert \hat{q}_0 \rangle = \bigotimes_{z \in \mathcal{N}} \vert \hat{q}_{0,z} \rangle{\rm,~where}~\vert \hat{q}_{0,z} \rangle  \in V_z. 
\end{equation*}
}
\item A Hilbert space of finite configurations in component form ${\mathcal{H}}_{\hat{\mathcal{C}}}$ as defined in Definition~\ref{Chathofc}, in terms of $\hat W = \bigotimes_{z\in\mathcal{N}} V_z$ and  $\vert \hat{q}_0 \rangle$.
\item A state of the automaton, $\Psi$, is an element of the Hilbert space of finite configurations in component form.
\item A propagation operator relative to the neighborhood $\mathcal{N}$,    $\sigma: {\mathcal{H}}_{\hat{\mathcal{C}}} \mapsto {\mathcal{H}}_{\hat{\mathcal{C}}}$, as defined in~\eqref{tfdefin}. 
\item A local collision operator $F$, which is a unitary operator  on the site Hilbert space $\hat W = \bigotimes_{z\in\mathcal{N}} V_z$, such that $F$ has  $\vert \hat{q}_0 \rangle$ as an invariant (an eigenvector with eigenvalue one): $F \vert \hat{q}_0 \rangle =  \vert \hat{q}_0 \rangle$.  The associated  \textit{collision operator} $\hat F: {\mathcal{H}}_{\hat{\mathcal{C}}} \mapsto {\mathcal{H}}_{\hat{\mathcal{C}}}$, in terms of $F$,  as defined in~\eqref{Fdefn}.
\item A global evolution operator $\hat R$ which consists of applying the propagation operator $\sigma$ followed by the local collision operator  at every  site, i.e., is given by:
\begin{equation*}
\hat R   =    {\hat F} \sigma 
\end{equation*}
\end{enumerate}

The tensor factors $V_z$ may be identified with the lattice vectors of a classical lattice-gas, which point to the elements of the neighborhood of a site.

\subsection{Local inner product preserving transformation to component form} Let  $\tilde S$ be the  map that takes  ${\mathcal{H}}_{\mathcal{C}}$  to ${\mathcal{H}}_{\hat{\mathcal{C}}}$, i.e. the global map that is the result of applying the local isomorphism $S$ to every cell.
\begin{equation*} 
\tilde S:  {\mathcal{H}}_{\mathcal{C}} \longrightarrow {\mathcal{H}}_{\hat{\mathcal{C}}}
\end{equation*}
this map is defined on $\mathcal{C}$ as:
\begin{align} \label{Sdefn}
\tilde S: \mathcal{C} &\longrightarrow {\mathcal{H}}_{\hat{\mathcal{C}}}\\
\bigotimes_{x \in \mathbb{Z}^n} \vert {c}_{x} \rangle  &\mapsto \bigotimes_{x \in \mathbb{Z}^n} S(\vert {c}_{x} \rangle) \nonumber
\end{align}
and  extended to  ${\mathcal{H}}_{\mathcal{C}}$.  Here $S: W \longrightarrow \hat W$ is as in Theorem~\ref{thmS}, and with the choice of inner product on $\hat W = \bigotimes_{z \in \mathcal{N}} V_z$ as discussed at the beginning of Section~\ref{sqsbl}, $S$  is an inner product preserving isomorphism of Hilbert spaces $W$ and $\hat W$. This implies $\tilde S$ is an inner product preserving isomorphism of Hilbert spaces ${\mathcal{H}}_{\mathcal{C}}$ and ${\mathcal{H}}_{\hat{\mathcal{C}}}$.

\subsection{Quantum cellular automata and quantum lattice-gases}

Now we combine the foregoing results to prove our main theorem and characterize the class of QCAs that are QLGAs. 

For a QCA  defined by $({R},\mathcal{H}_{\mathcal{C}})$, with   neighborhood $\mathcal{N}$,  let $\mathcal{D}_{{x-y},x}$ be as defined in~\eqref{Projectiondef}, where  $x \in \mathbb{Z}^n$, and $y \in  \mathcal{N}$.   Assuming that $R$ obeys  the condition  in Theorem~\ref{thmS}$: \mathcal{A}_x = \begin{rm}{span }\end{rm}(\prod_{y \in  \mathcal{N}} \mathcal{D}_{{x-y},x})$,  we can construct a locally equivalent QCA $(R',\mathcal{H}_{\mathcal{C}'})$  (as defined in Section~\ref{loceqv}), with a quiescent symbol $\vert q'_0 \rangle$, whose  image  under  the isomorphism $S$ in  Theorem~\ref{thmS},  satisfies:
\begin{equation*}  
\vert \hat{q}'_0 \rangle = S(\vert q'_0 \rangle) \in \bigotimes_{z \in \mathcal{N}} V_z,~{\rm is~a~simple~tensor,~i.e.,}~\vert \hat{q}'_0 \rangle = \bigotimes_{z \in \mathcal{N}} \vert \hat{q}'_{0,z} \rangle{\rm,~where}~\vert \hat{q}'_{0,z} \rangle  \in V_z. 
\end{equation*}
Under the condition in Theorem~\ref{thmS},  therefore, we may as well consider a  QCA $(R,\mathcal{H}_{\mathcal{ C}})$  for which $\vert \hat{q}_0 \rangle$ is a simple tensor.   That is, up to local unitary equivalence:

\begin{thm}\label{thmStructure}
$R$ is the  global evolution of a QCA  on the Hilbert space of finite configurations $\mathcal{H}_{\mathcal{ C}}$, with   neighborhood $\mathcal{N}$,  and satisfies: 
 $\mathcal{A}_x = \begin{rm}{span }\end{rm}(\prod_{y \in  \mathcal{N}} \mathcal{D}_{{x-y},x})$,  if and only if:
\begin{enumerate}[label=(\roman{*})]
\item \label{thmitem1}  There exists  an  isomorphism of vector spaces $S$:
\begin{equation*}
S :  W \longrightarrow \bigotimes_{z \in \mathcal{N}} V_{z}
\end{equation*}
for some vector spaces $\{V_{z}\}_{z \in \mathcal{N}}$.   Under the isomorphism $S$, for each $y   \in \mathcal{N}$:
\begin{equation*}
\mathcal{D}_{{x-y},x} \cong  \begin{rm}{End}\end{rm}(V_y) \otimes \bigotimes_{z \in \mathcal{N}, z \neq y} \mathbb{I}_{V_z}
\end{equation*}
Furthermore, $\hat W =  \bigotimes_{z \in \mathcal{N}} V_{z}$ can be given  an inner product such that $S$ is an inner product preserving isomorphism of Hilbert spaces.
\item  \label{thmitem2}  $R$ is given by:
\begin{equation*}
R   =    {{\tilde S}^{-1}}    {\hat F} \sigma      \tilde S
\end{equation*}
where   $\sigma$ is as in~\eqref{tfdefin}, $\hat F$ is as in~\eqref{Fdefn} in terms of a unitary map $F$ on $\bigotimes_{z \in \mathcal{N}} V_{z}$, and  $\tilde S$ is as in~\eqref{Sdefn}.
\item \label{thmitem3} $F$, in the definition of $ \hat F$,  has  $\vert \hat{q}_0 \rangle$ as an invariant: $F \vert \hat{q}_0 \rangle =  \vert \hat{q}_0 \rangle$.
\end{enumerate}
\end{thm}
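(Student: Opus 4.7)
My plan is to prove the forward direction ($\Rightarrow$) by passing to the component Hilbert space via $\tilde S$ and extracting a ``collision times propagation'' decomposition of $\tilde S R \tilde S^{-1}$, and to verify the reverse direction ($\Leftarrow$) by direct computation. Item (i) is immediate from Theorem \ref{thmS}, with the inner-product statement handled by the choice of inner product on $\hat W$ at the beginning of Section \ref{sqsbl}. After a local unitary change of basis on $W$ (Section \ref{loceqv}), we may assume $\vert \hat q_0 \rangle = S(\vert q_0\rangle)$ is a simple tensor. Set $\hat R := \tilde S R \tilde S^{-1}$, working henceforth in the component Hilbert space $\mathcal{H}_{\hat{\mathcal C}}$, and define the candidate collision $\hat F := \hat R \sigma^{-1}$; the target is to show $\hat F$ is a tensor product of local unitaries each preserving $\vert \hat q_{0,x}\rangle$.

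The key computation is that $\hat F$ normalizes every single-cell algebra. By Corollary \ref{corS} (transported through $\tilde S$), $\hat R^\dag \mathcal A_x \hat R = \bigotimes_{k \in \mathcal N}\text{End}(V_{k,x+k})$, while the definition \eqref{tfdefin} of $\sigma$ gives $\sigma\,\text{End}(V_{k,x+k})\,\sigma^{-1} = \text{End}(V_{k,x})$. Combining, $\hat F^\dag \mathcal A_x \hat F = \sigma \hat R^\dag \mathcal A_x \hat R \sigma^{-1} = \bigotimes_{k \in \mathcal N}\text{End}(V_{k,x}) = \mathcal A_x$.

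Next I extract the cellwise factorization. Since $\hat F$ normalizes the full matrix algebra $\mathcal A_x \cong \text{End}(\hat W_x)$, and every automorphism of $\text{End}(\hat W_x)$ is inner, it follows that $\hat F = (U_x \otimes \mathbb I)\, V_x$ with $U_x$ a unitary on $\hat W_x$ and $V_x$ in the commutant of $\mathcal A_x$. Iterating over disjoint cells yields $\hat F = (\bigotimes_{x \in S} U_x)\, V_{S^c}$ for every finite $S \subset \mathbb Z^n$, with $V_{S^c}$ acting only on cells outside $S$. Translation invariance of $\hat F$ (inherited from $\hat R$ and $\sigma$) forces the $U_x$'s to be translates of a single unitary $F$ on $\hat W$. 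By Lemma \ref{Rinvariant} applied to $\hat R$, and the fact that $\sigma$ fixes $\bigotimes_x \vert \hat q_0\rangle$ exactly (since $\vert \hat q_0\rangle$ is a simple tensor), $\hat F$ fixes the vacuum up to a phase; combined with the local factorization, convergence of the infinite product forces $F \vert \hat q_0\rangle = \vert \hat q_0\rangle$ after absorbing a (physically irrelevant) global phase into $R$. This gives (ii) and (iii).

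For the converse, unitarity of $R = \tilde S^{-1} \hat F \sigma \tilde S$ is immediate since each factor is unitary (using Lemma \ref{Fdefm} for $\hat F$), and translation invariance follows since every factor is translation invariant. Causality with neighborhood $\mathcal N$ reduces via $\tilde S$ to checking $\sigma^\dag \mathcal A_x \sigma \subset \bigotimes_{k \in \mathcal N_x} \mathcal A_k$, which is direct from \eqref{tfdefin}. For the algebra condition, the cellwise nature of $\hat F$ gives $\mathcal R_{x-y} \cong \sigma^\dag \mathcal A_{x-y} \sigma = \bigotimes_{z \in \mathcal N}\text{End}(V_{z,(x-y)+z})$; intersecting with $\mathcal A_x \cong \bigotimes_z \text{End}(V_{z,x})$ retains only the component where $(x-y)+z = x$, i.e.\ $z = y$, so $\mathcal D_{x-y,x} \cong \text{End}(V_{y,x})$, and spanning over $y \in \mathcal N$ recovers $\mathcal A_x$. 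The main obstacle is the factorization step above: in the separable infinite-dimensional setting, extracting cellwise product structure from a unitary that normalizes every single-cell matrix algebra requires handling the infinite tensor product, for which translation invariance and the vacuum eigenvector furnished by Lemma \ref{Rinvariant} are the essential ingredients.
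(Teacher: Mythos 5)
Your proof is correct and lands on the same decomposition, but the central factorization step is argued by a genuinely different route from the paper's. The paper does not factor $\hat R\sigma^{-1}$ via an automorphism argument; instead it transports a rank-one basis element $\hat A_x=(\vert b_x\rangle\otimes\langle b_x'\vert)\otimes\mathbb{I}_{\overline{\{x\}}}$ through $\hat R$, uses Corollary~\ref{corS}~\ref{cor3} to see that the image is a rank-one element of $\bigotimes_{k\in\mathcal{N}}\mathrm{End}(V_{k,x+k})$, rewrites it as $\sigma_x^\dag\big((\vert w_x\rangle\otimes\langle w_x'\vert)\otimes\mathbb{I}_{\overline{\{x\}}}\big)\sigma_x$, and defines the collision map directly by $T_x:\vert b_x\rangle\mapsto\vert w_x\rangle$, with unitarity of $T_x$ and the identifications $T_x=T$, $\sigma_x=\sigma$ extracted from unitarity and translation invariance of $\hat R$. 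Your route --- set $\hat F:=\hat R\sigma^{-1}$, show it normalizes every cell algebra, invoke the inner-automorphism (Skolem--Noether) property of full matrix algebras together with $\mathrm{Comm}(\hat{\mathcal{A}}_x)=\mathbb{I}_{\hat W_x}\otimes B(\mathcal{H}_{\hat{\mathcal{C}}_{\overline{\{x\}}}})$ to get $\hat F=(U_x\otimes\mathbb{I})V_x$, iterate over finite cell sets, and pin everything down with the vacuum eigenvector of Lemma~\ref{Rinvariant} --- is conceptually cleaner in that it isolates exactly where the hypothesis enters (namely that $\hat R^\dag\hat{\mathcal{A}}_x\hat R$ is the \emph{full} algebra $\bigotimes_k\mathrm{End}(V_{k,x+k})$, which is what makes the automorphism inner), at the cost of two standard but nontrivial inputs the paper avoids naming. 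The passage you flag as the main obstacle does close as you suggest: evaluating $\hat F=(\bigotimes_{x\in S}U_x)V_{S^c}$ on the product vacuum forces $U\vert\hat q_0\rangle\parallel\vert\hat q_0\rangle$ and $V_{S^c}$ to fix the co-$S$ vacuum up to a scalar, so $\hat F$ acts on every finite configuration as $e^{i\theta}\bigotimes_xF$ with $F\vert\hat q_0\rangle=\vert\hat q_0\rangle$. The residual global phase $e^{i\theta}$ you propose to absorb is a genuine (if physically irrelevant) discrepancy with the literal equality $R=\tilde S^{-1}\hat F\sigma\tilde S$, but the paper's own proof carries the same implicit assumption when it applies Lemma~\ref{Fdefm} to $\hat R\sigma^{-1}$, so you are no worse off there. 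Your converse direction coincides with the paper's.
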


Before proving the theorem, we summarize its implications.
\begin{remarknonum}
Theorem~\ref{thmStructure} proves that QCA that satisfy $\mathcal{A}_x = \begin{rm}{span }\end{rm}(\prod_{y \in  \mathcal{N}} \mathcal{D}_{{x-y},x})$  are equivalent to QLGA, and vice versa in the following sense. Part~\ref{thmitem1} of Theorem~\ref{thmStructure} shows that there is an isomorphism $S$ from the cell algebra of the QCA to the states of the lattice vectors of the QLGA. Part~\ref{thmitem2} shows that applying $S$ to the Hilbert space of each cell enables $R$ to be realized as a product of collision and propagation  operators. Hence Theorem~\ref{thmStructure} identifies QCA satisfying $\mathcal{A}_x = \begin{rm}{span }\end{rm}(\prod_{y \in  \mathcal{N}} \mathcal{D}_{{x-y},x})$ with QLGA as defined in subsection~\ref{qlgadef}. 
\end{remarknonum}

\begin{proof}[Proof of Theorem~\ref{thmStructure}]
 We first prove that if $R$ is the global evolution of a QCA,  and satisfies: 
\begin{equation*}
\mathcal{A}_x = \begin{rm}{span }\end{rm}(\prod_{y \in  \mathcal{N}} \mathcal{D}_{{x-y},x})\text{,}
\end{equation*}
 then ~\ref{thmitem1}, ~\ref{thmitem2} and \ref{thmitem3} follow. 
 
 By Theorem~\ref{thmS}, the vector space isomorphism $S$ in~\ref{thmitem1} is immediate.  We can imbue $\hat W = S(W) = \bigotimes_{z \in \mathcal{N}} V_z$ with an inner product such that $\{S(\vert q \rangle) : q \in Q\}$ is an orthonormal set (as discussed in the beginning of Section~\ref{sqsbl}). By this choice of inner product,  $\hat W$ is a Hilbert space such that $S$ is an inner product preserving isomorphism of Hilbert spaces, hence proving~\ref{thmitem1}.   
 
 By corollary~\ref{corS}~\ref{cor3}, $\mathcal{R}_x = \begin{rm}{span }\end{rm}(\prod_{z \in  \mathcal{N}} \mathcal{D}_{x,x+z})$, where the definition of $\mathcal{D}_{x,x+z}$ is as in~\eqref{Projectiondef}. It follows that:
\begin{equation} \label{mseq}
 \tilde S \mathcal{R}_x {\tilde S}^{-1}   =  \begin{rm}{span }\end{rm}(\prod_{z \in  \mathcal{N}} \tilde S \mathcal{D}_{x,x+z} {\tilde S}^{-1})
\end{equation}
where $\tilde S$ is as in~\eqref{Sdefn}. We use  the indexing from~\eqref{chattrans},~\eqref{tensorindex}, and~\eqref{componentindex}. The isomorphisms from Theorem~\ref{thmS} give us:
\begin{align*}
\mathcal{A}_x &= {\tilde S}^{-1} \big(\bigotimes_{z \in \mathcal{N}} \begin{rm}{End}\end{rm}(V_{z,x})  \big) \tilde S \\
\tilde S \mathcal{D}_{x,x+z} {\tilde S}^{-1} &=  \begin{rm}{End}\end{rm}(V_{z,x+z}) \text{,}
\end{align*}
Using the above, we expand the LHS of~\eqref{mseq}:
\begin{equation*}
 \tilde S \mathcal{R}_x {\tilde S}^{-1}  = \tilde S R^\dag \mathcal{A}_x  R {\tilde S}^{-1}  = (\tilde S R^\dag {\tilde S}^{-1}) \big(\bigotimes_{z \in \mathcal{N}} \begin{rm}{End}\end{rm}(V_{z,x})  \big) (\tilde S R {\tilde S}^{-1})
\end{equation*}
and  the RHS:
\begin{equation*}
 \begin{rm}{span }\end{rm}(\prod_{z \in  \mathcal{N}} \tilde S \mathcal{D}_{x,x+z} {\tilde S}^{-1})  =  \begin{rm}{span }\end{rm}(\prod_{z \in  \mathcal{N}} \begin{rm}{End}\end{rm}(V_{z,x+z})) = \bigotimes_{z \in  \mathcal{N}} \begin{rm}{End}\end{rm}(V_{z,x+z})
\end{equation*}
and  obtain:
\begin{equation} \label{smseq}
(\tilde S R^\dag {\tilde S}^{-1}) \big(\bigotimes_{z \in \mathcal{N}} \begin{rm}{End}\end{rm}(V_{z,x})  \big) (\tilde S R {\tilde S}^{-1})    =  \bigotimes_{z \in  \mathcal{N}} \begin{rm}{End}\end{rm}(V_{z,x+z})
\end{equation}
Let us denote $\hat R = \tilde S R {\tilde S}^{-1}$. 
\begin{align*}
\hat R =   \tilde S R {\tilde S}^{-1}  :   {\mathcal{H}}_{\hat{\mathcal{C}}} \longrightarrow  {\mathcal{H}}_{\hat{\mathcal{C}}}.
\end{align*}
$\hat R$ is clearly unitary as  $R$ is unitary and $\tilde S$  is inner product preserving ($\tilde S$  is composed of inner product preserving operator $S$ acting on every cell),  and is translation-invariant since $R$ and $\tilde S$ are translation-invariant.  

Fix $x \in \mathbb{Z}^n$. Choose a  basis, $B_{V_x}$,   of   ${\hat W}_x = \bigotimes_{z \in  \mathcal{N}} V_{z,x}$:
\begin{equation} \label{vzxbasis}
B_{V_x} = \{\vert b_x \rangle \}  \subset \bigotimes_{z \in  \mathcal{N}} V_{z,x}
\end{equation}
For instance, we can choose $B_{V_x}$ to be the product basis of ${\hat W}_x$ of the form~\eqref{Bnu0def}, letting : $B_{V_x} = \{ \bigotimes_{z \in \mathcal{N}} \vert \textbf{k}_x(z) \rangle : \textbf{k}_x \in \prod_{z \in \mathcal{N}} \{1,..,d_{V_{z}}\} \}$. 
The set $\{ \vert b_x \rangle \otimes \langle b'_x \vert : \vert b_x \rangle, \vert b'_x \rangle \in B_{V_x}\}$  is a basis of $\begin{rm}{End}\end{rm}({\hat W}_x)  = \bigotimes_{z \in  \mathcal{N}} \begin{rm}{End}\end{rm}(V_{z,x})$. Choose such a basis element of $\begin{rm}{End}\end{rm}({\hat W}_x) $ and create a  basis operator local on cell $x$: 
\begin{equation} \label{rhatdecomp0}
{\hat A}_x =(\vert b_x \rangle \otimes \langle b'_x \vert) \otimes \mathbb{I}_{\overline{\{x\}}}
\end{equation}
where $\mathbb{I}_{\overline{\{x\}}}$ is the identity on co-$\{x\}$ space (defined in Definition~\ref{coDdef} for ${\mathcal{H}}_{\mathcal{C}}$, and the definition applies equally   for ${\mathcal{H}}_{\hat{\mathcal{C}}}$ by replacing $\mathcal{C}$ with $\hat{\mathcal{C}}$). $\mathbb{I}_{\overline{\{x\}}}$ is included as part of the local operator to help make the argument clear .
\begin{equation*}
 {\hat R}^\dag {\hat A}_x \hat R     =  {\hat R}^\dag  \big((\vert b_x \rangle \otimes \langle b'_x \vert ) \otimes \mathbb{I}_{\overline{\{x\}}}\big)  \hat R  
\end{equation*}
By~\eqref{smseq}, $ {\hat R}^\dag {\hat A}_x \hat R$  is a rank one element of $\bigotimes_{z \in  \mathcal{N}} \begin{rm}{End}\end{rm}(V_{z,x+z})$, whereas, by definition,  ${\hat A}_x$ is a rank-one element of $\begin{rm}{End}\end{rm}({\hat W}_x) =  \bigotimes_{z \in  \mathcal{N}} \begin{rm}{End}\end{rm}(V_{z,x})$. Since  ${\hat R}^\dag {\hat A}_x \hat R$ is  an element of $\bigotimes_{z \in  \mathcal{N}} \begin{rm}{End}\end{rm}(V_{z,x+z})$ for all basis elements ${\hat A}_x \in \begin{rm}{End}\end{rm}({\hat W}_x)$ of the form chosen,    this implies that we can write $  {\hat R}^\dag {\hat A}_x \hat R$ as follows:
\begin{equation} \label{rhatdecomp}
  {\hat R}^\dag {\hat A}_x \hat R = \sigma^\dag_x   \big((\vert w_x \rangle \otimes \langle w'_x \vert) \otimes \mathbb{I}_{\overline{\{x\}}} \big) \sigma_x.
\end{equation}
for some $\vert w_x \rangle, \vert w'_x \rangle \in {\hat W}_x = \bigotimes_{z \in  \mathcal{N}} V_{z,x}$, and  an appropriate permutation operator $\sigma_x$.     $\sigma_x$  maps  the components of the  neighborhood elements of cell $x$, $ \bigotimes_{z \in  \mathcal{N}} V_{z,x+z}$,     to the corresponding components  of cell $x$, $\bigotimes_{z \in  \mathcal{N}} V_{z,x}$.     

We label the combined cell and  component indices together by  the pair $(t,z) \in \mathbb{Z}^n \times \mathcal{N}$.  It is necessary and sufficient that  $\sigma_x$  satisfy the following conditions:
\begin{enumerate}[label=(\alph{*})] 
\item \label{sigmaxprop1} $\sigma_x$ permutes  the $z$-th  component of a  cell  to $z$-th component of another cell (where $z \in \mathcal{N}$). That is, $\sigma_x$ permutes the cell indices  $t \in \mathbb{Z}^n$, leaving the component indices  $ z \in \mathcal{N}$ unchanged. 
\item \label{sigmaxprop2} For the cell index $t = x$, $\sigma_x$    maps the $z$-th component of the neighborhood  cell $x+z$ to the $z$-th component of cell $x$ (where $z \in \mathcal{N}$).
\end{enumerate}
To satisfy the requirement in condition~\ref{sigmaxprop1}  that $\sigma_x$ only permute  the cell indices,  keeping the component indices intact,  we use  a cell-indexing  function, $s_x: \mathbb{Z}^n \times \mathcal{N} \longrightarrow \mathbb{Z}^n$, in the definition of  $\sigma_x$.  As every element  $\vert {\hat{c}}\rangle \in \hat{\mathcal{C}}$ (Definition~\ref{Chatsofc})  is of the form $\vert {\hat{c}}\rangle = \bigotimes_{t \in \mathbb{Z}^n} \bigotimes_{z \in \mathcal{N}} \vert \textbf{k}_t(z) \rangle$~\eqref{chattrans}, we can define $\sigma_x$ as follows: 
  \begin{equation*} 
 \left . \begin{array} {cccc}
 \sigma_x : &\hat{\mathcal{C}}&\longrightarrow  & \hat{\mathcal{C}} \\
  &\bigotimes_{t \in \mathbb{Z}^n} \bigotimes_{z \in \mathcal{N}} \vert \textbf{k}_t(z) \rangle  &\mapsto  &\bigotimes_{t \in \mathbb{Z}^n} \bigotimes_{z \in \mathcal{N}} \vert  \textbf{k}_{s_x(t,z)}(z) \rangle
    \end{array}
  \right .
\end{equation*} 
The rest of the requirements in conditions~\ref{sigmaxprop1} and ~\ref{sigmaxprop2} above can be recast as restrictions on $s_x$. The remaining requirement in condition~\ref{sigmaxprop1} is that  $\sigma_x$ is a permutation. This implies that  when the component index  of the domain of $s_x$ is fixed to be some $z \in \mathcal{N}$,  $s_x$   is a bijection of the cell indices $\mathbb{Z}^n$, i.e, a permutation of cell indices.
\begin{equation*}
s_x\mid_{\mathbb{Z}^n \times z}: \mathbb{Z}^n \times z \longleftrightarrow \mathbb{Z}^n\\
\end{equation*}
Condition~\ref{sigmaxprop2} implies that   when  the cell index $t \in \mathbb{Z}^n$ of the domain of $s_x$ is  fixed to be $x$, $s_x$  is as follows:
\begin{align*}
s_x\mid_{x \times \mathcal{N}}: x \times \mathcal{N} &\longrightarrow \mathbb{Z}^n \\
(x,z) &\mapsto x+z
\end{align*} 
$\sigma_x$ thus defined on $\hat{\mathcal{C}}$ under the constraints~\ref{sigmaxprop1} and~\ref{sigmaxprop2},   is extended to  ${\mathcal{H}}_{\hat{\mathcal{C}}}$.
 
Now consider the implication of combining~\eqref{rhatdecomp0} and~\eqref{rhatdecomp}:
\begin{equation}  \label{Axbasisform}
 {\hat R}^\dag  \big((\vert b_x \rangle \otimes \langle b'_x \vert ) \otimes \mathbb{I}_{\overline{\{x\}}}\big)  \hat R   = \sigma^\dag_x  \big((\vert w_x \rangle \otimes \langle w'_x \vert) \otimes \mathbb{I}_{\overline{\{x\}}}\big) \sigma_x.
\end{equation}
We have the above equality for every $\vert b_x \rangle \otimes \langle b'_x \vert$ and for the corresponding $\vert w_x \rangle \otimes \langle w'_x \vert$. So we define a linear map $T_x$ on ${\hat W}_x =  \bigotimes_{z \in  \mathcal{N}} V_{z,x}$, by defining it on the  basis $B_{V_x}$~\eqref{vzxbasis}:
\begin{align*}
T_x: B_{V_x} &\longrightarrow {\hat W}_x \\
\vert b_x \rangle &\mapsto \vert w_x \rangle
\end{align*}
where $\vert b_x \rangle$ and $\vert w_x \rangle$ are  as in~\eqref{Axbasisform}. $T_x$ is 
 extended  to ${\hat W}_x$ by linearity. Then $T_x$ is unitary since $\hat R$ is unitary. By symmetry, we can write~\eqref{Axbasisform} as:
\begin{equation*} 
 {\hat R}^\dag  \big((\vert b_x \rangle \otimes \langle b'_x \vert ) \otimes \mathbb{I}_{\overline{\{x\}}}\big)  \hat R    = \sigma^\dag_x  \big((T_x\vert b_x \rangle \otimes \langle b'_x \vert T^{\dag}_x) \otimes \mathbb{I}_{\overline{\{x\}}}\big) \sigma_x
\end{equation*}
This is true for every $x\in \mathbb{Z}^n$. In addition, $\hat R = \tilde S R {\tilde S}^{-1}$ is  translation-invariant, which implies that all the maps $T_x$ can be taken to be the same unitary map $T$ on $\hat W = \bigotimes_{z \in  \mathcal{N}} V_{z}$, and that the map $\sigma_x$ for each $x$ is translation-invariant, i.e., each $\sigma_x$ is the  map $\sigma$ defined in~\eqref{tfdefin}.  Now define $F: \hat W \mapsto \hat W$, to be  $F := T^\dag$. Then   $\hat R =  \tilde S R {\tilde S}^{-1}$ is a  composition of  the propagation operator $\sigma$ followed by the collision operator  $\hat F$: 
\begin{equation*}
 \tilde S R {{\tilde S}^{-1}}   =          {\hat F}   \sigma
\end{equation*}
where  $\sigma$  is as defined in~\eqref{tfdefin}, and  ${\hat F}$ is a unitary map on ${\mathcal{H}}_{\hat{\mathcal{C}}}$  as defined in~\eqref{Fdefn} in terms of $F$. Thus we get part~\ref{thmitem2} of the theorem. By Lemma~\ref{Fdefm} we get  part \ref{thmitem3} of the theorem: $F$ in the definition of $ \hat F$ has  $\vert \hat{q}_0 \rangle$ as an invariant, i.e.,  $F \vert \hat{q}_0 \rangle =   \vert \hat{q}_0 \rangle$. 

Next we prove that if~\ref{thmitem1}, ~\ref{thmitem2}  and \ref{thmitem3} hold then $R$ is the global evolution of a QCA with neighborhood $\mathcal{N}$, and that it satisfies the condition $\mathcal{A}_x = \begin{rm}{span }\end{rm}(\prod_{y \in  \mathcal{N}} \mathcal{D}_{{x-y},x})$. Unitarity and translation-invariance are obvious from the definition.  To prove  causality, let us assume $A_x$ is a local operator on cell  $x$. We consider the  action of ${ R}^\dag  A_x {R}$ on basis elements of ${\mathcal{H}}_{\mathcal{C}}$. Both $\tilde S$ and $\hat F$ act locally on cells by $S$ and $F$ respectively. The conjugation of ${\mathcal{A}}_x$ by $S$ is an isomorphism of ${\mathcal{A}}_x$ to ${\hat {\mathcal{A}}}_x =  \bigotimes_{z \in \mathcal{N}} \begin{rm}{End}\end{rm}(V_{z,x})$ (the counterpart of $\mathcal{A}_x$),  and the conjugation of ${\hat {\mathcal{A}}}_x$ by $F$  is an  isomorphism of  ${\hat {\mathcal{A}}}_x$ to itself.  Observe also that   $F \vert \hat{q}_0 \rangle = \vert \hat{q}_0 \rangle$. Then we may as well simply look at the conjugation of ${\hat {\mathcal{A}}}_x$ by $\sigma$, and  only need to show that propagation is causal. Therefore we can simply consider the action of ${\sigma}^{-1} {\hat {\mathcal{A}}}_x  \sigma $ on basis elements of ${\mathcal{H}}_{\hat{\mathcal{C}}}$.  We can write  such a basis element as:
\begin{equation*}
\vert \hat c \rangle = \bigotimes_{y \in\mathbb{Z}^n}   \big( \bigotimes_{z \in \mathcal{N}} \vert \textbf{k}_y(z) \rangle \big)
\end{equation*}
Then:
\begin{equation*}
\sigma \vert \hat c \rangle =  \bigotimes_{y \in\mathbb{Z}^n} \big( \bigotimes_{z \in \mathcal{N}} \vert \textbf{k}_{y+z}(z) \rangle \big)
\end{equation*}
There is no loss of generality (by linearity) in assuming that $\hat {A}_x \in {\hat {\mathcal{A}}}_x$ is a basis element:  
\begin{equation*}\hat {A}_x =\bigg(\big( \bigotimes_{r \in \mathcal{N}} \vert a_x(r) \rangle \big)\otimes  \big( \bigotimes_{r' \in \mathcal{N}} \langle a_x'(r') \vert \big) \bigg)\otimes \mathbb{I}_{\overline{\{x\}}}
\end{equation*}
where $\vert a_x(r) \rangle \in V_r$, and  $\vert a_x'(r') \rangle \in  V_{r'}$,  and where $\mathbb{I}_{\overline{\{x\}}}$ is the identity on co-$\{x\}$ space, as in Definition~\ref{coDdef}.
Denote by $p_x  = \prod_{z \in \mathcal{N}} \langle a_x'(z) \vert \textbf{k}_{x+z}(z) \rangle$. Then this implies:
\begin{equation*} 
{\sigma}^{-1} \hat {A}_x  \sigma \vert \hat c \rangle =    p_x \bigotimes_{y = x+z : z \in \mathcal{N}} \big(\underbrace{\vert a_x(z) \rangle}_{z' = z}  \otimes \bigotimes_{z' \in \mathcal{N}\setminus \{z\}} \vert \textbf{k}_{y}(z') \rangle \big)   \otimes \big(  \bigotimes_{y \in\mathbb{Z}^n\setminus \{x+z : z \in \mathcal{N}\}} \bigotimes_{z' \in \mathcal{N}} \vert \textbf{k}_{y}(z') \rangle \big)
\end{equation*}
The above  shows that ${\sigma}^{-1} \hat {A}_x  \sigma$  is local upon $\mathcal{N}_x$, hence by Theorem~\ref{strucreva} (Structural Reversibility),  $\hat R$ is causal relative to neighborhood $\mathcal{N}$. This in turn implies,  by  cell-wise conjugation with  local isomorphisms $S$ and $F$,  that $R$ is causal relative to  neighborhood $\mathcal{N}$.

Next we show that  $\mathcal{A}_x = \begin{rm}{span }\end{rm}(\prod_{y \in  \mathcal{N}} \mathcal{D}_{{x-y},x})$.  By the same reasoning used in showing causality, we can consider the following quantities. Let ${\hat {\mathcal{B}}}_z =  \sigma^{-1}  {\hat {\mathcal{A}}}_z \sigma$ (the counterpart of $\mathcal{R}_z$).  Let ${\hat {\mathcal{D}}}_{x-y,x} = {\hat {\mathcal{B}}}_{x-y} \cap  {\hat {\mathcal{A}}}_x$ (the counterpart of $\mathcal{D}_{x-y, x}$), for $y   \in \mathcal{N}$. Then  it is clear by the definition of $\sigma$, that:
\begin{equation*}
{\hat {\mathcal{D}}}_{x-y,x} = {\hat {\mathcal{B}}}_{x-y}   \cap  {\hat {\mathcal{A}}}_x=  \begin{rm}{End}\end{rm}(V_{y,x}) \otimes \bigotimes_{z \in \mathcal{N}, z \neq y} \mathbb{I}_{V_{z,x}}
\end{equation*}
for each $y   \in \mathcal{N}$. This implies:
\begin{equation*}
{\hat {\mathcal{A}}}_x=  \begin{rm}{span }\end{rm}(\prod_{y \in  \mathcal{N}} {\hat {\mathcal{D}}}_{x-y,x})
\end{equation*}
This further implies,   by  cell-wise conjugation with  local isomorphisms $S$ and $F$, that:  $\mathcal{A}_x = \begin{rm}{span }\end{rm}(\prod_{y \in  \mathcal{N}} \mathcal{D}_{{x-y},x})$.
\end{proof}

The significance of this theorem is that it connects a local condition to the global structure of the QCA. We state the theorem in terms of propagation and collision operators which are the quantum analogues of the substeps of the evolution for a classical lattice-gas. There are, however, a few  equivalent ways of expressing the global evolution $R$ under the conditions of Theorem~\ref{thmStructure}.

\begin{cor}  \label{rcor} The global evolution in Theorem~\ref{thmStructure} can  be given as:
\begin{equation*}
R   =      \tilde T   \sigma  \tilde S
\end{equation*}
for some map $\tilde T$:
\begin{equation*}
\tilde T:  {\mathcal{H}}_{\hat{\mathcal{C}}} \longrightarrow {\mathcal{H}}_{\mathcal{C}} 
\end{equation*}
\end{cor}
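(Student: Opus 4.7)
The plan is to observe that this corollary is essentially a repackaging of part \ref{thmitem2} of Theorem~\ref{thmStructure}. That theorem gives $R = \tilde{S}^{-1} \hat{F} \sigma \tilde{S}$, where $\hat{F}: \mathcal{H}_{\hat{\mathcal{C}}} \to \mathcal{H}_{\hat{\mathcal{C}}}$ is the collision operator and $\tilde{S}^{-1}: \mathcal{H}_{\hat{\mathcal{C}}} \to \mathcal{H}_{\mathcal{C}}$ is the inverse of the local isomorphism applied cell-wise. All we need to do is group the two leftmost factors together.

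Specifically, I would define
\begin{equation*}
\tilde{T} := \tilde{S}^{-1} \hat{F}.
\end{equation*}
Since $\hat{F}$ is a map from $\mathcal{H}_{\hat{\mathcal{C}}}$ to itself and $\tilde{S}^{-1}$ is a map from $\mathcal{H}_{\hat{\mathcal{C}}}$ to $\mathcal{H}_{\mathcal{C}}$, the composition $\tilde{T}$ is a well-defined map $\tilde{T}: \mathcal{H}_{\hat{\mathcal{C}}} \to \mathcal{H}_{\mathcal{C}}$, as required by the statement. Substituting this into the factorization from Theorem~\ref{thmStructure}~\ref{thmitem2} immediately yields
\begin{equation*}
R = \tilde{S}^{-1} \hat{F} \sigma \tilde{S} = (\tilde{S}^{-1} \hat{F}) \sigma \tilde{S} = \tilde{T} \sigma \tilde{S}.
\end{equation*}

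There is no real obstacle here; the corollary is purely notational, highlighting that the whole evolution decomposes as (i) a local change of basis $\tilde{S}$ taking cell states into component form, (ii) the propagation permutation $\sigma$ on $\mathcal{H}_{\hat{\mathcal{C}}}$, and (iii) a single composite local map $\tilde{T}$ that performs the collision and then returns to the original cell Hilbert space. Because both $\tilde{S}^{-1}$ and $\hat{F}$ act cell-wise and are bounded (indeed $\tilde{S}^{-1}$ is a unitary isomorphism of Hilbert spaces and $\hat{F}$ is unitary by construction), $\tilde{T}$ is itself bounded and cell-local, so no further verification is needed.
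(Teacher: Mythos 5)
Your proposal is correct and matches the paper's own proof: the paper defines $\tilde T$ on $\hat{\mathcal{C}}$ cell-wise by $\bigotimes_{x}\vert c_x\rangle \mapsto \bigotimes_{x} S^{-1}F(\vert c_x\rangle)$, which is exactly your composite $\tilde S^{-1}\hat F$, and then invokes Theorem~\ref{thmStructure}~\ref{thmitem2} just as you do. No gap.
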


\begin{proof}
Let:
\begin{equation*}
\tilde T:  {\mathcal{H}}_{\hat{\mathcal{C}}} \longrightarrow {\mathcal{H}}_{\mathcal{C}} 
\end{equation*}
on  $\hat{\mathcal{C}}$ as:
\begin{align*}
\tilde T:\hat{\mathcal{C}} &\longrightarrow {\mathcal{H}}_{\mathcal{C}}\\
\bigotimes_{x \in \mathbb{Z}^n} \vert {c}_{x} \rangle  &\mapsto \bigotimes_{x \in \mathbb{Z}^n} S^{-1} F  (\vert {c}_{x} \rangle)
\end{align*}
$S$, and $F$ are as in Theorem~\ref{thmStructure}.  $\tilde T$ is extended      to  ${\mathcal{H}}_{\hat{\mathcal{C}}}$. 

 By Theorem~\ref{thmStructure}:
\begin{equation*}
R   =      \tilde T   \sigma  \tilde S
\end{equation*}
\end{proof}
This is called the two-layered \textit{brick structure}~\cite{sw:rvqca}. However, instead of working with the space ${\mathcal{H}}_{\mathcal{C}}$, we can look at the QCA global evolution as it acts on  the space ${\mathcal{H}}_{\hat{\mathcal{C}}}$. This gives us another equivalent form for the global evolution $R$.

\begin{cor}  \label{rtildecor} The global evolution in Theorem~\ref{thmStructure} can equivalently be given by a unitary operator $\hat R$ on ${\mathcal{H}}_{\hat{\mathcal{C}}}$:
\begin{equation*}
 \hat R   =    \hat F  \sigma 
\end{equation*}
\end{cor}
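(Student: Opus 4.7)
The plan is to prove this as an immediate algebraic consequence of Theorem~\ref{thmStructure}~\ref{thmitem2}, which already provides the factorization $R = \tilde S^{-1} \hat F \sigma \tilde S$. Concretely, I would define $\hat R$ as the conjugate of $R$ by the global inner product preserving isomorphism $\tilde S$, namely
\begin{equation*}
\hat R := \tilde S \, R \, \tilde S^{-1} : {\mathcal{H}}_{\hat{\mathcal{C}}} \longrightarrow {\mathcal{H}}_{\hat{\mathcal{C}}}
\end{equation*}
and then verify two things: that $\hat R$ is a unitary operator on ${\mathcal{H}}_{\hat{\mathcal{C}}}$, and that it coincides with $\hat F \sigma$.

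For the first point, I would appeal to the discussion in the subsection on the local inner product preserving transformation: because $S:W\to \hat W$ is an inner product preserving isomorphism of Hilbert spaces, its cellwise extension $\tilde S:{\mathcal{H}}_{\mathcal{C}}\to{\mathcal{H}}_{\hat{\mathcal{C}}}$ is a unitary isomorphism. Conjugation of the unitary $R$ by the unitary $\tilde S$ therefore produces a unitary operator on ${\mathcal{H}}_{\hat{\mathcal{C}}}$.

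For the second point, I would simply substitute the expression from Theorem~\ref{thmStructure}~\ref{thmitem2}:
\begin{equation*}
\hat R = \tilde S \, R \, \tilde S^{-1} = \tilde S \bigl( \tilde S^{-1} \hat F \sigma \tilde S \bigr) \tilde S^{-1} = \hat F \, \sigma,
\end{equation*}
where the cancellation $\tilde S \tilde S^{-1} = \mathbb{I}$ on both ends uses that $\tilde S$ is an invertible bounded operator between the two Hilbert spaces. The sense in which this is an \emph{equivalent} form of the global evolution is precisely the intertwining relation $R = \tilde S^{-1} \hat R \, \tilde S$: states and observables on ${\mathcal{H}}_{\mathcal{C}}$ are transported bijectively by $\tilde S$ to states and observables on ${\mathcal{H}}_{\hat{\mathcal{C}}}$, and the dynamics of $R$ is carried to the dynamics of $\hat R = \hat F \sigma$ under this transport.

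There is no real obstacle here; the corollary is a repackaging of Theorem~\ref{thmStructure}~\ref{thmitem2} obtained by working in the component form Hilbert space ${\mathcal{H}}_{\hat{\mathcal{C}}}$ rather than in ${\mathcal{H}}_{\mathcal{C}}$. The only subtlety worth flagging explicitly in the write-up is that the inner product on $\hat W = \bigotimes_{z \in \mathcal{N}} V_z$ was fixed (at the start of Section~\ref{sqsbl}) precisely so that $S$, and hence $\tilde S$, is inner product preserving; without this choice the conjugate $\tilde S R \tilde S^{-1}$ need not be unitary and the equivalence would not be meaningful as a statement between QCA and QLGA dynamics.
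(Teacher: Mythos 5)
Your proposal is correct and follows essentially the same route as the paper: both define $\hat R := \tilde S R \tilde S^{-1}$ and obtain $\hat R = \hat F \sigma$ by direct substitution from Theorem~\ref{thmStructure}~\ref{thmitem2}. Your additional remarks on the unitarity of $\tilde S$ and the choice of inner product on $\hat W$ are consistent with the paper's discussion in Section~\ref{sqsbl} and merely make explicit what the paper leaves implicit.
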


\begin{proof}
Let:  
\begin{align*}
 \hat R   &:   {\mathcal{H}}_{\hat{\mathcal{C}}} \longrightarrow {\mathcal{H}}_{\hat{\mathcal{C}}} \\
 \hat R   &=    \tilde S R {{\tilde S}^{-1}}
\end{align*}
 By Theorem~\ref{thmStructure}:
\begin{align*}
 \hat R   &=    \hat F  \sigma 
\end{align*}
\end{proof}

\begin{remarknonum}
 Another description of the QLGA is obtained under the unitary isomorphism:
 \begin{equation*}
 \sigma:  {\mathcal{H}}_{\hat{\mathcal{C}}} \rightarrow {\mathcal{H}}_{\hat{\mathcal{C}}}
 \end{equation*}
Under this isomorphism, the global evolution becomes:
 \begin{equation*} 
\check R   =      \sigma \hat R  \sigma^{-1}  = \sigma  {\hat F}
\end{equation*}
\end{remarknonum}

Thus the global evolution $R$ is equivalent to one given by another global evolution operator $\hat R$, with $\tilde S$  and  ${{\tilde S}^{-1}}$ providing the change of basis between  ${\mathcal{H}}_{\mathcal{C}}$ and  ${\mathcal{H}}_{\hat{\mathcal{C}}}$. The  equivalent evolution $\hat R$ happens in  two stages:  the \textit{propagation} (or \textit{advection}) stage given by  $\sigma$, followed by the the \textit{collision} (or \textit{scattering}) stage given by the  map $F$ on $\bigotimes_{z \in \mathcal{N}} V_z$. This completes the proof of equivalence of QCA satisfying the local condition of Theorem~\ref{thmStructure} to \textit{Quantum Lattice-Gas Automata} (QLGA). Theorem~\ref{thmStructure} therefore characterizes the QLGA as a subclass of QCA. It is clear that the evolution given by Theorem~\ref{thmStructure} has as special cases both the classical lattice-gas defined in the introduction and the lattice-gas models previously investigated~\cite{bib:meyer1,bib:meyer2,bib:meyer3,bib:meyer4,bib:meyer5,bib:meyer6,Bog1,Bog2,Bog3}. In the following section we will give some examples of QCA which are {\em not} lattice-gases and show how they violate the conditions of Theorem~\ref{thmStructure}. We also compute the matrices $\mathcal{D}_{x,y}$ explicitly for the simplest quantum lattice-gas and show that the condition is indeed satisfied.

\section{Examples}\label{ex}

In this section we consider examples of three subclasses of QCA. In the first case we examine the simplest quantum lattice-gas, defined as an example in~\cite{bib:feynmanhibbs} and also obtained from consideration of the absence of scalar QCA in~\cite{bib:meyer2} and investigated in~\cite{bib:meyer2,bib:meyer3,bib:meyer4,bib:meyer5}. Here it is instructive to see the algebras that appear in our local condition. Secondly, we investigate the counterexample of~\cite{anw:odqca} to show that our condition is indeed not satisfied in this case. Finally, we give an example in which quantization of a classical automata changes the neighborhood size, and results in a QCA which is also not a QLGA. In this case we show how adding a propagation step to the original QCA allows our condition to be satisfied and hence results in a QLGA.

\subsection{The simplest QLGA}

The simplest QLGA was originally investigated by Meyer as an extension of a single particle partitioned QCA~\cite{bib:meyer2}. In this model each site has two neighbors, and  each site has two lattice vectors that point to these neighbors. At most one particle may be present on each vector so the site Hilbert space $W$ is isomorphic to $\mathbb{C}^4\simeq\mathbb{C}^2\otimes\mathbb{C}^2$. The collision operator $F$ is particle number conserving - meaning that it is block diagonal with two blocks of dimension $1$ and one block of dimension $2$. We may choose the basis $\{\vert00\rangle,\vert01\rangle,\vert10\rangle,\vert11\rangle\}$ for $W$ in which the collision operator $F$ may be written:
\begin{equation*}
F=\begin{pmatrix}
1&0&0&0\\
0&ie^{i\alpha}\sin\theta&e^{i\alpha}\cos\theta&0\\
0&e^{i\alpha}\cos\theta&ie^{i\alpha}\sin\theta&0\\
0&0&0&e^{i\beta}\\
\end{pmatrix}.
\end{equation*}

The lattice is $\mathbb{Z}$ and the neighborhood is $\mathcal{N}=\{-1,+1\}$. Because we are beginning from a QLGA dynamics composed of propagation and collision we may take the map $S$ to be the identity. Then
\begin{equation*}
W=\hat W = \mathbb{C}^2\otimes\mathbb{C}^2
\end{equation*} 
and the two tensor factors are $V_{-1}\simeq \mathbb{C}^2$ and $V_{1}\simeq \mathbb{C}^2$. From the Heisenberg point of view of the dynamics of the algebras $\mathcal{A}_x$, which are localized on a lattice-gas site and hence act on $\mathbb{C}^4$, the collision step is simply some isomorphism of the algebra. We need therefore only to consider the action of the propagation operator to determine the algebras $\mathcal{R}_x$ and $\mathcal{D}_{x-y,x}$. The action of $\sigma$ on an operator localized on the site may be written:
\begin{equation*}
\sigma^{-1} \mathbb{I}\otimes B\otimes\mathbb{I}\otimes \mathbb{I}\otimes A\otimes \mathbb{I}\sigma = \mathbb{I}\otimes\mathbb{I}\otimes A\otimes B \otimes  \mathbb{I}\otimes \mathbb{I}
\end{equation*}
where the first two tensor factors (reading from left to right) are the left lattice-gas site $x-1$, the middle two tensor factors are the middle lattice-gas site $x$ and the right two tensor factors are the right lattice-gas site $x+1$. From this it is straightforward to write:
\begin{align*}
\mathcal{D}_{x-1,x} &= \mathbb{I}\otimes \text{End}(\mathbb{C}^2)\\
\mathcal{D}_{x+1,x} &= \text{End}(\mathbb{C}^2)\otimes\mathbb{I}\\
\end{align*}
Evidently the span of the product of these algebras is $\text{End}(\mathbb{C}^4)$ and hence this model satisfies the condition to be a quantum lattice-gas, as it must.

\subsection{QCA that are not QLGA}

Consider a one-dimensional  classical  CA as in Figure~\ref{qlgafig} in which a cell is $3$ bits. Then a cell $x$ is given by a triple $(a_x,b_x,c_x) \in (\mathbb{Z}/(2))^3$. The neighborhood is $\mathcal{N} = \{-1, 0, 1\}$,   and the update rule $\mu$:
\begin{align*}
\mu: a_{x} &\mapsto  a_{x} + b_{x-1} \\
 b_{x}  &\mapsto  b_x  \\
 c_{x}   &\mapsto  c_{x} + b_{x+1} 
\end{align*}

Quantizing this CA, we get a QCA. That means,  we  take the rule above to describe the evolution of a cell state as a function of neighborhood state.  A cell $x$ of the QCA consists of $3$ qubits, with basis states:  $\vert a_x \rangle \otimes \vert b_x \rangle \otimes \vert c_x \rangle$, where $a_x , b_x ,  c_x  \in\mathbb{Z}/(2)$. The quiescent state is $\vert \hat{q}_0 \rangle = \vert 0 \rangle \otimes \vert 0\rangle \otimes \vert 0 \rangle$. The neighborhood is still  $\mathcal{N} = \{ -1,0, 1\}$.

 \begin{figure}[h] 
\includegraphics{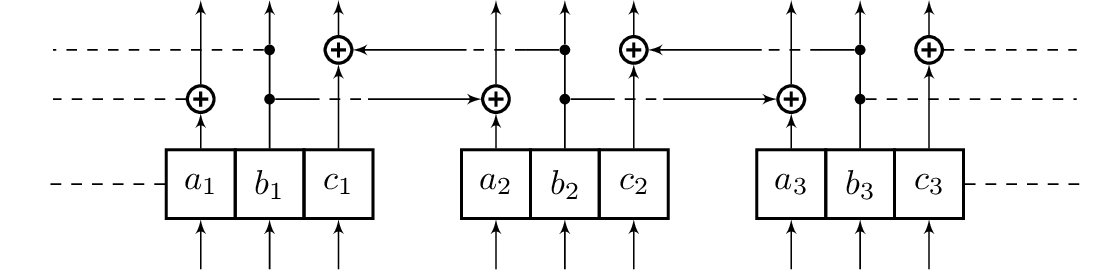}
\caption{One-dimensional CA which is a not a QLGA after quantization}
\label{qlgafig}
\end{figure}

Define the global evolution $R$ on ${\mathcal{H}}_{{\mathcal{C}}}$ through its action  by a controlled-NOT operation on the neighborhood cells as follows.
\begin{align*}
R: \vert a_{x} \rangle &\mapsto \vert a_{x} + b_{x-1} \rangle \\
\vert b_{x} \rangle &\mapsto \vert b_x \rangle \\
\vert c_{x}   \rangle &\mapsto \vert c_{x} + b_{x+1} \rangle
\end{align*}
Let us write the element of Hadamard basis as: $\vert \chi_a \rangle  =  \frac{1}{\sqrt{2}}(\vert 0 \rangle +  (-1)^{a}  \vert 1 \rangle)$ for  $a \in \mathbb{Z}/(2)$. Note that $\vert \chi_0 \rangle = \vert + \rangle$ and $\vert \chi_1 \rangle  = \vert - \rangle$. We can equivalently look at the qubits in basis states: $\vert \chi_{a_x} \rangle \otimes \vert b_x \rangle \otimes \vert \chi_{c_x} \rangle$, where $a_x , b_x ,  c_x  \in \mathbb{Z}/(2)$. Since  $\vert + \rangle$ and $\vert - \rangle$  are eigenvectors of the shift operation with eigenvalues $1$ and $-1$ respectively, $R$ can be given as:
\begin{align*}
R: \vert \chi_{a_x}\rangle &\mapsto \vert \chi_{a_x} \rangle \\
\vert b_{x} \rangle &\mapsto {(-1)}^{b_{x}(a_{x+1} +   c_{x-1})} \vert b_{x} \rangle \\
\vert  \chi_{c_x} \rangle &\mapsto \vert \chi_{c_x}  \rangle
\end{align*}
We now compute the  subalgebras $\mathcal{D}_{{x-y},x} =   \mathcal{R}_{x-y} \cap \mathcal{A}_x$, where $y\in \mathcal{N}$.  $\mathcal{D}_{x,x} =   \mathcal{R}_x \cap \mathcal{A}_x = \text{span}(\{\vert a_x \rangle \langle a_x \vert \otimes \vert b_x \rangle \langle b_x \vert \otimes \vert c_x \rangle \langle c_x \vert \})$, where $ \vert a_x \rangle, \vert c_x \rangle \in \{\vert + \rangle, \vert - \rangle\}$, and $ \vert b_x \rangle \in \{\vert 0 \rangle, \vert 1 \rangle\}$. $\mathcal{D}_{x-1,x} =   \mathcal{R}_{x-1} \cap \mathcal{A}_x = \mathbb{C} \mathbb{I} \otimes \mathbb{I} \otimes  \mathbb{I}$. $\mathcal{D}_{x+1,x} =   \mathcal{R}_{x+1} \cap \mathcal{A}_x = \mathbb{C} \mathbb{I} \otimes \mathbb{I} \otimes  \mathbb{I}$. Then it is clear that  dim $\begin{rm}{span }\end{rm}(\prod_{y \in  \{-1,0,1\}} \mathcal{D}_{{x-y},x}) = 8$, while dim $ \mathcal{A}_x  = 64$.  Theorem~\ref{thmStructure} then implies that  this QCA is not a QLGA.

The $2$-dimensional version of the above example is referred to as the  Kari CA in Arrighi, Nesme, and Werner~\cite{anw:odqca}. This is illustrated in  Figure~\ref{qlgafig2}. The lattice $\mathcal{L} = \mathbb{Z}\times  \mathbb{Z}$. A cell of this classical CA consists of $9$ bits. The neighborhood is $\mathcal{N} = \{ -1,0, 1\}\times \{ -1,0, 1\}$. The center bit of the center cell acts by controlled-NOT on the appropriate bit of the neighboring cell as shown.

 \begin{figure}[h] 
\includegraphics{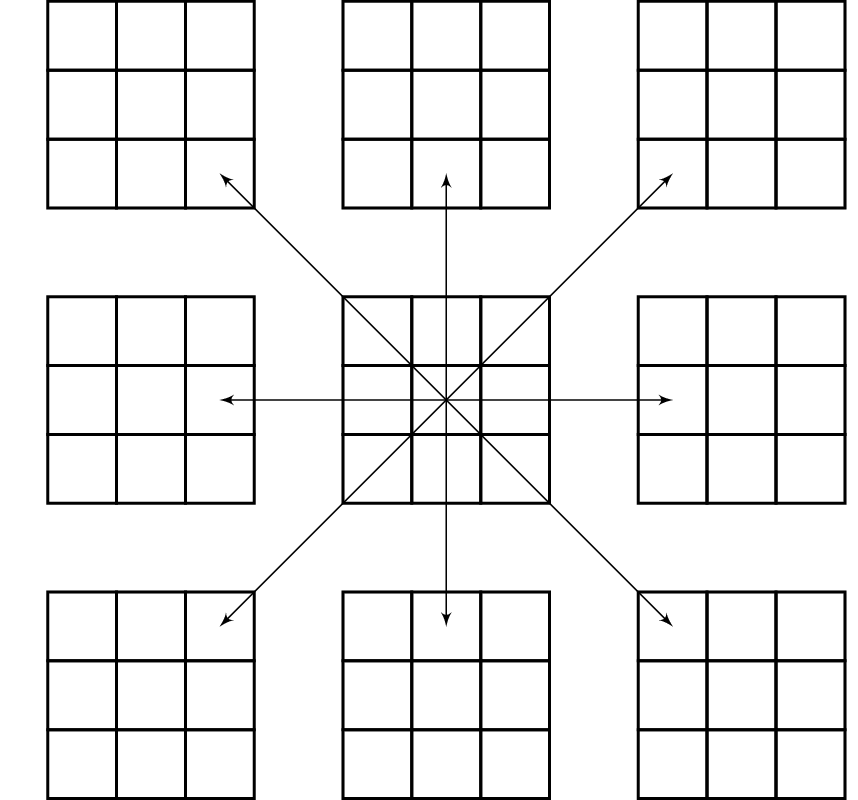}
\caption{Two-dimensional Kari CA which is a not a QLGA after quantization}
\label{qlgafig2}
\end{figure}

In the quantized version, the cells of the QCA are $9$-qubits each. Enumerate the qubits of  cell $x$ by $(i_x,j_x) \in \{ -1,0, 1\} \times \{ -1,0, 1\}$ (not to be confused with the neighborhood coordinates on the lattice). $(i_x,j_x) = (0,0)$ corresponds to the center qubit of a cell. The quiescent state is $\vert \hat{q}_0 \rangle = \bigotimes_{(i_x,j_x) \in \{ -1,0, 1\}\times \{ -1,0, 1\}} \vert 0 \rangle$. The center qubit of the center cell acts by controlled-NOT on the appropriate qubit  of the neighboring cell as for the classical CA. This action is exactly as in the $1$-dimensional case.  The neighborhood is the same as for the classical CA,  $\mathcal{N} = \{ -1,0, 1\}\times \{ -1,0, 1\}$.

 We compute, as in the $1$-dimensional case, the subalgebras $\mathcal{D}_{{x-y},x} =   \mathcal{R}_{x-y} \cap \mathcal{A}_x$, where $y\in \mathcal{N}$. $\mathcal{D}_{x,x} =   \mathcal{R}_x \cap \mathcal{A}_x = \text{span}(\prod_{(i_x,j_x) \in \{ -1,0, 1\}\times \{ -1,0, 1\}}\vert a_{i_xj_x} \rangle \langle a_{i_xj_x} \vert )$, where $\vert a_{00} \rangle \in \{\vert 0 \rangle, \vert 1 \rangle\}$, and the rest of $ \vert a_{i_xj_x} \rangle \in \{\vert + \rangle, \vert - \rangle\}$, for $(i_x,j_x) \neq (0,0)$. For all $y\in \mathcal{N}\setminus (0,0)$, $\mathcal{D}_{{x-y},x} =   \mathcal{R}_{x-y} \cap \mathcal{A}_x = \mathbb{C}  \bigotimes_{k \in   \{ -1,0, 1\}\times \{ -1,0, 1\}} \mathbb{I}$. Then it is clear that  dim $\begin{rm}{span }\end{rm}(\prod_{y \in  \mathcal{N}} \mathcal{D}_{{x-y},x}) = 2^9$, while dim $ \mathcal{A}_x  = 4^9$.  Theorem~\ref{thmStructure} then implies that  this QCA is also not a QLGA.

\subsection{Examples of QCA that show effect of quantization on neighborhood}\label{nbhdchange}

Consider a one-dimensional classical  CA as in Figure~\ref{qlgafig3} in which a cell is $2$ bits. We consider the scheme in which  a cell $x$ is given by a pair $(a_x,b_x)$. The neighborhood is $\mathcal{N} = \{ 0, 1\}$,   and the update rule $\mu$:
\begin{align*}
\mu: a_{x} &\mapsto  a_{x}  \\
 b_{x}  &\mapsto  b_x +  a_{x} + a_{x+1}
\end{align*}
Let us quantize  this CA to get a QCA.   A cell $x$ of the QCA consists of $2$ qubits:  $\vert a_x \rangle \otimes \vert b_x \rangle$, where $a_x , b_x  \in\mathbb{Z}/(2)$. The quiescent state is $\vert \hat{q}_0 \rangle = \vert 0 \rangle \otimes \vert 0\rangle$. This automata is closely related to the ``Toffoli CA" in~\cite{anw:odqca}.

Define the global evolution $R$ on ${\mathcal{H}}_{{\mathcal{C}}}$ through its action  by a controlled-NOT operation on the basis of its neighborhood.
\begin{align*}
R: \vert a_{x} \rangle &\mapsto \vert a_x \rangle \\
\vert b_{x} \rangle &\mapsto \vert b_{x} + a_{x} + a_{x+1} \rangle 
\end{align*}
The neighborhood looks like   $\mathcal{N}_{+} = \{0, 1\}$ as expected.

 \begin{figure}[h] 
\centering 
\includegraphics{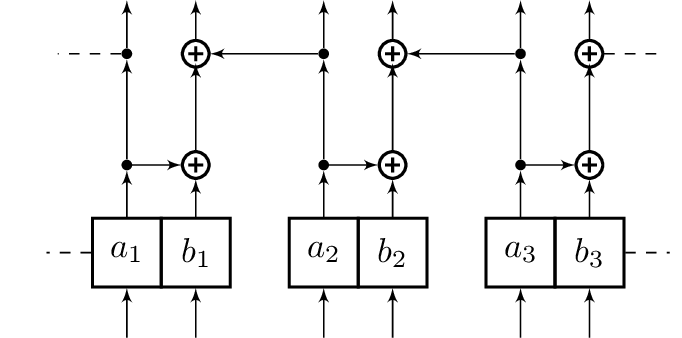}
\caption{One-dimensional CA whose neighborhood changes after quantization}
\label{qlgafig3}
\end{figure}

But we can equivalently look at the qubits in state: $ \vert a_x \rangle \otimes \vert \chi_{b_x} \rangle$, where $a_x , b_x   \in \mathbb{Z}/(2)$, and $\vert \chi_{b_x} \rangle$ is as defined in the examples above: $\vert \chi_{b_x} \rangle  =  \frac{1}{\sqrt{2}}(\vert 0 \rangle +  (-1)^{b_x}  \vert 1 \rangle)$ for  $b_x \in \mathbb{Z}/(2)$.  Since $\vert \chi_{b_x}  \rangle$ is an eigenvector  for the  shift operation, $R$ can be given as:
\begin{align*}
R:\quad\quad \vert a_x \rangle &\mapsto {(-1)}^{a_{x}(b_{x-1} +   b_{x})} \vert a_{x} \rangle \\
\vert \chi_{b_x} \rangle &\mapsto \vert \chi_{b_x}  \rangle
\end{align*}
Now the neighborhood looks like   $\mathcal{N}_{-} = \{-1,0\}$. In fact, the neighborhood is $\mathcal{N} = \mathcal{N}_{-} \cup \mathcal{N}_{+} = \{-1,0,1\}$. This is a consequence of the well-known fact that the control and target of a controlled-NOT operation in the basis $\vert + \rangle, \vert - \rangle$ are reversed as compared with the logical basis. Hence, while controlled-NOT operations can only propagate classical information in one direction (from control to target), they may propagate quantum information in both directions.

As before, we  compute the  subalgebras $\mathcal{D}_{{x-y},x} =   \mathcal{R}_{x-y} \cap \mathcal{A}_x$, where $y\in \mathcal{N}$.  $\mathcal{D}_{x,x} =   \mathcal{R}_x \cap \mathcal{A}_x = \text{span}(\{\vert a_x \rangle \langle a_x \vert \otimes \vert b_x \rangle \langle b_x \vert \})$, where $ \vert a_x \rangle \in \{\vert 0 \rangle, \vert 1 \rangle\}$, and $ \vert b_x \rangle \in \{\vert + \rangle, \vert - \rangle\}$. $\mathcal{D}_{x-1,x} =   \mathcal{R}_{x-1} \cap \mathcal{A}_x = \mathbb{C} \mathbb{I} \otimes \mathbb{I}$. $\mathcal{D}_{x+1,x} =   \mathcal{R}_{x+1} \cap \mathcal{A}_x = \mathbb{C} \mathbb{I} \otimes \mathbb{I}$. Since  dim $\begin{rm}{span }\end{rm}(\prod_{y \in  \{-1,0,1\}} \mathcal{D}_{{x-y},x}) = 4$, while dim $ \mathcal{A}_x = 16$,   Theorem~\ref{thmStructure}  implies that  this QCA is again not a QLGA.

The change in neighborhood on quantization for this cellular automaton should lead us to question the original definition. Is there some redefinition of the cells that can maintain the same neighborhood for classical and quantum automata, and also yield a QLGA? We can in fact take the classical CA of Figure~\ref{qlgafig3} and change it slightly by allowing a shift of bits, as shown in Figure~\ref{qlgafig4}. That is, we have an extra step  that shifts the $a_{x+1}$ to $a_x$ before applying  the add operation. The result is the following CA:
\begin{align*}
\mu: a_{x} &\mapsto  a_{x+1}  \\
 b_{x}  &\mapsto  b_x +  a_{x} + a_{x+1}
\end{align*}
Then, after quantization,  we obtain a QLGA from it. We accomplish that  because the shift of bits corresponds to  a shift of tensor factors (advection). This QLGA has a neighborhood $\mathcal{N} = \{0,1\}$, i.e., is a radius-$1/2$ QLGA.
\begin{align*}
R: \vert a_{x} \rangle &\mapsto \vert a_{x+1} \rangle \\
\vert b_{x} \rangle &\mapsto \vert b_{x} + a_{x} + a_{x+1} \rangle 
\end{align*}

 \begin{figure}[htbp] 
\centering 
\includegraphics{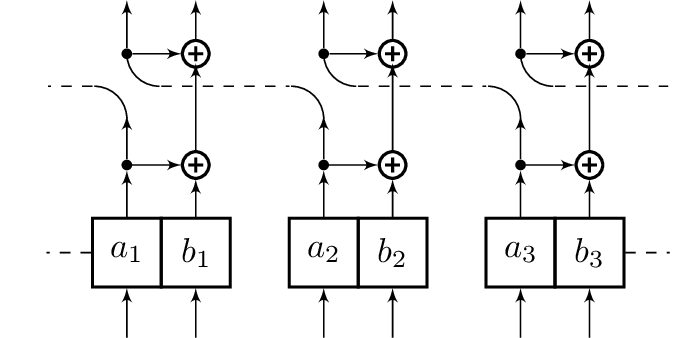}
\caption{One-dimensional radius-$1/2$ QLGA from permutation of bits in the CA of Figure~\ref{qlgafig3}}
\label{qlgafig4}
\end{figure}

\section{Conclusions}

We have given a local algebraic condition for a quantum cellular automaton to be a quantum lattice-gas. This result classifies QLGA as a subset of QCA, and resolves the question of which QCA are equivalent to a circuit with a ``brick structure''. This question has been open since Arrighi {\em et al.}~\cite{anw:odqca} produced an example showing that QLGA and QCA are distinct classes, contrary to the work of ~\cite{sw:rvqca}.  The question of classifying the remaining QCA that are not QLGA remains open. This open question motivates the reexamination of the definition of QCAs through local rules~\cite{a:watrous,dls:dpwfqca,ds:dpulqca,dm:u1dqca}. The techniques introduced in this paper allow us to obtain rigorous results for models defined on the Hilbert space of finite, unbounded configurations. The definition of this space necessarily requires the definition of products with countably infinitely many terms~\cite{vn:idp,ag:shrt}, just as is required for the constructions for WQCA defined through local rules~\cite{a:watrous,dls:dpwfqca,ds:dpulqca,dm:u1dqca}. One might hope that the techniques that are used in the present paper may also have utility in deciding which quantum automata defined through local rules are also causal, and therefore true QCA.

\section*{Acknowledgements}

The authors would like to acknowledge productive discussions with David Meyer,  Nolan Wallach and Ji\v{r}i Lebl. This project is supported by NSF CCI center, ``Quantum Information for Quantum Chemistry (QIQC)", award number CHE-1037992, and by NSF award PHY-0955518.

\appendix
\section{Some background and results  on representations of associative  algebras} \label{apprt}

We give a brief primer on  representations of associative algebras relevant to this paper. The material in this section, except for Corollary~\ref{corSS} and Theorem~\ref{vndnsthm}, is drawn from the texts by Goodman and Wallach~\cite{wall:sri, wall:ricg} and condensed for our purposes. Vector spaces in this section  are assumed to be  complex.  A representation of an associative algebra $\mathcal{A}$ is a pair $(\gamma,V)$ such that $\gamma: \mathcal{A} \longrightarrow \text{End}(V)$ is an associative algebra isomorphism. Since  a representation $(\gamma, V)$ of  $\mathcal{A}$  makes $V$ an $\mathcal{A}$-module under the action: $av = \gamma(a) v$, for $a \in \mathcal{A}$, and $v\in V$, we  can call  $V$ an $\mathcal{A}$-module instead (with $\gamma$ implied by the context).  If $U$ is a linear subspace of $V$ such that $\gamma(a) U \subset U$ for all $a \in \mathcal{A}$, then $U$ is \textit{invariant} under the representation. A representation $(\gamma,V)$ is \textit{irreducible} if the only invariant subspaces are $\{0\}$ and $V$. 

 Let  $(\gamma, V)$, $(\mu, W)$ be two  representations of an associative algebra  $\mathcal{A}$.  Let $\begin{rm}{Hom}\end{rm}(V,W)$ be the space of $\mathbb{C}$-linear maps from $V$ to $W$. Denote by  $\begin{rm}{Hom}_{\mathcal{A}}\end{rm}(V,W)$ the set of all $T \in \begin{rm}{Hom}\end{rm}(V,W)$ such that $T \gamma(a) = \mu(a)T$ for all $a \in \mathcal{A}$. Such a map is called an \textit{intertwining operator} between  the two representations. Two representations $(\gamma, V)$, $(\mu, W)$ are \textit{equivalent} if there exists an invertible intertwining operator between the two representations.

\begin{lemma}[Schur's Lemma, Lemma $4.1.4$, pg. 180,  Goodman and Wallach~\cite{wall:sri}]  \label{srlem}
Let $(\gamma,V)$ and $(\mu, W)$ be irreducible representations of an associative algebra $\mathcal{A}$. Assume that $V$ and $W$ have countable dimension over $\mathbb{C}$. Then 
\begin{equation*}
\begin{rm}{dim  }\:\end{rm}  \begin{rm}{Hom}_{\mathcal{A}}\end{rm}(V,W) = \left\{
\begin{array}{lr} 
1, & \text{if } (\gamma,V) \cong (\mu, W)\\
0, & \text{otherwise}
\end{array}
\right .
\end{equation*}
\end{lemma}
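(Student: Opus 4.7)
The plan is to proceed in the standard two-step fashion: first dispose of the non-isomorphic case using only invariance of kernels and images, then handle the isomorphic case by reducing to showing $\mathrm{End}_{\mathcal{A}}(V) = \mathbb{C}\cdot \mathrm{Id}_V$, which is where the countable dimension hypothesis will do its work via the Dixmier-style trick.

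First I would observe that for any $T \in \mathrm{Hom}_{\mathcal{A}}(V,W)$, both $\ker(T) \subset V$ and $\mathrm{im}(T) \subset W$ are $\mathcal{A}$-invariant subspaces (this is just a line of calculation from the intertwining property). By irreducibility, $\ker(T) \in \{0, V\}$ and $\mathrm{im}(T) \in \{0, W\}$, so $T$ is either zero or a linear isomorphism. Hence if $(\gamma,V) \not\cong (\mu,W)$, then necessarily $T = 0$ and $\dim \mathrm{Hom}_{\mathcal{A}}(V,W) = 0$, giving the second case of the statement.

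For the first case, fix one invertible intertwiner $S \in \mathrm{Hom}_{\mathcal{A}}(V,W)$. Left-composition with $S^{-1}$ gives a linear bijection $\mathrm{Hom}_{\mathcal{A}}(V,W) \to \mathrm{End}_{\mathcal{A}}(V)$, $T \mapsto S^{-1}T$, so it suffices to show $\dim \mathrm{End}_{\mathcal{A}}(V) = 1$. Given $T' \in \mathrm{End}_{\mathcal{A}}(V)$, I would argue that it suffices to exhibit an eigenvalue: if $T'v = \lambda v$ for some $\lambda \in \mathbb{C}$ and $0 \neq v \in V$, then $\ker(T' - \lambda\,\mathrm{Id}_V)$ is a nonzero $\mathcal{A}$-invariant subspace, so by irreducibility it equals $V$, forcing $T' = \lambda\,\mathrm{Id}_V$.

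The main obstacle is producing that eigenvalue when $V$ may be infinite-dimensional, and this is precisely where countable dimension and the uncountability of $\mathbb{C}$ enter. I would argue by contradiction: suppose $T'$ has no eigenvalue, so for every $\lambda \in \mathbb{C}$, $T' - \lambda\,\mathrm{Id}_V$ is injective, and by the kernel/image argument above (applied to $T' - \lambda\,\mathrm{Id}_V \in \mathrm{End}_{\mathcal{A}}(V)$) it is in fact an $\mathcal{A}$-module isomorphism. Pick any nonzero $v \in V$ and consider the family $\{(T' - \lambda\,\mathrm{Id}_V)^{-1} v : \lambda \in \mathbb{C}\} \subset V$. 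Since $V$ has countable dimension but $\mathbb{C}$ is uncountable, this family is linearly dependent, yielding a finite linear relation which, after clearing denominators by $\prod_i (T' - \lambda_i \,\mathrm{Id}_V)$, produces a nonzero polynomial $p \in \mathbb{C}[t]$ with $p(T')v = 0$. Then $\ker p(T') \neq 0$ is $\mathcal{A}$-invariant, hence equals $V$, so $p(T') = 0$; factoring $p$ over $\mathbb{C}$ into linear factors contradicts the invertibility of each $T' - \mu_j \,\mathrm{Id}_V$. Hence $T'$ does have an eigenvalue, $T' \in \mathbb{C}\cdot \mathrm{Id}_V$, and $\dim \mathrm{End}_{\mathcal{A}}(V) = 1$ as desired.
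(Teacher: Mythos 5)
Your proof is correct. The paper does not prove this lemma at all---it imports it verbatim as Lemma 4.1.4 of Goodman and Wallach---and your argument (kernel/image invariance for the non-isomorphic case, then reduction to $\mathrm{End}_{\mathcal{A}}(V)=\mathbb{C}\,\mathrm{Id}_V$ via the Dixmier trick of pitting the uncountable family $\{(T'-\lambda\,\mathrm{Id}_V)^{-1}v\}_{\lambda\in\mathbb{C}}$ against the countable dimension of $V$) is exactly the standard proof given in that reference, with all the delicate points (nonvanishing of the cleared-denominator polynomial at the $\lambda_k$, and the final factorization contradiction) handled properly.
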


Let $(\gamma,V)$ be a finite-dimensional representation of $\mathcal{A}$. Let $\widehat{\mathcal{A}}$ be the set of all equivalence classes of finite-dimensional irreducible representations of $\mathcal{A}$. For each $\lambda \in \widehat{\mathcal{A}}$, fix a model  $(\pi^\lambda, V^\lambda)$.  
Let:
\begin{equation} \label{Udef}
U^\lambda = \begin{rm}{Hom}_{\mathcal{A}}\end{rm}(V^\lambda,V)
\end{equation}
For each $\lambda \in \widehat{\mathcal{A}}$, $U^\lambda$ is   called a \textit{multiplicity space}.  
 
Define the map:
\begin{align} \label{Slambda}
S_\lambda : U^\lambda \otimes V^\lambda &\longrightarrow V \\
 u^\lambda \otimes  v^\lambda &\mapsto  u^\lambda( v^\lambda) \nonumber
\end{align}
Then $S_\lambda$ is an intertwining operator with $ U^\lambda \otimes V^\lambda$ an $\mathcal{A}$-module under the action $a.(u \otimes v) = u \otimes (av)$ for $a \in \mathcal{A}$. 

For $\lambda \in \widehat{\mathcal{A}}$, define the $\lambda$-\textit{isotypic component}:
\begin{align*} 
V^{[\lambda]} :\sum_{W \subset V :  W\sim \lambda }W 
\end{align*}

A finite-dimensional $\mathcal{A}$-module  $V$ is called \textit{completely reducible} if for every invariant subspace $W \subset V$ there is a complementary invariant subspace $U$ such that $V=W\oplus U$. The following result (Proposition $4.1.11$, pg. 183, in~\cite{wall:sri}) provides equivalent characterizations of completely reducible representations.
\begin{prop} \label{compred} Let $(\gamma,V)$ be a finite-dimensional representation of $\mathcal{A}$. The following are equivalent:
\begin{enumerate}[label=(\roman{*})] 
\item \label{compred1} $V$ is completely reducible.
\item \label{compred2} $V = V_1 \oplus  \cdots \oplus V_s$ with each $V_i$ an irreducible $\mathcal{A}$-module.
\item \label{compred3} $V = W_1 +  \cdots + W_d$ with each $W_j$ an irreducible $\mathcal{A}$-module.
\end{enumerate}
\end{prop}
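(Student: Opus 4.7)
The plan is to establish the cycle of implications (i)~$\Rightarrow$~(ii)~$\Rightarrow$~(iii)~$\Rightarrow$~(i). The implication (ii)~$\Rightarrow$~(iii) is immediate by taking $W_j = V_j$, so the real work is in the other two implications. I would treat these in the order that matches the structure of the definition (i), starting from the easier inductive direction and concluding with the maximality argument.

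For (i)~$\Rightarrow$~(ii), I would proceed by induction on $\dim V$. If $V$ is irreducible, the decomposition is trivial. Otherwise, choose a proper nonzero invariant subspace $W \subset V$; by complete reducibility there is an invariant complement $U$ with $V = W \oplus U$. A short check shows that both $W$ and $U$ inherit complete reducibility: any invariant subspace of $W$ is invariant in $V$, so has a complement $U'$ in $V$, and then $U' \cap W$ is an invariant complement inside $W$ (likewise for $U$). Since $\dim W, \dim U < \dim V$, the induction hypothesis applies to each, and concatenating the resulting decompositions gives (ii).

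For (iii)~$\Rightarrow$~(i), the hard and central step, let $U \subset V$ be any invariant subspace. I would consider the collection of subsets $S \subseteq \{1,\ldots,d\}$ for which the sum $U + \sum_{j\in S} W_j$ is direct, and pick a maximal such $S$ (the collection is nonempty since $S = \emptyset$ works, and finite since $d$ is finite). Set $V' = U \oplus \bigoplus_{j \in S} W_j$. The claim is $V' = V$. If not, some $W_k$ is not contained in $V'$; then $W_k \cap V'$ is a proper invariant subspace of $W_k$, hence zero by irreducibility of $W_k$, so $W_k$ meets $V'$ trivially and $S \cup \{k\}$ remains admissible, contradicting maximality. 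Therefore $V = V'$ and $\bigoplus_{j \in S} W_j$ is the required invariant complement to $U$, establishing (i).

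The main obstacle, and the step where care is needed, is (iii)~$\Rightarrow$~(i): one has to avoid the tempting but flawed idea of simply ``picking the $W_j$ not already in $U$'' and instead use Schur-style irreducibility of each $W_k$ to conclude that $W_k \cap V'$ is either $0$ or all of $W_k$. Everything else is essentially bookkeeping. The argument uses only finite-dimensionality (to guarantee the induction terminates and a maximal $S$ exists) and the irreducibility of the $W_j$, so no appeal to semisimplicity of $\mathcal{A}$ or to Maschke-type theorems is needed at this stage.
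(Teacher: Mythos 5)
Your proof is correct. Note that the paper itself offers no proof of this proposition: it is quoted verbatim from Goodman and Wallach (Proposition 4.1.11 of \emph{Symmetry, Representations and Invariants}), so there is no in-paper argument to compare against. Your argument is the standard one found in that reference and elsewhere: the cycle (i)~$\Rightarrow$~(ii)~$\Rightarrow$~(iii)~$\Rightarrow$~(i), with the inheritance of complete reducibility by submodules (via the intersection $U'\cap W$) powering the induction in the first implication, and the maximal-admissible-subset argument together with the dichotomy $W_k\cap V'\in\{0,W_k\}$ forced by irreducibility carrying the third. All the steps check out, including the subtle point that $W=W'\oplus(U'\cap W)$ and the observation that directness of $V'\oplus W_k$ combined with directness of $V'=U\oplus\bigoplus_{j\in S}W_j$ yields directness of the enlarged sum. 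Nothing is missing.
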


Let $(\gamma,V)$ be a finite-dimensional representation of $\mathcal{A}$, such that $V$ is a completely reducible $\mathcal{A}$-module. 
Then it is a result from the representation theory of associative algebras (Goodman and Wallach~\cite{wall:ricg}, Proposition $3.1.6$, pg. $120$), that the map $S_\lambda$ in~\eqref{Slambda} is an  $\mathcal{A}$-module  isomorphism of $U^{\lambda}\otimes V^\lambda$ with the submodule $V^{[\lambda]} \subset V$:
\begin{align} \label{Ulambdaisom} 
U^{\lambda}\otimes V^\lambda  \cong  V^{[\lambda]}
\end{align}
Further:
\begin{align*} 
V = \bigoplus_{\lambda \in \widehat{\mathcal{A}}} V^{[\lambda]}
\end{align*}
This is called the \textit{primary decomposition} of $V$. Thus  if we extend the maps $S_\lambda$ as follows:
\begin{align}  \label{Sdef}
S_\gamma = \bigoplus_{\lambda \in \widehat{\mathcal{A}}} S_\lambda :\bigoplus_{\lambda \in \widehat{\mathcal{A}}} U^{\lambda}\otimes V^\lambda  &\longrightarrow V \\
 \sum_{\lambda \in \widehat{\mathcal{A}}}  u^\lambda \otimes  v^\lambda &\mapsto  \sum_{\lambda \in \widehat{\mathcal{A}}}  u^\lambda( v^\lambda) \nonumber
\end{align}
then  $S_\gamma$ in~\eqref{Sdef} is an isomorphism of $\mathcal{A}$-modules (where  $V$ is an $\mathcal{A}$-module via $\gamma$). Then we have the following isomorphism of algebras.
\begin{align}
S_\gamma^{-1} \gamma(a)\ S_\gamma = \bigoplus_{\lambda \in \widehat{\mathcal{A}}}  \mathbb{I}_{U^\lambda} \otimes \pi^\lambda(a)
\end{align}
where $a \in \mathcal{A}$, $U^\lambda$ is as in~\eqref{Udef}, and $(\pi^\lambda , V^\lambda)$ is a model for the  class $\lambda$.

We state  the Double Commutant Theorem, in Goodman and Wallach~\cite{wall:sri} pg. $184$, specialized here for a finite-dimensional vector space.   Let $V$ be a finite-dimensional vector space. Define, for any subset $\mathcal{S} \subset \text{End}(V)$,  the \textit{commutant} of  $\mathcal{S}$:
\begin{equation} \label{commdef}
\text{Comm}(\mathcal{S}) = \{ y \in \text{End}(V): ys = sy \:\: \forall s \in \mathcal{S} \}
\end{equation}

\begin{thm}[Double Commutant, Theorem $4.1.13$~\cite{wall:sri}] \label{dblctthmcr}
 Suppose  $\mathcal{A} \subset \begin{rm}{End}(V)\end{rm}$   is a  subalgebra  containing the identity operator, such that   $V$ is  a completely reducible $\mathcal{A}$-module.   Set $\mathcal{B} =  \begin{rm}{Comm}\end{rm}(\mathcal{A})$. Then $\mathcal{A} =  \begin{rm}{Comm}\end{rm}(\mathcal{B})$. 
\end{thm}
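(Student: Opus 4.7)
The inclusion $\mathcal{A} \subset \text{Comm}(\mathcal{B})$ is immediate from the definition of $\mathcal{B} = \text{Comm}(\mathcal{A})$. The plan for the reverse inclusion is to write down both $\mathcal{A}$ and $\text{Comm}(\mathcal{B})$ explicitly in the coordinates provided by the primary decomposition, and then observe that they coincide.

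By the complete reducibility hypothesis, the intertwiner $S_\gamma$ of~\eqref{Sdef} identifies $V$ with the finite direct sum $\bigoplus_{\lambda} U^\lambda \otimes V^\lambda$ (over those $\lambda \in \widehat{\mathcal{A}}$ with $U^\lambda \ne 0$) and conjugates the action of $\mathcal{A}$ into $\bigoplus_\lambda \mathbb{I}_{U^\lambda} \otimes \pi^\lambda(\mathcal{A})$. The crucial structural input I will invoke is that this embedding in fact fills the full
\begin{equation*}
S_\gamma^{-1}\,\mathcal{A}\,S_\gamma \;=\; \bigoplus_{\lambda} \mathbb{I}_{U^\lambda} \otimes \text{End}(V^\lambda).
\end{equation*}
Irreducibility of each $(\pi^\lambda, V^\lambda)$ yields $\pi^\lambda(\mathcal{A}) = \text{End}(V^\lambda)$ by Burnside's density theorem, and the pairwise inequivalence of the $V^\lambda$ will upgrade these componentwise surjections to joint surjectivity of the map $a \mapsto (\pi^\lambda(a))_\lambda$.

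The next step is to compute $\mathcal{B}$ in the same coordinates. I will write an arbitrary $b \in \text{End}(V)$ as a block matrix with entries $b_{\lambda\mu} \in \text{Hom}(U^\mu\otimes V^\mu,\,U^\lambda\otimes V^\lambda)$, so that $ba=ab$ for every $a \in \mathcal{A}$ becomes $b_{\lambda\mu}(\mathbb{I}\otimes \pi^\mu(a)) = (\mathbb{I}\otimes\pi^\lambda(a))\,b_{\lambda\mu}$. For $\lambda\neq\mu$ this makes $b_{\lambda\mu}$ an intertwiner of inequivalent irreducibles, so Schur's Lemma~\ref{srlem} forces $b_{\lambda\mu}=0$; on the diagonal, commuting with $\mathbb{I}_{U^\lambda}\otimes \text{End}(V^\lambda)$ forces $b_{\lambda\lambda} = X_\lambda\otimes\mathbb{I}_{V^\lambda}$ for some $X_\lambda \in \text{End}(U^\lambda)$. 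This gives $\mathcal{B} = \bigoplus_\lambda \text{End}(U^\lambda)\otimes\mathbb{I}_{V^\lambda}$. Running the identical argument with the two tensor factors swapped then yields $\text{Comm}(\mathcal{B}) = \bigoplus_\lambda \mathbb{I}_{U^\lambda}\otimes \text{End}(V^\lambda)$, matching $\mathcal{A}$ by the previous paragraph.

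The main obstacle will be the joint-surjectivity claim supporting the displayed identity: Schur's Lemma as stated in Lemma~\ref{srlem} together with Burnside yields surjectivity of each $\pi^\lambda$ only factor by factor, and one must still argue that independent choices of $\pi^\lambda(a)$ across different $\lambda$ can be realized by a single $a\in\mathcal{A}$. My plan is to note that the image of $\mathcal{A}$ inside $\bigoplus_\lambda \text{End}(V^\lambda)$ is a semisimple subalgebra whose distinct simple modules are exactly the inequivalent $V^\lambda$, so that Wedderburn's structure theorem forces this image to be all of $\bigoplus_\lambda \text{End}(V^\lambda)$. Equivalently, one can produce mutually orthogonal idempotents in $\mathcal{A}$ projecting onto each isotypic component $V^{[\lambda]}$ and use them to separate the blocks.
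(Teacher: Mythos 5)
Your proof is correct, but note that the paper does not actually prove this statement: Theorem~\ref{dblctthmcr} is quoted verbatim from Goodman and Wallach (Theorem $4.1.13$ of~\cite{wall:sri}) and used as imported background, so there is no in-paper argument to compare against. What you have written is the standard ``compute both algebras in the coordinates of the primary decomposition'' proof, and it is sound. Two remarks on how it sits relative to the paper's logical structure. First, the explicit block forms you derive, $\mathcal{A}\cong\bigoplus_\lambda \mathbb{I}_{U^\lambda}\otimes\mathrm{End}(V^\lambda)$ and $\mathcal{B}\cong\bigoplus_\lambda \mathrm{End}(U^\lambda)\otimes\mathbb{I}_{V^\lambda}$, are exactly the content of the paper's Corollary~\ref{corSS}; but you could not have cited that corollary, since its proof in the paper passes through Corollary~\ref{dualcor}, which itself invokes the Double Commutant Theorem. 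Your route is therefore non-circular precisely because you establish the displayed identity independently. Second, you correctly identified the one step where Schur's Lemma alone is insufficient: concluding $\pi^\lambda(\mathcal{A})=\mathrm{End}(V^\lambda)$ from triviality of the commutant would itself require the double commutant theorem, so the appeal to Burnside is genuinely needed, and the upgrade to joint surjectivity across inequivalent $\lambda$ is the real content. Your Wedderburn argument for that upgrade works, though a careful write-up should add the one-line justification that the image of $\mathcal{A}$ in $\bigoplus_\lambda\mathrm{End}(V^\lambda)$ is semisimple (it has a faithful completely reducible module, so its radical vanishes); alternatively, a Goursat/Skolem--Noether argument on pairs of factors, or the Jacobson-density-style proof that Goodman--Wallach themselves give (showing any $T\in\mathrm{Comm}(\mathcal{B})$ agrees with some $a\in\mathcal{A}$ on any finite set of vectors by passing to $V^{\oplus n}$), would avoid Wedderburn entirely. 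As written, your argument is complete enough to stand as a self-contained replacement for the citation.
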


Denote by $\begin{rm}{Spec}(\gamma)\end{rm}$ the set of irreducible representations that occur in the primary decomposition of $V$. For each 
$\lambda \in \begin{rm}{Spec}(\gamma)\end{rm}$, the multiplicity space $U^\lambda = \begin{rm}{Hom}_{\mathcal{A}}\end{rm}(V_\lambda,V)$ in~\eqref{Sdef} is a $\begin{rm}{Comm}\end{rm}(\mathcal{A})$-module under the action:
 \begin{equation} \label{bmodact}
b.u = bu
\end{equation}
where $b \in \begin{rm}{Comm}\end{rm}(\mathcal{A})$, and $u \in  U^\lambda$, and $bu: V^\lambda \underrightarrow{u} V  \underrightarrow{b} V$ is the composition (left multiplication by $b$) map. 

We adapt  some results from Goodman and Wallach~\cite{wall:sri}, Section $4.2.1$ (General Duality Theorem). Except for  some substitutions, these results are taken exactly as in the said reference, including the proofs. We obtain from  the originals, derived in the setting of group representations,   the versions relevant to   subalgebras of  $\begin{rm}{End}(V)\end{rm}$.   Thus we get  the associative algebra counterparts of the   \textit{Duality Theorem}, Theorem $4.2.1$~\cite{wall:sri}, and  Corollary $4.2.4$ in~\cite{wall:sri}.

\begin{thm}[Duality, Theorem $4.2.1$~\cite{wall:sri}] \label{dualitythm}
 Suppose  $\mathcal{A} \subset \begin{rm}{End}(V)\end{rm}$   is an  subalgebra  containing the identity operator. Let $V$ be a completely reducible $\mathcal{A}$-module. Each multiplicity space $U^{\lambda}$ in~\eqref{Sdef} is an irreducible $\begin{rm}{Comm}\end{rm}(\mathcal{A})$-module. Furthermore, if $\lambda, \mu \in \widehat{\mathcal{A}}$, and $U^{\lambda} \cong U^{\mu}$ as $\begin{rm}{Comm}\end{rm}(\mathcal{A})$-module, then $\lambda=\mu$.
\end{thm}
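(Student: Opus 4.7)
The plan is to exploit the explicit form of the primary decomposition of $V$ under the map $S_\gamma$ from~\eqref{Sdef} and then directly compute both $\mathcal{B} = \mathrm{Comm}(\mathcal{A})$ and the action~\eqref{bmodact} of $\mathcal{B}$ on each multiplicity space. Via $S_\gamma$, the module $V$ becomes $\bigoplus_{\lambda \in \mathrm{Spec}(\gamma)} U^\lambda \otimes V^\lambda$, and under this identification $\mathcal{A}$ takes the block form $\bigoplus_\lambda \mathbb{I}_{U^\lambda} \otimes \pi^\lambda(a)$. Since each $V^\lambda$ is an irreducible finite-dimensional $\mathcal{A}$-module, Burnside's theorem (Jacobson density in finite dimension) gives $\pi^\lambda(\mathcal{A}) = \mathrm{End}(V^\lambda)$, so the image of $\mathcal{A}$ under $S_\gamma$ is precisely $\bigoplus_\lambda \mathbb{I}_{U^\lambda} \otimes \mathrm{End}(V^\lambda)$.

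Next I would compute the commutant directly. Writing $b \in \mathrm{End}(V)$ as a block matrix with respect to the decomposition $\bigoplus_\lambda U^\lambda \otimes V^\lambda$, the condition $[b,\mathcal{A}] = 0$ forces, via Schur's lemma applied to the pairwise inequivalent $V^\lambda$, the off-diagonal blocks to vanish; each diagonal block must commute with $\mathbb{I}_{U^\lambda} \otimes \mathrm{End}(V^\lambda)$ and hence have the form $b^\lambda \otimes \mathbb{I}_{V^\lambda}$ for some $b^\lambda \in \mathrm{End}(U^\lambda)$. This yields the identification
\begin{equation*}
\mathcal{B} \;\cong\; \bigoplus_{\lambda \in \mathrm{Spec}(\gamma)} \mathrm{End}(U^\lambda) \otimes \mathbb{I}_{V^\lambda}.
\end{equation*}

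Irreducibility is then immediate. Unwinding the action~\eqref{bmodact} through $S_\lambda$, one verifies that an element $b = \bigoplus_\nu b^\nu \otimes \mathbb{I}_{V^\nu}$ of $\mathcal{B}$ acts on $u \in U^\lambda$ simply as $b^\lambda u$; since $b^\lambda$ ranges over all of $\mathrm{End}(U^\lambda)$, the module $U^\lambda$ is faithful over $\mathrm{End}(U^\lambda)$ and therefore irreducible. For the distinctness claim, suppose $\phi: U^\lambda \to U^\mu$ is a nonzero $\mathcal{B}$-module map with $\lambda \neq \mu$. Choose $b \in \mathcal{B}$ with $b^\lambda = \mathbb{I}_{U^\lambda}$ and $b^\nu = 0$ for all $\nu \neq \lambda$; equivariance yields $\phi(u) = \phi(b^\lambda u) = b^\mu \phi(u) = 0$ for every $u \in U^\lambda$, contradicting $\phi \neq 0$. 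Hence an isomorphism $U^\lambda \cong U^\mu$ of $\mathcal{B}$-modules forces $\lambda = \mu$.

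The technically delicate point is the surjection $\pi^\lambda(\mathcal{A}) = \mathrm{End}(V^\lambda)$, for which finite dimensionality of each irreducible $V^\lambda$ is essential (Burnside fails in general for infinite-dimensional irreducibles). Once that surjectivity is in hand, the remainder is a direct structural unwinding of the primary decomposition combined with Schur's lemma; the only bookkeeping care required is in transporting the composition action~\eqref{bmodact} across $S_\lambda$ to see that it coincides with left multiplication by the $\lambda$-component of $b$.
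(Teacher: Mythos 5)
Your proof is correct, but it takes a genuinely different route from the paper's. The paper argues directly at the level of intertwining operators: for irreducibility it takes nonzero $T,S\in U^{\lambda}=\mathrm{Hom}_{\mathcal{A}}(V^{\lambda},V)$, uses Schur's lemma to identify $TV^{\lambda}$ and $SV^{\lambda}$ as equivalent irreducible submodules, and extends the resulting isomorphism (by zero on a complement, using complete reducibility) to an element of $\mathrm{Comm}(\mathcal{A})$ carrying $T$ to $S$; for inequivalence it uses the projection of $TV^{\lambda}\oplus SV^{\mu}$ onto the second summand as an ad hoc element of the commutant. It never computes $\mathrm{Comm}(\mathcal{A})$ as an algebra. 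You instead compute everything up front: Burnside's theorem gives $\pi^{\lambda}(\mathcal{A})=\mathrm{End}(V^{\lambda})$, the block-matrix/Schur argument gives $\mathrm{Comm}(\mathcal{A})\cong\bigoplus_{\lambda}\mathrm{End}(U^{\lambda})\otimes\mathbb{I}_{V^{\lambda}}$, and both duality statements then fall out by transporting the action~\eqref{bmodact} across $S_{\lambda}$. In effect you prove the structural content of Corollary~\ref{corSS} first and read the Duality Theorem off from it, reversing the paper's logical order (the paper derives Corollary~\ref{corSS} \emph{from} the Duality Theorem); this is not circular since you replace that dependence with Burnside, but it does make Burnside an essential extra input that the paper's more elementary argument avoids. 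Two small points: your phrase that $U^{\lambda}$ is ``faithful over $\mathrm{End}(U^{\lambda})$ and therefore irreducible'' is misleading --- irreducibility of the standard module over a full matrix algebra comes from transitivity of the action on nonzero vectors (rank-one operators), not from faithfulness --- and you should say explicitly that the sums run over $\mathrm{Spec}(\gamma)$, the classes actually occurring in $V$, so that the vanishing of off-diagonal blocks and the choice of the idempotent $b$ are well defined; neither point affects the validity of the argument.
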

\begin{proof}
We first prove that the action of $\begin{rm}{Comm}\end{rm}(\mathcal{A})$ on $U^{\lambda}$ is irreducible.  Let $T \in U^{\lambda}$ be nonzero. Given another nonzero element $S \in U^{\lambda}$ we need to find $r \in \begin{rm}{Comm}\end{rm}(\mathcal{A})$ such that $rT = S$. Let $X = T V^{\lambda}$ and $Y = S V^{\lambda}$. Then by Schur's Lemma (Lemma~\ref{srlem}), $X$ and $Y$ are isomorphic $\mathcal{A}$-modules of class $\lambda$.  Thus there exists $u \in \begin{rm}{Comm}\end{rm}(\mathcal{A})$ such that $uT : V^{\lambda} \longrightarrow SV^{\lambda}$ is an $\mathcal{A}$-module isomorphism. Schur's Lemma implies that there exists $c \in \mathbb{C}$ such that $cuT = S$, so we may take $r = cu$. 

We now show that if $\lambda \neq \mu$ then $U^{\lambda}$ and $U^{\mu}$ are inequivalent modules for $\begin{rm}{Comm}\end{rm}(\mathcal{A})$.  Suppose 
\begin{equation*}
\phi: U^{\lambda} \longrightarrow U^{\mu}
\end{equation*}
is an intertwining operator for the action of $\begin{rm}{Comm}\end{rm}(\mathcal{A})$. Let $T\in U^{\lambda}$ be nonzero and set $S = \phi(T)$. We want to show that $S = 0$. Set $U = T V^{\lambda} + S V^{\mu}$. Then since we are assuming $\lambda \neq \mu$, the sum is direct. Let $p: U \longrightarrow S V^{\mu}$ be the corresponding projection.  Then $p \in \begin{rm}{Comm}\end{rm}(\mathcal{A})$. Since $pT = 0$, we have: 
\begin{equation*}
0 = \phi(pT) = p\phi(T) = pS = S\text{,}
\end{equation*}
which proves that $\phi = 0$. 
\end{proof}

\begin{cor}[Corollary $4.2.4$~\cite{wall:sri}] \label{dualcor}
Let $\mathcal{B} =  \begin{rm}{Comm}\end{rm}(\mathcal{A})$. Then $V$ is a completely reducible $\mathcal{B}$-module. Furthermore, the following hold:
\begin{enumerate}[label=(\roman{*})] 
\item \label{dualcor1} Suppose for every $\lambda \in \begin{rm}{Spec}(\gamma)\end{rm}$ there is given an operator $T_\lambda \in \begin{rm}{End}\end{rm}(V^\lambda)$. Then there exists $T \in \mathcal{A}$ that acts by $\mathbb{I} \otimes T_\lambda$ on the $\lambda$-summand in the decomposition~\eqref{Sdef}.
\item \label{dualcor2}  Let $T \in \mathcal{A} \cap \mathcal{B}$ (the center of $\mathcal{A}$). Then $T$ is diagonalized by~\eqref{Sdef} and acts by scalar ${\hat T}(\lambda) \in \mathbb{C}$ on $U^\lambda \otimes V^\lambda$. Conversely, given any complex valued function $f$ on $\begin{rm}{Spec}(\gamma)\end{rm}$, there exists $T \in \mathcal{A} \cap \mathcal{B}$ such that ${\hat T}(\lambda) = f(\lambda)$.
\end{enumerate}  
\end{cor}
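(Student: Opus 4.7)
The plan is to read off everything from the primary decomposition $S_\gamma:\bigoplus_{\lambda\in\mathrm{Spec}(\gamma)} U^{\lambda}\otimes V^{\lambda}\xrightarrow{\sim}V$ under which $\mathcal{A}$ acts by $\bigoplus_{\lambda}\mathbb{I}_{U^{\lambda}}\otimes \pi^{\lambda}(a)$, and then run the Duality Theorem (Theorem~\ref{dualitythm}) together with the Double Commutant Theorem (Theorem~\ref{dblctthmcr}) against each other. The first step is to show that any $b\in\mathcal{B}=\mathrm{Comm}(\mathcal{A})$ preserves each isotypic component $V^{[\lambda]}$: if $\phi\in\mathrm{Hom}_{\mathcal{A}}(V^{\lambda},V)$ then $b\phi\in\mathrm{Hom}_{\mathcal{A}}(V^{\lambda},V)$ as well, so $b$ maps $V^{[\lambda]}$ into itself. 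Under the identification $V^{[\lambda]}\cong U^{\lambda}\otimes V^{\lambda}$, the image of $\mathcal{A}$ on this summand is $\mathbb{I}_{U^{\lambda}}\otimes \pi^{\lambda}(\mathcal{A})$, and since $V^{\lambda}$ is an irreducible $\mathcal{A}$-module of finite dimension Burnside's theorem (a consequence of Schur's Lemma, Lemma~\ref{srlem}) gives $\pi^{\lambda}(\mathcal{A})=\mathrm{End}(V^{\lambda})$. A standard Schur-type argument on $U^{\lambda}\otimes V^{\lambda}$ then forces the restriction of $b$ to $V^{[\lambda]}$ to have the form $b_{\lambda}\otimes\mathbb{I}_{V^{\lambda}}$ with $b_{\lambda}\in\mathrm{End}(U^{\lambda})$.

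For the complete-reducibility claim, combine the above with the Duality Theorem: each $U^{\lambda}$ is an irreducible $\mathcal{B}$-module and distinct $\lambda$ give inequivalent modules. So as a $\mathcal{B}$-module, $V^{[\lambda]}\cong U^{\lambda}\otimes V^{\lambda}$ is a direct sum of $\dim V^{\lambda}$ irreducible copies of $U^{\lambda}$, and $V=\bigoplus_{\lambda}V^{[\lambda]}$ is therefore completely reducible over $\mathcal{B}$ by Proposition~\ref{compred}.

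For part~\ref{dualcor1}, the previous paragraph shows the inclusion $\mathcal{B}\subseteq\bigoplus_{\lambda}\mathrm{End}(U^{\lambda})\otimes\mathbb{I}_{V^{\lambda}}$, and the reverse inclusion is immediate since elements on the right commute with $\mathbb{I}_{U^{\lambda}}\otimes\pi^{\lambda}(a)$. Hence $\mathcal{B}=\bigoplus_{\lambda}\mathrm{End}(U^{\lambda})\otimes\mathbb{I}_{V^{\lambda}}$. Applying the Double Commutant Theorem $\mathcal{A}=\mathrm{Comm}(\mathcal{B})$ and computing this commutant on each summand yields $\mathcal{A}=\bigoplus_{\lambda}\mathbb{I}_{U^{\lambda}}\otimes\mathrm{End}(V^{\lambda})$; in particular any prescribed collection $\{T_{\lambda}\in\mathrm{End}(V^{\lambda})\}$ assembles to a bona fide element $T\in\mathcal{A}$ acting as $\mathbb{I}_{U^{\lambda}}\otimes T_{\lambda}$ on the $\lambda$-summand.

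For part~\ref{dualcor2}, intersect the explicit descriptions of $\mathcal{A}$ and $\mathcal{B}$ summand by summand: an element of $\mathcal{A}\cap\mathcal{B}$ must be simultaneously of the form $\mathbb{I}_{U^{\lambda}}\otimes T_{\lambda}$ and $S_{\lambda}\otimes\mathbb{I}_{V^{\lambda}}$ on $U^{\lambda}\otimes V^{\lambda}$. Equating the two and using nondegeneracy of the tensor product forces $S_{\lambda}$ and $T_{\lambda}$ to be a common scalar $\hat T(\lambda)\in\mathbb{C}$, so $T$ acts by $\hat T(\lambda)$ on the $\lambda$-summand of the decomposition~\eqref{Sdef}; the converse is immediate by taking $T=\bigoplus_{\lambda} f(\lambda)\,\mathbb{I}_{U^{\lambda}\otimes V^{\lambda}}$, which lies in $\mathcal{A}$ by~\ref{dualcor1} and in $\mathcal{B}$ by symmetry. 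The only real obstacle in this plan is the tensor-product Schur step that identifies the commutant of $\mathbb{I}_{U^{\lambda}}\otimes\pi^{\lambda}(\mathcal{A})$ with $\mathrm{End}(U^{\lambda})\otimes\mathbb{I}_{V^{\lambda}}$; everything else is bookkeeping against the primary decomposition together with one invocation each of Duality and Double Commutant.
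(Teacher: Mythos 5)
Your proof is correct and follows essentially the same route as the paper's: complete reducibility of $V$ over $\mathcal{B}$ via the Duality Theorem, part (i) via the Double Commutant Theorem, and part (ii) by noting that an operator simultaneously of the forms $R_\lambda\otimes\mathbb{I}$ and $\mathbb{I}\otimes S_\lambda$ on each summand must be scalar. The only difference is that you make explicit the identification $\mathcal{B}=\bigoplus_\lambda\mathrm{End}(U^\lambda)\otimes\mathbb{I}_{V^\lambda}$ (the ``tensor-product Schur step'') before taking the commutant, whereas the paper asserts $T\in\mathrm{Comm}(\mathcal{B})$ directly and leaves that computation implicit.
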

\begin{proof}
Since $V$ is the direct sum of $\mathcal{B}$-invariant irreducible subspaces by Theorem~\ref{dualitythm}, it is a completely reducible $\mathcal{B}$-module by Proposition~\ref{compred}. We now prove the other assertions.
\begin{enumerate}[label=(\roman{*})] 
\item Let $T \in \begin{rm}{End}(V)\end{rm}$ be the operator that acts by $\mathbb{I} \otimes T_\lambda$ on the $\lambda$ summand. Then $T \in \begin{rm}{Comm}\end{rm}(\mathcal{B})$, and hence $T \in  \mathcal{A}$ by the Double Commutant Theorem (Theorem~\ref{dblctthmcr}).
\item Each summand in~\eqref{Sdef} is invariant under $T$, and the action of $T$ on the $\lambda$ summand is by an operator of the form $R_\lambda \otimes \mathbb{I} = \mathbb{I} \otimes S_\lambda$ with $R_\lambda \in \begin{rm}{End}(U^\lambda)\end{rm}$, and $S_\lambda \in \begin{rm}{End}(V^\lambda)\end{rm}$. Such an operator must be a scalar multiple of the identity operator. The converse follows from~\ref{dualcor1}.
\end{enumerate}  

\end{proof}

An algebra $\mathcal{B}$ is  \textit{simple} if  the only two sided ideals in  $\mathcal{B}$ are $0$ and $\mathcal{B}$.  An algebra $\mathcal{A}$ is a \textit{semisimple} algebra if  it is  a finite direct sum of  \textit{simple} algebras. This is equivalent, by Wedderburn's Theorem (Goodman and Wallach~\cite{wall:ricg}, Theorem $3.2.1$, pg. $128$),  to the statement  that $\mathcal{A}$ is isomorphic to a finite sum of matrix algebras.  

For an algebra $\mathcal{A} \subset \begin{rm}{End}\end{rm}(V)$, such that $V$ is a completely reducible $\mathcal{A}$-module,  let  $\begin{rm}{Spec}\end{rm}(\mathcal{A})$ be the set of irreducible representations that occur in the primary decomposition of $V$. 

\begin{cor} \label{corSS}
 Let $\mathcal{A} \subset \begin{rm}{End}\end{rm}(V)$ be a  subalgebra containing the identity operator, such that $V$ is a completely reducible $\mathcal{A}$-module. Let   $\mathcal{B} = \begin{rm}{Comm}\end{rm}(\mathcal{A})$.  Then there is an $\mathcal{A}$-module isomorphism:
\begin{align}  \label{Aisom}
S_V: \bigoplus_{\lambda \in \widehat{\mathcal{A}}} U^{\lambda}\otimes V^\lambda  &\longrightarrow V \\
 \sum_{\lambda \in \widehat{\mathcal{A}}}  u^\lambda \otimes  v^\lambda &\mapsto  \sum_{\lambda \in \widehat{\mathcal{A}}}  u^\lambda( v^\lambda) \nonumber
\end{align}
where $V^\lambda$ is an irreducible module for the class $\lambda \in \widehat{\mathcal{A}}$, and $U^\lambda = \begin{rm}{Hom}_{\mathcal{A}}\end{rm}(V^\lambda,V)$. Under this isomorphism:
\begin{align*} 
\mathcal{A} \cong \bigoplus_{\lambda \in \widehat{\mathcal{A}}}  \mathbb{I}_{U^\lambda} \otimes \begin{rm}{End}\end{rm}(V^{\lambda})
\end{align*}
and:
\begin{equation*}
\mathcal{B} \cong \bigoplus_{\lambda \in \widehat{\mathcal{A}}}   \begin{rm}{End}\end{rm}(U^\lambda)  \otimes \mathbb{I}_{V^\lambda}
\end{equation*}
where the action of $\mathcal{B}$ on each $U^\lambda$ is given by~\eqref{bmodact}.

In particular,  $\mathcal{A}$ is semisimple.
\end{cor}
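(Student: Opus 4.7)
The plan is to assemble this corollary by specializing the primary decomposition of $V$ together with the Duality Theorem (Theorem~\ref{dualitythm}) and its Corollary~\ref{dualcor}, both proved immediately above. First I would recognize $S_V$ as nothing other than the map $S_\gamma$ of~\eqref{Sdef} attached to the representation $\gamma$ of $\mathcal{A}$ on $V$: the isomorphism $U^\lambda \otimes V^\lambda \cong V^{[\lambda]}$ recorded in~\eqref{Ulambdaisom} combined with the primary decomposition $V = \bigoplus_{\lambda} V^{[\lambda]}$ shows at once that $S_V = \bigoplus_\lambda S_\lambda$ is an $\mathcal{A}$-module isomorphism, with only the $\lambda \in \mathrm{Spec}(\gamma)$ contributing (the remaining summands have $U^\lambda = 0$ and drop out).

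Next, to pin down the image of $\mathcal{A}$ under $S_V$, I would transport the action of $\mathcal{A}$ through the isomorphism. On each summand $U^\lambda \otimes V^\lambda$ the action reads $a.(u \otimes v) = u \otimes \pi^\lambda(a) v$, so the image lies a priori in $\bigoplus_\lambda \mathbb{I}_{U^\lambda} \otimes \pi^\lambda(\mathcal{A})$. Corollary~\ref{dualcor}\ref{dualcor1} supplies the reverse inclusion: for any independent choice of $T_\lambda \in \mathrm{End}(V^\lambda)$ over $\lambda \in \mathrm{Spec}(\gamma)$ there exists an element of $\mathcal{A}$ whose image is $\bigoplus_\lambda \mathbb{I}_{U^\lambda} \otimes T_\lambda$. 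This yields the asserted identification $\mathcal{A} \cong \bigoplus_\lambda \mathbb{I}_{U^\lambda} \otimes \mathrm{End}(V^\lambda)$.

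For the commutant $\mathcal{B} = \mathrm{Comm}(\mathcal{A})$, Theorem~\ref{dualitythm} supplies the key input: each multiplicity space $U^\lambda$ is an irreducible $\mathcal{B}$-module under the action~\eqref{bmodact}, and distinct $\lambda$ give inequivalent $\mathcal{B}$-modules. Any $b \in \mathcal{B}$, being an $\mathcal{A}$-intertwiner of $V$ to itself, must preserve every isotypic component $V^{[\lambda]}$ and, by Schur's Lemma~\ref{srlem}, annihilate cross-terms between distinct $\lambda$. Within each summand, an operator that commutes with $\mathbb{I}_{U^\lambda} \otimes \mathrm{End}(V^\lambda)$ is forced to have the form $B_\lambda \otimes \mathbb{I}_{V^\lambda}$ for some $B_\lambda \in \mathrm{End}(U^\lambda)$, while the reverse inclusion (every such operator centralizes $\mathcal{A}$) is immediate. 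Hence $\mathcal{B} \cong \bigoplus_\lambda \mathrm{End}(U^\lambda) \otimes \mathbb{I}_{V^\lambda}$.

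Finally, the semisimplicity of $\mathcal{A}$ is automatic: $\mathcal{A}$ has just been exhibited as a finite direct sum of matrix algebras $\mathrm{End}(V^\lambda)$, each simple, which is exactly the Wedderburn characterization recalled just before the statement. I do not anticipate a substantive obstacle. All of the technical content (complete reducibility, primary decomposition, surjectivity of $\pi^\lambda$ via Corollary~\ref{dualcor}\ref{dualcor1}, and irreducibility of the multiplicity spaces via Theorem~\ref{dualitythm}) has been packaged in the preceding lemmas; the remaining work is bookkeeping, namely translating the algebra actions across the isomorphism $S_V$ and invoking Schur's Lemma inside each isotypic summand to compute the commutant.
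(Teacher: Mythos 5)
Your proposal is correct and follows essentially the same route as the paper: the isomorphism $S_V$ is the map $S_\gamma$ of~\eqref{Sdef} obtained from complete reducibility and the primary decomposition, the identification of $\mathcal{A}$ comes from Corollary~\ref{dualcor}~\ref{dualcor1}, the identification of $\mathcal{B}$ is the content of the Duality Theorem~\ref{dualitythm} (which you unpack by computing the commutant summand-by-summand where the paper simply cites the theorem), and semisimplicity follows from Wedderburn applied to the resulting finite sum of matrix algebras. The only cosmetic difference is that you let the summands with $U^\lambda=0$ drop out of the sum over $\widehat{\mathcal{A}}$, whereas the paper argues that $\widehat{\mathcal{A}}=\mathrm{Spec}(\mathcal{A})$ once semisimplicity is in hand; both are valid.
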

\begin{proof} $V$ is a completely reducible $\mathcal{A}$-module. This implies there is an $\mathcal{A}$-module isomorphism given by~\eqref{Sdef}:
\begin{align*}
S_V: \bigoplus_{\lambda \in \widehat{\mathcal{A}}} U^{\lambda}\otimes V^\lambda  &\longrightarrow V \\
 \sum_{\lambda \in \widehat{\mathcal{A}}}  u^\lambda \otimes  v^\lambda &\mapsto  \sum_{\lambda \in \widehat{\mathcal{A}}}  u^\lambda( v^\lambda) \nonumber
\end{align*}
where $V^\lambda$ is an irreducible module for the class $\lambda \in \widehat{\mathcal{A}}$, and $U^\lambda = \begin{rm}{Hom}_{\mathcal{A}}\end{rm}(V^\lambda,V)$.  Then by definition of $\mathcal{A}$ action (as a subalgerba of  $\begin{rm}{End}\end{rm}(V)$) on $V$,  under this isomorphism, Corollary~\ref{dualcor} implies:
\begin{align*} 
\mathcal{A} \cong \bigoplus_{\begin{rm}{Spec}\end{rm}(\mathcal{A})}  \mathbb{I}_{U^\lambda} \otimes \begin{rm}{End}\end{rm}(V^{\lambda})
\end{align*}
Since $V$ is finite-dimensional, the above implies that $\begin{rm}{Spec}\end{rm}(\mathcal{A})$ is a finite set. This  in turn implies that $\mathcal{A}$ is semisimple as it is isomorphic to a finite direct sum of finite dimensional matrix algebras. Then a standard theorem on semisimple algebras, Proposition $3.3.1$ in~\cite{wall:ricg}, implies that each irreducible representations of $\mathcal{A}$ is equivalent to some element of $\begin{rm}{Spec}\end{rm}(\mathcal{A})$. Hence $\widehat{\mathcal{A}} = \begin{rm}{Spec}\end{rm}(\mathcal{A})$.  Therefore:
\begin{align*} 
\mathcal{A} \cong \bigoplus_{\lambda \in \widehat{\mathcal{A}}}  \mathbb{I}_{U^\lambda} \otimes \begin{rm}{End}\end{rm}(V^{\lambda})
\end{align*}
By Duality Theorem (Theorem~\ref{dualitythm}), under $S_V$:
\begin{equation*}
\mathcal{B} \cong \bigoplus_{\lambda \in \widehat{\mathcal{A}}}   \begin{rm}{End}\end{rm}(U^\lambda)  \otimes \mathbb{I}_{V^\lambda}
\end{equation*}
where the action of $\mathcal{B}$ on each $U^\lambda$ is given by~\eqref{bmodact}.
\end{proof}

We state  the semisimple algebra version of the Double Commutant Theorem, in Goodman and Wallach~\cite{wall:ricg} pg. $137$.  
\begin{thm}[Double Commutant Theorem for Semisimple Algebras]\label{dblctthm}
 
Suppose  $\mathcal{A} \subset \begin{rm}{End}(V)\end{rm}$   is a semisimple subalgebra  containing the identity operator. 
Then the algebra  $\mathcal{B} =  \begin{rm}{Comm}\end{rm}(\mathcal{A})$  is semisimple and $\mathcal{A} =  \begin{rm}{Comm}\end{rm}(\mathcal{B})$. Furthermore, there exists an $\mathcal{A}$-module isomorphism:
\begin{align*}  
S_\mathcal{A} : \bigoplus^r_{j=1}U_j  \otimes V_j  & \longrightarrow  V  \\
 \sum^r_{j=1}  u_j \otimes  v_j &\mapsto  \sum^r_{j=1}  u_j( v_j) \nonumber
\end{align*}
where $V_j$ is an irreducible $\mathcal{A}$-module, and $U_j =  \begin{rm}{Hom}_{\mathcal{A}}\end{rm}(V_j,V)$.  Under this isomorphism:
\begin{equation*}
\mathcal{A} \cong  \bigoplus^r_{j=1} \mathbb{I}_{U_j}  \otimes \begin{rm}{End}\end{rm}(V_j)
\end{equation*}
and:
\begin{equation*}
\mathcal{B} \cong  \bigoplus^r_{j=1}   \begin{rm}{End}\end{rm}(U_j)  \otimes \mathbb{I}_{V_j}
\end{equation*}
\end{thm}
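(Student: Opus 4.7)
The plan is to reduce the statement to Corollary~\ref{corSS} once I establish that $V$ is a completely reducible $\mathcal{A}$-module. All the heavy machinery (multiplicity spaces, duality, the explicit form of $\mathcal{A}$ and $\mathcal{B}$ under $S_\mathcal{A}$) is already available there; the only thing not immediately covered is the completely reducible hypothesis, which follows from semisimplicity of $\mathcal{A}$ via Wedderburn. The second commutant identity $\mathcal{A} = \text{Comm}(\mathcal{B})$ will then come directly from Theorem~\ref{dblctthmcr}, which applies once complete reducibility is established.

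First I would recall that by Wedderburn's theorem (cited just after Corollary~\ref{corSS} in the form Goodman--Wallach~\cite{wall:ricg} Thm.~$3.2.1$), any semisimple algebra $\mathcal{A}$ is isomorphic to a finite direct sum $\bigoplus_{i=1}^s M_{n_i}(\mathbb{C})$ of matrix algebras. Under this isomorphism the identity of $\mathcal{A}$ decomposes as $\sum_i e_i$, where each $e_i$ is the identity of the $i$-th simple summand. Since $\mathcal{A}$ contains the identity of $\text{End}(V)$, the operators $e_i$ are mutually orthogonal idempotents in $\text{End}(V)$ summing to $\mathbb{I}_V$, so $V = \bigoplus_i e_i V$ and each $e_i V$ is an $\mathcal{A}$-submodule on which only the $i$-th simple summand $M_{n_i}(\mathbb{C})$ acts nontrivially. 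On each such summand the action of $M_{n_i}(\mathbb{C})$ makes $e_i V$ into a direct sum of copies of the standard simple module $\mathbb{C}^{n_i}$ (a standard calculation using the matrix units $E_{jk}^{(i)}$). Assembling these gives a decomposition of $V$ into a direct sum of irreducible $\mathcal{A}$-submodules, so by Proposition~\ref{compred}(ii) $\Rightarrow$ (i) the module $V$ is completely reducible.

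Once complete reducibility of $V$ is in hand, Corollary~\ref{corSS} applies verbatim and supplies both the $\mathcal{A}$-module isomorphism
\begin{equation*}
S_\mathcal{A}: \bigoplus_{j=1}^r U_j \otimes V_j \longrightarrow V, \qquad \sum_j u_j \otimes v_j \mapsto \sum_j u_j(v_j),
\end{equation*}
with $V_j$ irreducible $\mathcal{A}$-modules and $U_j = \text{Hom}_{\mathcal{A}}(V_j,V)$, and the two desired descriptions
\begin{equation*}
\mathcal{A} \cong \bigoplus_{j=1}^r \mathbb{I}_{U_j} \otimes \text{End}(V_j), \qquad \mathcal{B} = \text{Comm}(\mathcal{A}) \cong \bigoplus_{j=1}^r \text{End}(U_j) \otimes \mathbb{I}_{V_j}.
\end{equation*}
The index set of the primary decomposition is finite because $V$ is finite-dimensional, so only finitely many $\lambda \in \widehat{\mathcal{A}}$ appear; after discarding unused classes I label the surviving ones by $1,\ldots,r$.

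Finally I would read off the remaining claims. Semisimplicity of $\mathcal{B}$ is immediate from the second displayed isomorphism, since $\mathcal{B}$ is a finite direct sum of full matrix algebras $\text{End}(U_j)$ and Wedderburn's criterion applies in the reverse direction. For the double commutant identity $\mathcal{A} = \text{Comm}(\mathcal{B})$, I invoke Theorem~\ref{dblctthmcr} directly: its hypotheses are precisely that $\mathcal{A}$ is a subalgebra of $\text{End}(V)$ containing $\mathbb{I}$ and that $V$ is completely reducible as an $\mathcal{A}$-module, both of which we have now verified. The only delicate point, and the one I would double-check, is the passage from ``$\mathcal{A}$ is semisimple as an abstract algebra'' to ``$V$ is completely reducible as a module over the concrete subalgebra $\mathcal{A} \subset \text{End}(V)$''; the argument above via the idempotents $e_i$ handles this and crucially uses that $\mathcal{A}$ contains the identity of $\text{End}(V)$, so that the $e_i$ partition $V$ and no part of $V$ is killed by $\mathcal{A}$.
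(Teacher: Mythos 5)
Your proof is correct, but note that the paper does not actually prove this statement: Theorem~\ref{dblctthm} is stated as a quoted background result, cited to Goodman and Wallach~\cite{wall:ricg}, p.~137, with no argument supplied. What you have done is supply a self-contained derivation from the paper's other appendix results, and the logic holds up. The one genuinely new ingredient you need --- that semisimplicity of the abstract algebra $\mathcal{A}$ forces $V$ to be a completely reducible $\mathcal{A}$-module --- is handled correctly by the Wedderburn idempotent argument: since $\mathbb{I}_V \in \mathcal{A}$ it is the unit of $\mathcal{A}$, the central idempotents $e_i$ of the simple summands are honest orthogonal idempotents in $\operatorname{End}(V)$ summing to $\mathbb{I}_V$, each $e_iV$ is a unital module over a full matrix algebra and hence a sum of copies of the standard module, and Proposition~\ref{compred} converts the resulting decomposition into complete reducibility. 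From there the reduction is clean and non-circular: Corollary~\ref{corSS} (whose proof in the paper relies only on Theorem~\ref{dualitythm}, Corollary~\ref{dualcor} and Theorem~\ref{dblctthmcr}, not on Theorem~\ref{dblctthm} itself) delivers the isomorphism $S_{\mathcal{A}}$ and the block descriptions of $\mathcal{A}$ and $\mathcal{B}$; Theorem~\ref{dblctthmcr} gives $\mathcal{A}=\operatorname{Comm}(\mathcal{B})$; and semisimplicity of $\mathcal{B}$ is read off from its presentation as a finite direct sum of matrix algebras. Your direct argument for complete reducibility is also slightly more general than what the paper offers elsewhere (Corollary~\ref{sass} goes the other direction, from self-adjointness to semisimplicity), so nothing in the paper's appendix is being assumed that you have not verified.
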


Next we consider a general Hilbert space,  and state the remarkable von Neumann Density Theorem.  The  proof of the von Neumann Density Theorem  can be found in~\cite{kr:ftoa,wall:rrg}. Consider a Hilbert space $\mathcal{H}$ and the space, $B(\mathcal{H})$,  of bounded linear operators of $\mathcal{H}$. 
Recall the following definition.  For any subset $\mathcal{S} \subset B(\mathcal{H})$,  the \textit{commutant} of  $\mathcal{S}$ is defined as:
\begin{equation*} 
\text{Comm}(\mathcal{S}) = \{ y \in B(\mathcal{H}): ys = sy \:\: \forall s \in \mathcal{S} \}
\end{equation*}
\begin{thm}[von Neumann Density Theorem]  \label{vndnsthm}
Let  $\mathcal{A} \subset B(\mathcal{H})$ be a self-adjoint algebra of operators containing the identity operator, then the weak  and strong closures of $\mathcal{A}$ in $B(\mathcal{H})$ is  $\begin{rm}{Comm}\end{rm}(\begin{rm}{Comm}\end{rm}(\mathcal{A}))$.
\end{thm}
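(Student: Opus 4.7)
The plan is to run the classical Double Commutant argument. Write $\mathcal{A}' = \text{Comm}(\mathcal{A})$ and $\mathcal{A}'' = \text{Comm}(\mathcal{A}')$. First I would observe that for any single $s \in B(\mathcal{H})$ the map $y \mapsto ys - sy$ is continuous in the weak operator topology, so $\{y : ys = sy\}$ is WOT-closed; intersecting over $s \in \mathcal{A}'$ shows $\mathcal{A}''$ is WOT-closed, hence also SOT-closed. Since trivially $\mathcal{A} \subset \mathcal{A}''$, the SOT and WOT closures of $\mathcal{A}$ are both contained in $\mathcal{A}''$. The content of the theorem is the reverse inclusion: every $T \in \mathcal{A}''$ lies in the SOT-closure of $\mathcal{A}$ (which, being stronger, also gives the weak statement).

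Next I would prove the one-vector version: for $T \in \mathcal{A}''$ and $\xi \in \mathcal{H}$, $T\xi \in \overline{\mathcal{A}\xi}$. Let $K = \overline{\mathcal{A}\xi}$ and $P$ the orthogonal projection onto $K$. Because $\mathcal{A}$ contains the identity, $\xi \in K$, and $K$ is clearly $\mathcal{A}$-invariant. Self-adjointness of $\mathcal{A}$ then forces $K^\perp$ to be $\mathcal{A}$-invariant as well: for $\eta \in K^\perp$, $\zeta \in K$, and $A \in \mathcal{A}$, one has $\langle A\eta, \zeta \rangle = \langle \eta, A^*\zeta\rangle = 0$ since $A^*\zeta \in K$. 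Hence $P \in \mathcal{A}'$, and therefore $T$, being in $\mathcal{A}''$, commutes with $P$, giving $T\xi = TP\xi = PT\xi \in K$. Thus for each $\varepsilon > 0$ there exists $A \in \mathcal{A}$ with $\|A\xi - T\xi\| < \varepsilon$.

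To upgrade this to the finitely-many-vectors neighborhoods that define the SOT, I would use the standard amplification trick. On $\mathcal{H}^{(n)} = \mathcal{H}^{\oplus n}$, let $\mathcal{A}^{(n)} = \{A \oplus \cdots \oplus A : A \in \mathcal{A}\}$, which is again a self-adjoint unital algebra. A direct matrix-unit computation gives $(\mathcal{A}^{(n)})' = M_n(\mathcal{A}')$ (the full algebra of $n \times n$ matrices with entries in $\mathcal{A}'$), and taking commutants once more yields $(\mathcal{A}^{(n)})'' = (\mathcal{A}'')^{(n)}$. Consequently, if $T \in \mathcal{A}''$ then $T^{(n)} = T \oplus \cdots \oplus T \in (\mathcal{A}^{(n)})''$. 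Applying the one-vector lemma inside $B(\mathcal{H}^{(n)})$ to $T^{(n)}$ and $\xi = (\xi_1, \ldots, \xi_n)$ produces $A \in \mathcal{A}$ with $\sum_i \|A\xi_i - T\xi_i\|^2 < \varepsilon^2$, so in particular $\|A\xi_i - T\xi_i\| < \varepsilon$ for every $i$. This exhibits $T$ in every basic SOT-neighborhood of itself relative to $\mathcal{A}$, proving $T$ is in the SOT-closure of $\mathcal{A}$. Combined with the first paragraph's chain of inclusions, the SOT-closure, the WOT-closure and $\mathcal{A}''$ all coincide.

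The step I expect to require the most care is the amplification identity $(\mathcal{A}^{(n)})' = M_n(\mathcal{A}')$, since one must check in both directions that an operator on $\mathcal{H}^{(n)}$ commuting with every $A \oplus \cdots \oplus A$ is exactly an $n \times n$ matrix each of whose entries commutes with all of $\mathcal{A}$. Everything else reduces to invariance arguments and the self-adjointness hypothesis, which are clean; the self-adjointness is used crucially in the one-vector lemma to conclude that the cyclic-subspace projection lies in $\mathcal{A}'$ and not merely in a one-sided commutant.
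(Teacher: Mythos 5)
Your proof is correct and complete: the reduction to the one-vector lemma via the cyclic-subspace projection (which lies in $\mathrm{Comm}(\mathcal{A})$ precisely because $\mathcal{A}$ is self-adjoint and unital), followed by the $n$-fold amplification to handle basic SOT-neighborhoods, is the standard double commutant argument. The paper itself does not prove this theorem — it cites Kadison--Ringrose and Wallach for the proof — and the argument you give is essentially the one found in those references, so there is nothing to compare beyond noting that your write-up supplies the proof the paper omits.
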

 
 \section{Some results on representations of self-adjoint algebras} \label{saalg}
 We state a few  results on representations of self-adjoint algebras. Let $V$ be a finite-dimensional Hilbert space, with an inner product $\langle \cdot |  \cdot \rangle$, and   $\mathcal{A} \subset \begin{rm}{End}\end{rm}(V)$ a subalgebra. Let the adjoint of an  algebra $\mathcal{A} \subset \begin{rm}{End}\end{rm}(V)$ be $\mathcal{A}^\dag = \{A^\dag : A \in \mathcal{A}\}$. An algebra $\mathcal{A}$  is self-adjoint if $\mathcal{A} = \mathcal{A}^\dag$. From the results in this section, Lemma~\ref{lemmaSA}, and Proposition~\ref{propSA} appear in~\cite{cp:lg} (pg. 145).
\begin{lemma} \label{lemmaSA}
 Let $\mathcal{A} \subset \begin{rm}{End}\end{rm}(V)$ be a self-adjoint subalgebra. If $W \in V$ is an $\mathcal{A}$-invariant subspace, then $W^\perp = \{ {v} \in V : \langle v | w \rangle = 0 \: \forall \:  {w} \in W\}$ is $\mathcal{A}$-invariant. 
\end{lemma}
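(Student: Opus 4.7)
The plan is to show directly that for any $v \in W^\perp$ and any $A \in \mathcal{A}$, the vector $Av$ remains orthogonal to every element of $W$. The argument hinges on pushing the action of $A$ across the inner product by taking its adjoint, at which point self-adjointness of $\mathcal{A}$ and $\mathcal{A}$-invariance of $W$ finish the job.

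Concretely, I would fix an arbitrary $v \in W^\perp$ and $A \in \mathcal{A}$, and then for any $w \in W$ compute
\begin{equation*}
\langle A v \mid w \rangle = \langle v \mid A^\dag w \rangle.
\end{equation*}
Since $\mathcal{A}$ is self-adjoint, $A^\dag \in \mathcal{A}$, and since $W$ is $\mathcal{A}$-invariant, $A^\dag w \in W$. But $v \in W^\perp$, so $\langle v \mid A^\dag w \rangle = 0$. Hence $\langle A v \mid w \rangle = 0$ for all $w \in W$, which means $A v \in W^\perp$. As $A$ and $v$ were arbitrary, this shows $\mathcal{A}\, W^\perp \subset W^\perp$, i.e., $W^\perp$ is $\mathcal{A}$-invariant.

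There is really no obstacle here: the lemma is a standard one-line consequence of the definition of the adjoint together with self-adjointness of the algebra. The only thing to be mildly careful about is keeping the conjugate-linearity of the inner product in the correct slot when moving $A$ across, but this does not affect the vanishing conclusion. Finite-dimensionality of $V$ is not needed for the argument, though it is assumed in the ambient setup of the appendix.
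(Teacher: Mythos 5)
Your proof is correct and follows exactly the paper's own argument: move $A$ across the inner product as $A^\dag$, use self-adjointness to keep $A^\dag$ in $\mathcal{A}$, and use invariance of $W$ to conclude $\langle Av \mid w\rangle = \langle v \mid A^\dag w\rangle = 0$. No differences worth noting.
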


\begin{proof}
Let $ {w} \in W$, $ {v} \in  W^\perp$, $A \in \mathcal{A}$. Then $A^\dag \in \mathcal{A}$, which implies $A^\dag  {w} \in W$ $\implies$   $\langle A  {v} | w \rangle  = \langle  v | A^\dag w \rangle = 0$ $\implies$  $A  {v} \in W^\perp$.
\end{proof}

\begin{prop} \label{propSA}
 Let $\mathcal{A} \subset \begin{rm}{End}\end{rm}(V)$ be a self-adjoint subalgebra. Then $V$ is an orthogonal direct sum of irreducible  $\mathcal{A}$-modules. In particular, $V$ is a completely reducible $\mathcal{A}$-module.
\end{prop}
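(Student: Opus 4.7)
The plan is to prove the proposition by induction on $\dim V$, using Lemma~\ref{lemmaSA} as the key tool to produce $\mathcal{A}$-invariant orthogonal complements at each step. The base case is trivial: if $V = 0$, the empty orthogonal direct sum suffices, and if $V$ is itself irreducible, there is nothing more to do.

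For the inductive step, I would argue as follows. Assume the result for all self-adjoint subalgebras acting on Hilbert spaces of dimension strictly less than $\dim V$, and suppose $V$ is not irreducible as an $\mathcal{A}$-module. Among the nonzero $\mathcal{A}$-invariant subspaces of $V$, choose one of minimal positive dimension; call it $W$. By minimality, $W$ is an irreducible $\mathcal{A}$-submodule. (Such a $W$ exists because the collection of nonzero $\mathcal{A}$-invariant subspaces is nonempty and the dimensions are bounded below.) Now Lemma~\ref{lemmaSA} applies to the self-adjoint algebra $\mathcal{A}$ and the invariant subspace $W$, yielding that $W^\perp$ is also $\mathcal{A}$-invariant. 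Since $V$ is finite-dimensional and the inner product is non-degenerate, we have the orthogonal decomposition $V = W \oplus W^\perp$, and $\dim W^\perp < \dim V$.

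The restriction of $\mathcal{A}$ to $W^\perp$ is again a self-adjoint subalgebra of $\text{End}(W^\perp)$ with respect to the restricted inner product, so the induction hypothesis applies: $W^\perp$ decomposes as an orthogonal direct sum $W^\perp = W_1 \oplus \cdots \oplus W_k$ of irreducible $\mathcal{A}$-submodules. Combining with $W$ gives the desired orthogonal direct sum decomposition of $V$ into irreducibles. The ``in particular'' clause is then immediate from Proposition~\ref{compred}, since the equivalence \ref{compred1}$\Leftrightarrow$\ref{compred2} shows that a direct sum decomposition into irreducibles is equivalent to complete reducibility.

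I do not anticipate a genuine obstacle here: the self-adjointness hypothesis is precisely what is needed to pass from invariance of $W$ to invariance of $W^\perp$ (without it, complete reducibility can fail for general associative subalgebras of $\text{End}(V)$, e.g.\ the algebra of upper-triangular matrices acting on $\mathbb{C}^2$). The only care required is the standard one for finite-dimensional induction arguments, namely verifying that the minimality step genuinely produces an irreducible submodule and that the restricted algebra remains self-adjoint on the subspace $W^\perp$ with the induced inner product; both are routine.
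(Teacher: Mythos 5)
Your proof is correct and follows essentially the same route as the paper's: pick a minimal-dimensional (hence irreducible) $\mathcal{A}$-invariant subspace $W$, invoke Lemma~\ref{lemmaSA} to get the orthogonal invariant complement $W^\perp$, and induct on dimension. The extra care you take (verifying that the restricted algebra on $W^\perp$ is still self-adjoint, and citing Proposition~\ref{compred} for the ``in particular'' clause) is exactly the detail the paper leaves implicit.
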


\begin{proof}
Let   $W \subset V$ be an  $\mathcal{A}$-invariant subspace of minimal dimension.  Then it is by definition irreducible. Since $ \mathcal{A} =  \mathcal{A}^\dag$, by Lemma~\ref{lemmaSA},  $V = W \oplus W^\perp$ is an orthogonal direct sum  of $\mathcal{A}$-modules. The conclusion follows by  induction on dimension.
\end{proof}

\begin{cor} \label{sass}
 Let $\mathcal{A} \subset \begin{rm}{End}\end{rm}(V)$ be a self-adjoint subalgebra  containing the identity operator. Then $\mathcal{A}$ is semisimple.
\end{cor}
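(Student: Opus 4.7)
The statement follows almost immediately by chaining two results that have already been established, so the plan is really a one-line deduction that I will expand into a short explanatory paragraph. First, I would invoke Proposition~\ref{propSA}: since $\mathcal{A}$ is a self-adjoint subalgebra of $\mathrm{End}(V)$ with $V$ finite-dimensional, Lemma~\ref{lemmaSA} together with an induction on $\dim V$ shows that $V$ decomposes as an orthogonal direct sum of irreducible $\mathcal{A}$-modules. In particular, $V$ is a completely reducible $\mathcal{A}$-module.

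Second, I would feed this into Corollary~\ref{corSS}, whose hypotheses now all hold: $\mathcal{A}\subset\mathrm{End}(V)$ is a subalgebra, it contains the identity operator by assumption, and $V$ is a completely reducible $\mathcal{A}$-module by the previous step. Corollary~\ref{corSS} delivers, among other things, the explicit isomorphism
\begin{equation*}
\mathcal{A}\ \cong\ \bigoplus_{\lambda\in\widehat{\mathcal{A}}}\ \mathbb{I}_{U^\lambda}\otimes\mathrm{End}(V^\lambda),
\end{equation*}
where the direct sum runs over a finite index set because $V$ is finite-dimensional. The right-hand side is a finite direct sum of full matrix algebras, and by Wedderburn's theorem (as cited in Appendix~\ref{apprt}) this is exactly the definition of semisimplicity. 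Hence $\mathcal{A}$ is semisimple.

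Because both ingredients are already in hand, there is no real obstacle; the only thing to be careful about is to check that the hypotheses of Corollary~\ref{corSS} are literally met, in particular that $\mathcal{A}$ contains the identity (given) and that the \emph{finite}-dimensionality of $V$ ensures $\widehat{\mathcal{A}}$ can be taken finite (so that the direct sum above is actually finite and Wedderburn applies). Thus the corollary reduces to a one-sentence proof: by Proposition~\ref{propSA}, $V$ is a completely reducible $\mathcal{A}$-module, and then by Corollary~\ref{corSS} the algebra $\mathcal{A}$ is semisimple.
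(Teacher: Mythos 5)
Your proposal is correct and follows exactly the paper's own two-step argument: Proposition~\ref{propSA} gives complete reducibility of $V$ as an $\mathcal{A}$-module, and Corollary~\ref{corSS} then yields semisimplicity. The extra care you take in checking the hypotheses of Corollary~\ref{corSS} and the finiteness of the index set is consistent with how the paper itself justifies that step inside the proof of Corollary~\ref{corSS}.
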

\begin{proof}
By Proposition~\ref{propSA}, $V$ is a completely reducible $\mathcal{A}$-module. Then Corollary~\ref{corSS} implies that $\mathcal{A}$ is semisimple.
\end{proof}

\begin{cor}  \label{corSSB}
Let $\mathcal{A} \subset \begin{rm}{End}\end{rm}(V)$ be a self-adjoint subalgebra.  Then there exist for every $\lambda  \in  \widehat{\mathcal{A}}$,  a set of intertwining  operators $\{u^\lambda_j\} \subset  U^\lambda$, such that the $\lambda$-isotypic component can be written as an orthogonal direct sum: $V^{[\lambda]} =  \bigoplus_{j} u^\lambda_j(V^\lambda)$. 
\end{cor}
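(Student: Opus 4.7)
The plan is to reduce the problem to a direct application of Proposition~\ref{propSA}, applied not to $V$ but to the isotypic component $V^{[\lambda]}$ equipped with the restricted inner product and the restricted $\mathcal{A}$-action. Since $V^{[\lambda]}$ is by definition the sum of all irreducible $\mathcal{A}$-submodules of class $\lambda$, it is $\mathcal{A}$-invariant, so each $A \in \mathcal{A}$ restricts to an operator $A|_{V^{[\lambda]}} \in \begin{rm}{End}\end{rm}(V^{[\lambda]})$, and the collection of such restrictions forms a subalgebra $\mathcal{A}_\lambda \subset \begin{rm}{End}\end{rm}(V^{[\lambda]})$.

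First I would verify that $\mathcal{A}_\lambda$ is self-adjoint with respect to the restricted inner product on $V^{[\lambda]}$. This is immediate: for $A \in \mathcal{A}$ and $v, w \in V^{[\lambda]}$, one has $\langle A v \vert w \rangle = \langle v \vert A^\dag w \rangle$, and because $\mathcal{A} = \mathcal{A}^\dag$ and $V^{[\lambda]}$ is $\mathcal{A}$-invariant, $A^\dag w \in V^{[\lambda]}$, so the adjoint of $A|_{V^{[\lambda]}}$ as an operator on $V^{[\lambda]}$ lies in $\mathcal{A}_\lambda$. Proposition~\ref{propSA} then applies to $\mathcal{A}_\lambda$ on $V^{[\lambda]}$, yielding an orthogonal decomposition $V^{[\lambda]} = \bigoplus_j W_j$ into irreducible $\mathcal{A}$-submodules.

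Next I would argue that each summand $W_j$ must be of class $\lambda$. Indeed, any irreducible $\mathcal{A}$-submodule of $V^{[\lambda]}$ is, a fortiori, an irreducible $\mathcal{A}$-submodule of $V$; if it were of class $\mu \neq \lambda$, it would lie in $V^{[\mu]}$, hence in $V^{[\lambda]} \cap V^{[\mu]} = \{0\}$ by the primary decomposition, a contradiction. Choose, for each $j$, an $\mathcal{A}$-module isomorphism $u^\lambda_j : V^\lambda \longrightarrow W_j$ (which exists since $W_j \sim \lambda$). By definition $u^\lambda_j \in \begin{rm}{Hom}_{\mathcal{A}}\end{rm}(V^\lambda, V) = U^\lambda$, and $W_j = u^\lambda_j(V^\lambda)$. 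Substituting gives the desired orthogonal decomposition
\begin{equation*}
V^{[\lambda]} = \bigoplus_j u^\lambda_j(V^\lambda).
\end{equation*}

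There is no real obstacle here; the whole argument is essentially bookkeeping, with the one point requiring care being the self-adjointness of the restricted algebra $\mathcal{A}_\lambda$, which is what justifies invoking Proposition~\ref{propSA} on $V^{[\lambda]}$ rather than merely on $V$. The second careful point is the verification that the orthogonal irreducible summands produced by Proposition~\ref{propSA} on $V^{[\lambda]}$ cannot accidentally belong to a different class $\mu$, which is handled by the disjointness of distinct primary components.
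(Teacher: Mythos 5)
Your proof is correct and follows essentially the same route as the paper: both rest on Proposition~\ref{propSA} to produce an orthogonal decomposition into irreducibles and then select intertwiners $u^\lambda_j$ onto the class-$\lambda$ summands. The only difference is that you make explicit the restriction of the self-adjoint algebra to $V^{[\lambda]}$ and the disjointness of distinct primary components, steps the paper leaves implicit in its terse ``again by Proposition~\ref{propSA}'' remark.
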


 \begin{proof}
By Proposition~\ref{propSA}, $V$ is a completely reducible $\mathcal{A}$-module. Thus by~\eqref{Ulambdaisom}, $S_\lambda$ in~\eqref{Slambda} is an  $\mathcal{A}$-module  isomorphism:
\begin{align*} 
S_\lambda: U^{\lambda} \otimes V^\lambda  \longrightarrow  V^{[\lambda]}
\end{align*}
Again by  Proposition~\ref{propSA}, the $\lambda$-isotypic component,
\begin{align*} 
V^{[\lambda]} = \bigoplus_{W_j \subset V :  W_j\sim \lambda }W_j 
\end{align*}
is an orthogonal direct sum.  Choose $u_j$ such that $u_j(V^\lambda) = W_j$.
\end{proof}

\section{Structural Reversibility }
  
We prove the  Structural Reversibility  (due to Arrighi, Nesme, and Werner~\cite{anw:odqca}), in our context,  to make the present paper self-contained. 
\begin{thm}[Structural Reversibility ] \label{strucrev}
Let $M : \mathcal{H}_{\mathcal{C}} \longrightarrow \mathcal{H}_{\mathcal{C}}$ be a unitary  operator  and $\mathcal{N}$ a neighborhood. Let $\mathcal{V} = \{-k | k\in \mathcal{N}\}$. Then the following are equivalent:
\begin{enumerate}[label=(\roman{*})] 
\item \label{strv1} $M$ is causal relative to the  neighborhood $\mathcal{N}$. 
\item \label{strv2}  For every  operator $A_x$ local upon cell $x$, $M^\dag A_x M$ is local upon  $\mathcal{N}_x$.
\item \label{strv3} $M^\dag$ is causal relative to the  neighborhood $\mathcal{V}$. 
\item \label{strv4}  For every  operator $A_x$ local upon cell $x$, $M A_x M^\dag$ is local upon  ${\mathcal{V}}_x$.
\end{enumerate}   
\end{thm}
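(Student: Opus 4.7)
The plan is to prove the cycle (i) $\Leftrightarrow$ (ii) $\Leftrightarrow$ (iv) $\Leftrightarrow$ (iii), using Heisenberg--Schrödinger duality for the two outer equivalences and the symmetry $x\in\mathcal{N}_y \iff y\in\mathcal{V}_x$ for the middle one. First I would establish a dual characterization of locality: an operator $B \in B(\mathcal{H}_{\mathcal{C}})$ is local upon a finite set $D$ if and only if $\mathrm{tr}(B\rho) = \mathrm{tr}(B\rho')$ for every pair of density operators with $\rho|_D = \rho'|_D$. The forward direction follows immediately from $B = E\otimes\mathbb{I}_{\overline{D}}$ together with $\mathrm{tr}((E\otimes\mathbb{I}_{\overline{D}})\rho)=\mathrm{tr}(E\cdot\rho|_D)$. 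For the converse, I would use the operator decomposition~\eqref{opbrk} to write $B = \sum_l e_l \otimes E_l$ with $\{e_l\}$ linearly independent, then compare traces against product states $\rho = \sigma_D \otimes \omega$ as $\omega$ varies through rank-one density operators on $\mathcal{H}_{\mathcal{C}_{\overline{D}}}$; the resulting constraints force each $E_l$ to have constant diagonal in every basis, and the Density Theorem (Theorem~\ref{Zdense}) applied to the co-$D$ space together with Schur's lemma pins $E_l = c_l\,\mathbb{I}_{\overline{D}}$.

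For (i) $\Leftrightarrow$ (ii) I would use the trace pairing: for any $A_x$ local upon $x$ and any density operators $\rho,\rho'$,
\begin{equation*}
\mathrm{tr}(A_x \cdot M\rho M^\dag) - \mathrm{tr}(A_x \cdot M\rho' M^\dag) = \mathrm{tr}\bigl((M^\dag A_x M)(\rho - \rho')\bigr).
\end{equation*}
Causality (i) says the left-hand side vanishes for every choice of $A_x$ whenever $\rho|_{\mathcal{N}_x}=\rho'|_{\mathcal{N}_x}$ (using the preliminary characterization applied to $D=\{x\}$ on the image states). This is equivalent, by the same characterization applied instead to $D=\mathcal{N}_x$ and the operator $M^\dag A_x M$, to statement (ii). Replacing $M$ by $M^\dag$ and $\mathcal{N}$ by $\mathcal{V}$ throughout (both remain admissible: $M^\dag$ is unitary and $\mathcal{V}$ is a finite neighborhood) gives the equivalence (iii) $\Leftrightarrow$ (iv) by an identical argument.

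For (ii) $\Leftrightarrow$ (iv) I would exploit the combinatorial symmetry $x\in\mathcal{N}_y \iff x-y\in\mathcal{N} \iff y-x\in\mathcal{V} \iff y\in\mathcal{V}_x$. Assuming (ii), fix $A_x\in\mathcal{A}_x$ and pick any $y\notin\mathcal{V}_x$, equivalently $x\notin\mathcal{N}_y$. For every $B_y\in\mathcal{A}_y$, (ii) gives $M^\dag B_y M \in \mathcal{A}_{\mathcal{N}_y}$, which commutes with $A_x$; conjugating by $M$ yields $[MA_xM^\dag,\,B_y]=0$. Thus $MA_xM^\dag$ commutes with every $B_y \in \mathcal{A}_y$ for every $y\notin\mathcal{V}_x$. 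Taking the union of these $\mathcal{A}_y$'s over an ascending chain of finite co-$\mathcal{V}_x$ subsets gives a subalgebra whose weak closure, by the Density Theorem (Theorem~\ref{Zdense}) applied to the co-$\mathcal{V}_x$ space, is all of $\mathbb{I}_{\mathcal{V}_x}\otimes B(\mathcal{H}_{\mathcal{C}_{\overline{\mathcal{V}_x}}})$. The von Neumann double commutant theorem then identifies the commutant with $\mathcal{A}_{\mathcal{V}_x}$, so $MA_xM^\dag\in\mathcal{A}_{\mathcal{V}_x}$, which is (iv). The converse is symmetric, using $M^\dag$ in place of $M$.

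The main obstacle is the passage from ``commutes with every local operator outside $D$'' to ``is local upon $D$''. In finite dimensions this is routine (the commutant of $\mathbb{I}_D\otimes\mathrm{End}(V_{\overline{D}})$ equals $\mathrm{End}(V_D)\otimes\mathbb{I}_{\overline{D}}$), but in the separable setting of $\mathcal{H}_{\mathcal{C}}$ one must replace the ascending chain of finite-dimensional subalgebras by its weak closure before the commutant argument applies. This is precisely the role of Theorem~\ref{Zdense}: without the density of $\mathcal{Z}$ in $B(\mathcal{H}_{\mathcal{C}})$ under the weak and strong operator topologies, the bridge between the algebraic formulation of causality (items (ii) and (iv)) and its physical formulation (items (i) and (iii)) cannot be crossed, and this topological step is the only nontrivial ingredient of the proof.
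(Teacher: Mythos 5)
Your proposal is correct and follows essentially the same route as the paper's proof: the trace-duality characterization of locality (the paper's Lemma~\ref{lemmatr}), the pairing identity $\mathrm{tr}(M\rho M^\dag A)=\mathrm{tr}(\rho M^\dag A M)$ for (i)$\Leftrightarrow$(ii) and (iii)$\Leftrightarrow$(iv), and the commutation-plus-irreducibility argument on the co-$\mathcal{V}_x$ space for (ii)$\Leftrightarrow$(iv). The only cosmetic differences are that you prove the converse of the locality lemma directly (the paper argues by contrapositive, exhibiting explicit $\rho,\rho'$) and that you phrase the final step in commutant/double-commutant language rather than isolating each $E_l$ in the decomposition $\sum_l e_l\otimes E_l$ via the nondegenerate trace form, which amounts to the same Schur-type computation.
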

 
Before proving the theorem, we prove the following lemma:

\begin{lemma} \label{lemmatr}
Denote the set of density operators on  $\mathcal{H}_{\mathcal{C}}$ by $\Omega$.  For  $A \in B(\mathcal{H}_{\mathcal{C}})$, define the map:
\begin{align*}
 f_A : \Omega \longrightarrow \mathbb{C} \\
 f_A(\rho) = \begin{rm}{tr}\end{rm}(A \rho)
 \end{align*}
 where $\begin{rm}{tr}\end{rm}$ is the trace operator. Let $D \in \mathbb{Z}^n$ be a finite subset. Then:
 \begin{enumerate}[label=(\roman{*})] 

\item \label{lemmatr1} $A$ is local upon $D$ if and only if for every pair $\rho, \rho' \in \Omega$ satisfying  $\rho_{|_D} = {\rho'}_{|_D}$, $f_A(\rho) = f_A(\rho')$. 
\item \label{lemmatr2} $\rho, \rho' \in \Omega$ satisfy $\rho_{|_D} = {\rho'}_{|_D}$, if and only if for every $A$ local upon $D$, $f_A(\rho) = f_A(\rho')$.

\end{enumerate}
\end{lemma}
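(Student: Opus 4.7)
The plan is to prove each biconditional by combining the finite-sum decomposition $A = \sum_l e_l \otimes E_l$ from~\eqref{opbrk} with product density operators of the form $\rho_D \otimes \rho_{\overline{D}}$, reducing the infinite-dimensional statement to a finite-dimensional trace-pairing argument on the $D$-factor together with a standard characterization of scalar operators via expectation values on the co-$D$ space.

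For the ``if'' direction of part~\ref{lemmatr1}, I will simply observe that if $A = a \otimes \mathbb{I}_{\overline{D}}$ then $f_A(\rho) = \text{tr}(a\cdot \rho|_D)$ by the definition of the partial trace, which depends on $\rho$ only through $\rho|_D$. The substantive direction is the ``only if'': assuming $f_A(\rho) = f_A(\rho')$ whenever $\rho|_D = \rho'|_D$, I will write $A = \sum_l e_l \otimes E_l$ with the $D$-factors $\{e_l\}$ linearly independent, apply the hypothesis to product states $\rho = \rho_D \otimes \rho_{\overline{D}}$ and $\rho' = \rho_D \otimes \rho'_{\overline{D}}$, and vary $\rho_D$ over a spanning family of pure states on the finite-dimensional $D$-factor. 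Linear independence of $\{e_l\}$ together with non-degeneracy of the trace pairing on $\bigotimes_{j\in D}\text{End}(W)$ will let me peel off each $l$-summand and conclude that $\text{tr}_{\overline{D}}(E_l\sigma)$ is independent of the density operator $\sigma$ on $\mathcal{H}_{\mathcal{C}_{\overline{D}}}$. Specializing $\sigma = \vert\psi\rangle\langle\psi\vert$ to rank-one projections and using polarization will force each $E_l$ to be a scalar multiple $c_l\mathbb{I}_{\overline{D}}$, yielding the desired factorization $A = \bigl(\sum_l c_l e_l\bigr)\otimes \mathbb{I}_{\overline{D}}$.

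Part~\ref{lemmatr2} then drops out. Its forward direction is the forward direction of part~\ref{lemmatr1} applied for every $A$ local upon $D$. For the reverse direction, I will choose $A = a\otimes \mathbb{I}_{\overline{D}}$ with $a$ ranging over $\bigotimes_{j\in D}\text{End}(W)$; the hypothesis gives $\text{tr}(a\,\rho|_D) = \text{tr}(a\,\rho'|_D)$ for every such $a$, and non-degeneracy of the trace pairing on this finite-dimensional matrix algebra immediately yields $\rho|_D = \rho'|_D$.

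The main obstacle will be the ``only if'' direction of part~\ref{lemmatr1}, where a bounded operator $A$ on an infinite-dimensional Hilbert space must be shown to factor as $a\otimes\mathbb{I}_{\overline{D}}$ from data only about traces against trace-class states. The two ingredients that make the argument work are (a) that~\eqref{opbrk} is a \emph{finite} sum with linearly independent $D$-factors, allowing the $E_l$-coefficients to be isolated one at a time by varying $\rho_D$, and (b) the standard fact that a bounded operator $B$ with $\langle\psi\vert B\vert\psi\rangle$ constant over unit vectors must be a scalar multiple of the identity. Boundedness of $A$ together with the trace-class property of $\rho,\rho'$ ensures that every trace in sight is finite and that the splitting $\text{tr}((e_l\otimes E_l)(\rho_D\otimes\rho_{\overline{D}})) = \text{tr}_D(e_l\rho_D)\,\text{tr}_{\overline{D}}(E_l\rho_{\overline{D}})$ is legitimate.
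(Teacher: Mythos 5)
Your proposal is correct, and for the substantive direction of part (i) it takes a noticeably different route from the paper. The paper argues by contraposition: it expands $A=\sum_{(l,m)}\vert v_l\rangle\langle v_m\vert\otimes A_{l,m}$ in a matrix-unit basis of $\mathrm{End}(\bigotimes_{k\in D}W)$, assumes $A$ is not local upon $D$, and then splits into two cases (some diagonal block $A_{l,l}$ non-scalar, or all diagonal blocks scalar but some off-diagonal block non-scalar), in each case exhibiting an explicit pair $\rho,\rho'$ with equal restrictions but $f_A(\rho)\neq f_A(\rho')$. You instead argue directly: starting from the general finite decomposition \eqref{opbrk} with linearly independent $\{e_l\}$, you test against product states $\rho_D\otimes\rho_{\overline D}$ versus $\rho_D\otimes\rho'_{\overline D}$, use the fact that density operators span the finite-dimensional algebra $\bigotimes_{j\in D}\mathrm{End}(W)$ together with non-degeneracy of the trace pairing and linear independence of $\{e_l\}$ to isolate each coefficient, and conclude that $\langle\psi\vert E_l\vert\psi\rangle$ is constant over unit vectors, whence $E_l$ is scalar by polarization. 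Both arguments rest on the same two pillars (the finite-sum decomposition and the fact that an operator with constant expectation values on unit vectors is a scalar multiple of the identity), but your version avoids the case analysis entirely and, as a side benefit, only ever uses genuine density operators, whereas the paper's second case tests against an operator $(\vert v_l\rangle\langle v_l\vert+\vert v_l\rangle\langle v_m\vert)\otimes\vert x\rangle\langle x\vert$ that is not positive and so strictly speaking lies outside $\Omega$ (a fixable blemish, but one your route does not incur). The forward direction of (i) and both directions of (ii) are handled essentially as in the paper: the forward implications are the one-line partial-trace computation, and the converse of (ii) is the non-degeneracy of the trace pairing on the finite-dimensional $D$-algebra, which the paper phrases via matrix units and you phrase via arbitrary $a\otimes\mathbb{I}_{\overline D}$.
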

 \begin{proof}
For elements of $B(\mathcal{H}_{\mathcal{C}})$ that are local on some finite  subset $K \subset \mathbb{Z}^n$,   by abuse of notation, we call both $B \in \mathcal{Z}$ ($\mathcal{Z}$ as in~\eqref{Zdef}) and $\iota^{-1}_K(B)$ ($\iota_K$ as in~\eqref{algembd}) by $B$. 

 \ref{lemmatr1} Suppose $A$ is local upon $D$ and $\rho, \rho' \in \Omega$ are such that $\rho_{|_D} = {\rho'}_{|_D}$. Then  $f_A(\rho) = \begin{rm}{tr}\end{rm}(A \rho) = \begin{rm}{tr}\end{rm}(A \rho_{|_D}) = \begin{rm}{tr}\end{rm}(A {\rho'}_{|_D})  = \text{tr}(A \rho') = f_A(\rho')$.

For the converse,  suppose $A$ is not local upon $D$.  We can choose an orthonormal basis $\{\vert v_{j} \rangle\}$ for $ \bigotimes_{k \in  D} W$. Then we can write (special case of ~\eqref{opbrk}):
\begin{equation*}
A = \sum_{(l,m)} \vert v_l \rangle \langle v_m \vert \otimes A_{l,m}
\end{equation*}
where  $\{(l,m)\} \subset \{1,\ldots,d_W\} \times \{1,\ldots,d_W\}$, and $A_{l,m} \in B(\mathcal{H}_{\mathcal{C}_{\overline{D}}})$ are non-zero.  
Since $A$ is not local upon $D$, not all the $A_{l,m}$ are multiples of  the identity, $ \mathbb{I}_{\overline{D}}$, on $\mathcal{H}_{\mathcal{C}_{\overline{D}}}$. There are two cases to consider:

\begin{enumerate} 
 \item There  is an $A_{l,l}$ for some $l$ such that $A_{l,l}$ is not a multiple of $\mathbb{I}_{\overline{D}}$. Then there exist two unit vectors $\vert x \rangle, \vert y \rangle \in \mathcal{H}_{\mathcal{C}_{\overline{D}}}$, such that $  \langle x \vert A_{l,l} \vert x \rangle \neq \langle y \vert A_{l,l} \vert y \rangle$. Let $\rho = \vert v_l \rangle \langle v_l \vert \otimes \vert x \rangle \langle x \vert, \rho' = \vert v_l \rangle \langle v_l \vert \otimes \vert y \rangle \langle y \vert$. Then $\rho_{|_D} = {\rho'}_{|_D}$ whereas $f_A(\rho) \neq f_A(\rho')$.

\item Every  $A_{l,l}$   is  a multiple of  $\mathbb{I}_{\overline{D}}$.  Then choose some $A_{l,m}$  which is not. Then there exist two unit vectors $\vert x \rangle, \vert y \rangle \in \mathcal{H}_{\mathcal{C}_{\overline{D}}}$, such that $  \langle x \vert A_{l,m} \vert x \rangle \neq \langle y \vert A_{l,m} \vert y \rangle$. Let $\rho = (\vert v_l \rangle \langle v_l \vert + \vert v_l \rangle \langle v_m \vert) \otimes \vert x \rangle \langle x \vert, \rho' = (\vert v_l \rangle \langle v_l \vert + \vert v_l \rangle \langle v_m \vert) \otimes \vert y \rangle \langle y \vert$. Then $\rho_{|_D} = {\rho'}_{|_D}$ whereas $f_A(\rho) \neq f_A(\rho')$.
 \end{enumerate}

 \ref{lemmatr2} Suppose $A$ is local upon $D$ and $\rho, \rho' \in \Omega$ are such that $\rho_{|_D} = {\rho'}_{|_D}$. Then   $f_A(\rho) = \begin{rm}{tr}\end{rm}(A \rho) = \begin{rm}{tr}\end{rm}(A \rho_{|_D}) = \begin{rm}{tr}\end{rm}(A {\rho'}_{|_D})  = \text{tr}(A \rho') = f_A(\rho')$.
 
 For the converse, let $\rho, \rho' \in \Omega$,  such that for all $A$ local upon $D$, $f_A(\rho) = f_A({\rho'})$. Consider  $A  = \vert v_l \rangle \langle v_m \vert \otimes \mathbb{I}_{\overline{D}}$, a basis element of operators local upon $D$. Then  $f_A(\rho) =  \langle v_m \vert \rho_{|_D} \vert v_l \rangle$,  $f_A(\rho') =  \langle v_m \vert {\rho'}_{|_D} \vert v_l \rangle$. This implies by   $f_A(\rho) = f_A({\rho'})$,  that $ \langle v_m \vert \rho_{|_D} \vert v_l \rangle =  \langle v_m \vert {\rho'}_{|_D} \vert v_l \rangle$. Since this is true for  all elements of $\{\vert v_l \rangle \langle v_m \vert: 1\leq l,m  \leq d_W\}$, i.e., for a set of basis elements of operators local upon $D$, then by linearity, $\rho_{|_D} =  {\rho'}_{|_D}$.

\end{proof}

\begin{proof}[Proof of Theorem~\ref{strucrev}]
{\ \\}
Note:
\begin{equation} \label{treq}
\text{tr}(M\rho M^\dag A) =  \text{tr}(\rho M^\dag A M) :\:\ \forall \rho \in \Omega
\end{equation}

$~\ref{strv1} \implies ~\ref{strv2}$ Let $M$ be causal. Suppose $\rho$ and $\rho'$ is a pair such that $\rho|_{\mathcal{N}_x} = {\rho'}|_{\mathcal{N}_x}$, and  $A$ is local upon $x$. Causality implies $\text{tr}(M\rho M^\dag A) =  \text{tr}(M {\rho'} M^\dag A)$. By~\eqref{treq}, $\text{tr}(\rho M^\dag A M) =  \text{tr}({\rho'} M^\dag A M)$, so by  Lemma~\ref{lemmatr}~\ref{lemmatr1},  $M^\dag A M$ is local upon $\mathcal{N}_x$.

 $~\ref{strv2} \implies ~\ref{strv1}$ Suppose   $A$ is local upon $x$, $\rho$ and $\rho'$ be a pair of density operators such that $\rho|_{\mathcal{N}_x} = {\rho'}|_{\mathcal{N}_x}$, and  $M^\dag A M$ is local upon $\mathcal{N}_x$. Then $\text{tr}(\rho M^\dag A M) =  \text{tr}( {\rho'} M^\dag A M)$. By~\eqref{treq}, $\text{tr}(M\rho M^\dag A) =  \text{tr}(M{\rho'} M^\dag A)$, so by  Lemma~\ref{lemmatr}~\ref{lemmatr2},  $M \rho M^\dag|_x = M {\rho'} M^\dag|_x$.

$~\ref{strv2} \implies ~\ref{strv4}$ Suppose   $A_x$ is local upon $x$.  Let $A_y$ be an operator local upon $y \notin {\mathcal{V}}_x$. Then $M^\dag A_y M$ is local upon $\mathcal{N}_y$. But $x \notin \mathcal{N}_y$, so $M^\dag A_y M$ commutes with $A_x$. But then $A_y = M M^\dag A_y M M^\dag$ commutes with $M A_x M^\dag$, by the definition of conjugation. This applies to every $A_y \in \mathcal{A}_y$, hence $M A_x M^\dag$ commutes with $\mathcal{A}_y$. We can express $M A_x M^\dag$ as a finite sum (as in~\eqref{opbrk}):

\begin{equation} \label{Mxdec}
M A_x M^\dag = \sum_{l}  e_l  \otimes E_l
\end{equation}
where $\{e_l\} \subset \text{End}(\bigotimes_{k \in \mathcal{V}_x} W)$ ia a linearly independent set, and  $E_l \in B(\mathcal{H}_{\mathcal{C}_{\overline{\mathcal{V}_x}}})$ (the space of bounded linear operators on the co-$\mathcal{V}_x$ space, $\mathcal{H}_{\mathcal{C}_{\overline{\mathcal{V}_x}}}$,  defined in Definition~\ref{coDdef}, with $D = \mathcal{V}_x$).

For any $B \in \text{End}(\bigotimes_{k \in \mathcal{V}_x} W)$, $B \otimes \mathbb{I}_{\overline{\mathcal{V}_x}}$ (where $\mathbb{I}_{\overline{\mathcal{V}_x}}$ is the identity on $\mathcal{H}_{\mathcal{C}_{\overline{\mathcal{V}_x}}}$) also commutes with $A_y$, This implies $(B \otimes \mathbb{I}_{\overline{\mathcal{V}_x}})  (M A_x M^\dag)$  commutes with $A_y$. Using the standard  commutator notation $[.,.]$ ([X,Y]=XY-YX), we write:
\begin{equation*}
[A_y,(B \otimes \mathbb{I}_{\overline{\mathcal{V}_x}})  (M A_x M^\dag)] = \sum_{l} (B e_l)  \otimes [A_y,E_l] = 0
\end{equation*}
We can  trace out the finite dimensional part. Let the operator thus obtained on $\mathcal{H}_{\mathcal{C}_{\overline{\mathcal{V}_x}}}$ be $\zeta(B)$:
\begin{align*}
\zeta(B) =  \sum_{l}   \text{tr }(B e_l)   [A_y,E_l] =   \sum_{l}  \text{tr } (B e_l) [A_y,E_l] = 0
\end{align*}
Since $(X,Y) = \text{tr}(XY)$ is a nondegenerate  bilinear form on $ \text{End}(\bigotimes_{k \in \mathcal{V}_x} W)$, and $\{e_l\}$ is a  linearly independent  set, we can choose a  set $\{B_l\} \subset  \text{End}(\bigotimes_{k \in \mathcal{V}_x} W)$ (indexed by the same set $\{l\}$ that indexes $\{e_l\}$) such that $ \text{tr }(B_k e_l)  = \delta_{k,l}$.   Then:
\begin{equation} \label{commA}
\zeta(B_k) =   [A_y,E_k] = 0
\end{equation}

The subalgebra of $B(\mathcal{H}_{\mathcal{C}_{\overline{\mathcal{V}_x}}})$ generated by $\mathcal{A}_y$, $y \notin \mathcal{V}_x$ acts irreducibly on the space $\mathcal{H}_{\mathcal{C}_{\overline{\mathcal{V}_x}}}$ (by the argument in the proof of Theorem~\ref{Zdense}, i.e., it acts transitively on a dense subspace, $\text{span}({\mathcal{C}}_{\overline{\mathcal{V}_x}})$, of $\mathcal{H}_{\mathcal{C}_{\overline{\mathcal{V}_x}}}$).  By~\eqref{commA}, $E_k$ commutes with this subalgebra. Schur's Lemma (Lemma~\ref{srlem})   then implies that  $E_k = \eta_k \mathbb{I}_{\overline{\mathcal{V}_x}}$ for some $\eta_k \in \mathbb{C}$. This is true for every $k \in \{l\}$.  By~\eqref{Mxdec}:
\begin{equation*} 
M A_x M^\dag = (\sum_{l}  \eta_l e_l ) \otimes \mathbb{I}_{\overline{\mathcal{V}_x}}  = a_{\mathcal{V}_x} \otimes \mathbb{I}_{\overline{\mathcal{V}_x}}
\end{equation*}
for some $a_{\mathcal{V}_x} \in \text{End}(\bigotimes_{k \in \mathcal{V}_x} W)$.  Hence $M A_x M^\dag$ is local upon ${\mathcal{V}}_x$.

$~\ref{strv4} \implies ~\ref{strv2}$ Same argument as for the  proof of  $~\ref{strv2} \implies~\ref{strv4}$, except  interchanging the roles of  $\mathcal{N}_x$  and $\mathcal{V}_x$, $M$ and $M^\dag$.

$~\ref{strv3} \iff ~\ref{strv4}$ Same arguments as for the   proof of  $~\ref{strv1} \iff ~\ref{strv2}$, except  replacing  $\mathcal{N}$ with $\mathcal{V}$, $\mathcal{N}_x$ with $\mathcal{V}_x$, and $M$ with $M^\dag$.
\end{proof}

\section{Proof of Lemma~\ref{Fdefm} requiring quiescent symbol $\vert  \hat{q}_0 \rangle$ to be  an invariant of $F$} \label{ApplF}

\begin{lemmanonum}[Lemma~\ref{Fdefm}]
$\hat F$  is defined on  $\hat{\mathcal{C}}$ as a map $\hat F: \hat{\mathcal{C}}  \longrightarrow {\mathcal{H}}_{\hat{\mathcal{C}}}$, and   can be extended to  ${\mathcal{H}}_{\hat{\mathcal{C}}}$ as unitary  and  translation-invariant operator, if and only if   $F$ has  $\vert \hat{q}_0 \rangle$ as an invariant (an eigenvector with eigenvalue one): $F \vert \hat{q}_0 \rangle =  \vert \hat{q}_0 \rangle$.
 \end{lemmanonum}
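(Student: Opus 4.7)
The plan is to prove the biconditional by analyzing the action of a putative $\hat F$ on the purely quiescent configuration $\vert \hat{q}_0^{\infty}\rangle := \bigotimes_{x \in \mathbb{Z}^n} \vert \hat{q}_0 \rangle \in \hat{\mathcal{C}}$, which plays the role of a reference vector in the Guichardet construction underlying ${\mathcal{H}}_{\hat{\mathcal{C}}}$. Since $\hat{\mathcal{C}}$ is an orthonormal basis of ${\mathcal{H}}_{\hat{\mathcal{C}}}$, membership of $\hat F \vert \hat{q}_0^{\infty} \rangle$ in ${\mathcal{H}}_{\hat{\mathcal{C}}}$ is controlled by its coefficients $\langle \hat c \vert \hat F \hat{q}_0^{\infty}\rangle$ for $\vert \hat c \rangle \in \hat{\mathcal{C}}$, and these in turn are governed by $\eta := \langle \hat{q}_0 \vert F \hat{q}_0 \rangle$.

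For the ``if'' direction I would argue directly. Assume $F\vert\hat{q}_0\rangle = \vert \hat{q}_0 \rangle$. For any $\vert \hat c \rangle = \bigotimes_x \vert \hat c_x\rangle \in \hat{\mathcal{C}}$ with support $D \subset \mathbb{Z}^n$ finite, the pointwise action gives $\hat F\vert \hat c\rangle = \bigl(\bigotimes_{x\in D} F\vert\hat c_x\rangle\bigr) \otimes \bigl(\bigotimes_{x\notin D}\vert\hat{q}_0\rangle\bigr)$; expanding each $F\vert\hat c_x\rangle$ in the basis $B_{\mathcal{N}}$ exhibits this as a finite linear combination of elements of $\hat{\mathcal{C}}$, hence as an element of $\mathrm{span}(\hat{\mathcal{C}}) \subset {\mathcal{H}}_{\hat{\mathcal{C}}}$. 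Unitarity of $F$ on $\hat W$ gives $\langle \hat F \hat c \vert \hat F \hat c'\rangle = \prod_x \langle F\hat c_x \vert F\hat c_x'\rangle = \langle \hat c\vert\hat c'\rangle$, so $\hat F$ is inner-product preserving on $\mathrm{span}(\hat{\mathcal{C}})$; the B.L.T.\ Theorem (used as in Definition~\ref{qcadef}) extends it to a bounded isometry of ${\mathcal{H}}_{\hat{\mathcal{C}}}$. Since $F^{-1}$ also fixes $\vert\hat{q}_0\rangle$, the same construction produces $\widehat{F^{-1}}$, which is immediately seen to be a two-sided inverse on the dense subspace $\mathrm{span}(\hat{\mathcal{C}})$, so $\hat F$ is unitary; translation invariance is clear since $F$ is applied identically on every cell.

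For the ``only if'' direction, I would assume $\hat F$ is well defined on $\hat{\mathcal{C}}$ with range in ${\mathcal{H}}_{\hat{\mathcal{C}}}$ and extends to a unitary operator, and then show $\eta = 1$, which by Cauchy--Schwarz equality forces $F\vert\hat{q}_0\rangle = \vert\hat{q}_0\rangle$. For any $\vert\hat c\rangle \in \hat{\mathcal{C}}$ with finite support $D$, the formal computation
\begin{equation*}
\langle \hat c \vert \hat F \hat{q}_0^{\infty}\rangle = \Bigl(\prod_{x\in D}\langle \hat c_x \vert F\hat{q}_0\rangle\Bigr)\cdot \prod_{x\in\mathbb{Z}^n\setminus D}\eta
\end{equation*}
must yield a well-defined $\ell^2$ sequence indexed by $\hat{\mathcal{C}}$. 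I split into cases on $|\eta| \leq 1$. If $|\eta|<1$, the infinite tail $\prod_{x\notin D}\eta$ vanishes for every finite $D$, so $\hat F\vert\hat{q}_0^{\infty}\rangle$ would be orthogonal to every basis vector in $\hat{\mathcal{C}}$ and hence be zero, contradicting isometry of $\hat F$. If $|\eta|=1$, the Cauchy--Schwarz equality condition forces $F\vert\hat{q}_0\rangle = \eta\vert\hat{q}_0\rangle$, so $\hat F\vert\hat{q}_0^{\infty}\rangle$ is formally $\bigotimes_x \eta\vert\hat{q}_0\rangle$ and the inner product $\langle \hat{q}_0^{\infty}\vert\hat F\hat{q}_0^{\infty}\rangle$ becomes the product $\prod_{x\in\mathbb{Z}^n}\eta$, which has no limit unless $\eta=1$. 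The expected main obstacle is precisely this $|\eta|=1$, $\eta\ne 1$ subcase, where one must distinguish ``zero overlap with every basis vector'' from ``ill-defined infinite product'' and make the latter rigorous within the $\ell^2$-completion setup of Definition~\ref{Chathofc} rather than via abstract equivalence of infinite tensor products; I would handle this by viewing the image as a net of partial inner products indexed by finite $D \subset \mathbb{Z}^n$ and observing that Cauchy-ness of this net in $\mathbb{C}$ is forced by $\hat F\vert\hat{q}_0^{\infty}\rangle \in {\mathcal{H}}_{\hat{\mathcal{C}}}$, which eliminates $\eta\neq 1$.
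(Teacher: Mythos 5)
Your proof is correct, and the ``if'' direction coincides with the paper's: both observe that $F\vert\hat q_0\rangle=\vert\hat q_0\rangle$ makes $\hat F\vert\hat c\rangle$ a finite linear combination of elements of $\hat{\mathcal{C}}$, check that the map is inner-product preserving and translation-invariant on $\mathrm{span}(\hat{\mathcal{C}})$, and extend; you additionally spell out surjectivity via $\widehat{F^{-1}}$, which the paper leaves implicit. The ``only if'' direction is where you diverge. The paper deduces it in one line from Lemma~\ref{Rinvariant}: unitarity plus translation invariance force $\hat F\bigl(\bigotimes_x\vert\hat q_0\rangle\bigr)=e^{i\Theta_0}\bigotimes_x\vert\hat q_0\rangle$, and it then asserts that this implies $F\vert\hat q_0\rangle=\vert\hat q_0\rangle$. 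Your argument instead analyzes $\eta=\langle\hat q_0\vert F\hat q_0\rangle$ directly: the case $\vert\eta\vert<1$ kills all basis coefficients and contradicts isometry, and the case $\vert\eta\vert=1$, $\eta\neq1$ yields a non-convergent infinite product of phases, so the formal image of the all-quiescent configuration is not a well-defined element of ${\mathcal{H}}_{\hat{\mathcal{C}}}$. This is precisely the content hidden inside the paper's ``this implies'' (ruling out $F\vert\hat q_0\rangle=e^{i\theta}\vert\hat q_0\rangle$ with $\theta\neq0$ and ruling out a component orthogonal to $\vert\hat q_0\rangle$), so your route is more self-contained and slightly stronger: it uses only well-definedness and isometry of $\hat F$, not translation invariance, whereas the paper's route buys brevity by leaning on the already-established fixed-point Lemma~\ref{Rinvariant}. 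Your closing remark about making the $\vert\eta\vert=1$, $\eta\neq1$ subcase rigorous via the net of partial products is exactly the right precaution within the $\ell^2$-completion framework of Definition~\ref{Chathofc}.
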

 \begin{proof}[Proof of Lemma~\ref{Fdefm}]

Let us assume that $\hat F$  is defined on $\hat{\mathcal{C}}$ as a map $\hat F: \hat{\mathcal{C}}  \longrightarrow {\mathcal{H}}_{\hat{\mathcal{C}}}$, and   can be extended to  ${\mathcal{H}}_{\hat{\mathcal{C}}}$ as   a unitary  and  translation-invariant  operator.  Then Lemma~\ref{Rinvariant} implies:
\begin{equation*}
\hat{F}  \big( \bigotimes_{x \in \mathbb{Z}^n} \vert \hat{q}_0 \rangle \big) = \bigotimes_{x \in \mathbb{Z}^n} F \vert  \hat{q}_0 \rangle =  e^{i\Theta_0} \bigotimes_{x \in \mathbb{Z}^n} \vert \hat{q}_0 \rangle
\end{equation*}
for some $\Theta_0 \in \mathbb{R}$. This implies,  $F \vert \hat{q}_0 \rangle =  \vert \hat{q}_0 \rangle$.

For the   converse, assume $F \vert \hat{q}_0 \rangle =  \vert \hat{q}_0 \rangle$. Then for every element $\vert \hat{c} \rangle = \bigotimes_{x \in \mathbb{Z}^n}  \vert {\hat{c}}_x \rangle \in \hat{\mathcal{C}}$, under the formal definition of $\hat F$ in~\eqref{Fdefn}, $\hat{F}  \vert \hat{c} \rangle = \bigotimes_{x \in \mathbb{Z}^n}  F\vert {\hat{c}}_x \rangle \in {\mathcal{H}}_{\hat{\mathcal{C}}}$. Furthermore, as a map $\hat F: \hat{\mathcal{C}}  \longrightarrow {\mathcal{H}}_{\hat{\mathcal{C}}}$,  $\hat F$ preserves the inner product (since $F$ is a unitary operator on $\hat W = \bigotimes_{z \in \mathcal{N}} V_z$), and is  translation-invariant, by the same definition.  Then $\hat{F}$ can be extended to a unitary and  translation-invariant operator on ${\mathcal{H}}_{\hat{\mathcal{C}}}$.
\end{proof}

\end{document}